\documentclass[twocolumn,floatfix,aps,pra,superscriptaddress]{revtex4-2}
\usepackage[T1]{fontenc}
\usepackage{newtxtext}
\usepackage[dvipsnames]{xcolor}
\usepackage{graphicx}
\usepackage{float}
\usepackage{dcolumn}
\usepackage{bm}
\usepackage{amsfonts,amssymb,amscd,amsmath,amsthm}
\usepackage{mathrsfs}
\usepackage[ruled,linesnumbered]{algorithm2e}
\usepackage{algpseudocode}
\usepackage{enumerate}
\usepackage{epsfig}
\usepackage{subfigure}
\usepackage{physics}
\usepackage{makecell}
\usepackage{bbm}
\usepackage{svg}
\usepackage[colorlinks = true]{hyperref}
\hypersetup{colorlinks=true, linkcolor=blue, citecolor=blue, urlcolor=blue}
\usepackage{amsmath, amsthm, amsfonts, amssymb, amsbsy, mathtools, xcolor, bm, bbm, graphicx, changepage, cleveref}

\usepackage[toc,page,header]{appendix}
\usepackage{minitoc}

\usepackage{physics}
\usepackage{epstopdf}
\usepackage{framed}
\usepackage{multirow}
\usepackage{color}
\usepackage{longtable}
\usepackage{comment}
\usepackage{qcircuit}
\usepackage{faktor}
\usepackage[normalem]{ulem}
\usepackage[autostyle]{csquotes}

\usepackage{stmaryrd}
\usepackage{tabularx}
\usepackage{booktabs}
\usepackage{adjustbox}

\usepackage{booktabs}    
\usepackage{multirow}    
\usepackage{colortbl}    
\usepackage{bigdelim}

\makeatletter
\newtheorem*{rep@theorem}{\rep@title}
\newcommand{\newreptheorem}[2]{%
\newenvironment{rep#1}[1]{%
 \def\rep@title{#2 \ref{##1}}%
 \begin{rep@theorem}}%
 {\end{rep@theorem}}}
\makeatother

\newtheorem{theorem}{Theorem}
\newtheorem{proposition}{Proposition}

\newtheorem{lemma}{Lemma}

\newtheorem*{assumption*}{Assumption}
\newtheorem{corollary}{Corollary}
\newtheorem{definition}{Definition}

\usepackage{lineno}

\usepackage{interval}
\intervalconfig{soft open fences}

\newcommand{\diag} {\operatorname{diag}}

\newcommand{\supp} {\operatorname{supp}}


	\newcommand{\cA}{{\cal A}}  \newcommand{\cC}{{\cal C}}
	\newcommand{\cD}{{\cal D}}  
	\newcommand{\cG}{{\cal G}} \newcommand{\cH}{{\cal H}} 
	 \newcommand{\cK}{{\cal K}} 
	\newcommand{\cM}{{\cal M}}  
	  
	\newcommand{\cS}{{\cal S}} \newcommand{\cT}{{\cal T}} 
	\newcommand{\cV}{{\cal V}} \newcommand{\cW}{{\cal W}}

    \newcommand{\dist}{\mathrm{dist}}
\newcommand{\Herm}{\mathrm{Herm}}

\begin{document}

\let\oldacl\addcontentsline
\renewcommand{\addcontentsline}[3]{}

\title{Quantum state determinability from local marginals is universally robust}

\author{Wenjun Yu}
\email[]{wenjunyus@gmail.com}
\affiliation{QICI Quantum Information and Computation Initiative, School of Computing and Data Science, The University of Hong Kong, Pokfulam Road, Hong Kong SAR, China}

\author{Fei Shi}
\email{shif26@mail.sysu.edu.cn}
\affiliation{School of Computer Science and Engineering, Sun Yat-sen University, Guangzhou 510006, China}

\author{Giulio Chiribella}
\email{giulio@hku.hk}
\affiliation{QICI Quantum Information and Computation Initiative, School of Computing and Data Science, The University of Hong Kong, Pokfulam Road, Hong Kong SAR, China}

\author{Qi Zhao}
\email{zhaoqi@cs.hku.hk}
\affiliation{QICI Quantum Information and Computation Initiative, School of Computing and Data Science, The University of Hong Kong, Pokfulam Road, Hong Kong SAR, China}

\begin{abstract}
A fundamental problem in quantum physics is to establish whether a multiparticle quantum state can be uniquely determined from its local marginals. 
In theory, this problem has been addressed in the exact case where the marginals are perfectly known. 
In practice, however, experiments only have access to finite statistics and therefore can only determine the marginals of a quantum state up to an error. 
In this Letter, we prove that unique determinability universally survives such local imperfections: specifically, for every uniquely determined state, we show that deviations of local marginals propagate to global states strictly bounded by a power law with exponent $\alpha\in(0,1]$.
This result induces a classification of multipartite quantum states by their power-law exponents, with linear scaling $\alpha=1$ as the most favorable regime.
We derive a necessary and sufficient criterion for linear robustness and translate it into an executable semidefinite-programming certification.
Applying our theory, we prove that stabilizer states are inherently square-root robust and provide a complete robustness classification for the Dicke family.
Finally, we exploit these results to construct a scalable two-local genuine multipartite entanglement witness, demonstrating the viability of this framework for broad practical applications.
\end{abstract}
\maketitle

\emph{Introduction.---}
How much of a global quantum state is encoded in local marginals?
This interplay, formally known as the quantum marginal problem~\cite{coleman1963structure,klyachko2004quantum,klyachko2006quantum,eisert2008gaussian,schilling2015quantum}, traces back to Schrödinger's formulation of entanglement~\cite{schrodinger1935gegenwartige}.
It has long been recognized as a major challenge in quantum chemistry~\cite{national1995mathematical}, and exhibits a close connection to the generalized Pauli principle~\cite{altunbulak2008pauli}.
 Mathematically, determining whether there exists a global quantum state that fits a given set of marginals is hard and generally intractable~\cite{liu2007quantum}.
 However,   in most physical scenarios the marginals are guaranteed to come from an underlying global state, and the question is whether such a global state is uniquely determined among all states (UDA) by its marginals. 
The UDA property is essential for reconstructing or certifying global properties from local measurements, including state tomography~\cite{xin2017quantum,yu2023learning,ona2025determining}, entanglement detection~\cite{walter2013entanglement,chen2014role,shi2025entanglement}, and self-testing~\cite{li2018self,aloy2021quantum}.

Early results indicated that almost all few-qubit pure states are UDA~\cite{linden2002almost}, and that this generic behavior extends to multipartite systems given proportionally scaled size of marginals~\cite{linden2002parts,jones2005parts}. 
Subsequent studies identified state families retaining UDA from much smaller marginals, alongside the non-UDA exceptions~\cite{parashar2009n,walck2009only}.
The UDA property was later connected to quantum many-body physics, showing that the ground states of local Hamiltonians, when unique, are UDA~\cite{chen2012ground,chen2012correlations,swingle2014reconstructing,huber2016characterizing}.
On the other hand, later research revealed that UDA states are not necessarily unique ground states of local Hamiltonians supported on the marginals' positions, thereby motivating a search for more systematic certification~\cite{karuvade2019uniquely}.

\begin{figure}[t]
    \centering
    \includegraphics[width=\linewidth]{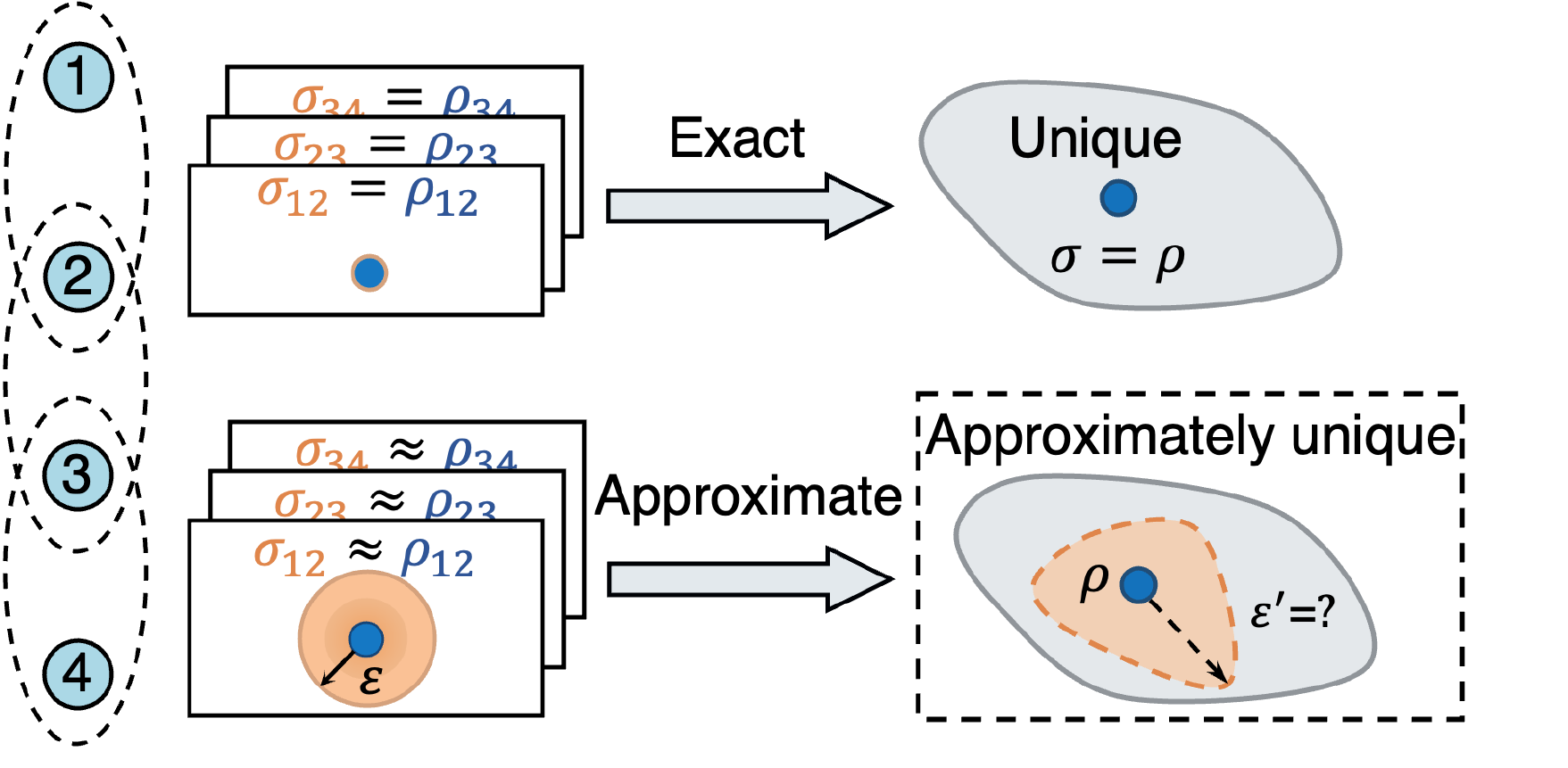}
    \caption{{\bf Robustness of UDA.} Top: for  UDA states, exact equality of the marginals on a subset of systems (here, all pairs of nearest neighboring qubits) implies exact equality of global states.    
    Bottom: for robust UDA states, approximate equality of the marginals implies approximate equality of the global states. 
    Then, the problem becomes to quantify the relation between the deviation of the marginals ($\varepsilon$) and the maximum deviation of the global states ($\varepsilon'$). }
    \label{fig:concept}
\end{figure}

The existing theory of UDA, however, assumes that the marginals are exactly known, whereas in real-world experiments the local marginals are only known up to inevitable errors.
This observation raises a fundamental question: is the UDA property robust under perturbations of the marginals?     In other words, if the marginals of a quantum state are close to the marginals of a UDA state, is the global state close to that UDA state?    
The answer to this question is far from obvious. 
First of all, satisfying approximate constraints need not generally ensure closeness to ideal behavior, as is known, for example, in the related problem of tackling approximate quantum Markov chain states~\cite{ibinson2008robustness,christandl2012entanglement}. 
A further difficulty comes from the lack of a general characterization of the set of UDA states, rendering it challenging to analyze the approximate version of the UDA property.   
Finally, even once the robustness of UDA is established, for practical usage we desire quantitative bounds on how much the global state deviates from a given UDA state for a given size of local deviations, as illustrated in Figure~\ref{fig:concept}.

Here, we prove that the UDA property is universally robust under perturbations of the marginals:  for every UDA state, small deviations $\varepsilon$ in its marginals propagate to global deviations $\varepsilon'$ only with a power law $\varepsilon'  \propto  \varepsilon^{\alpha}$ for some $\alpha\in(0,1]$.
This result establishes a universal power law for UDA robustness that precludes ill-conditioned behavior and induces a classification of UDA states based on their power-law exponents.  The most favorable scaling arises in  UDA states with linear robustness,  for which the errors in the marginals propagate only linearly to the global state. To identify these states, we provide a necessary and sufficient criterion, as well as a certification method based on semidefinite programming.     
Complementing these general results, we analyze representative families of UDA states with different robustness exponents, and we highlight some of their applications. 
In particular, we prove that stabilizer states are at least square-root robust and completely classify the robustness of the Dicke state family. 
Based on the Dicke states' robustness, we further construct a scalable genuine multipartite entanglement witness relying solely on two-local measurements.  
Overall, these results establish the notion of UDA as a viable practical tool for characterizing multipartite entangled states.

\emph{Framework for UDA robustness.---}   Let us start by reviewing the notion of UDA state. Let $\mathcal H$ be the Hilbert space of a composite system made of $n$ subsystems labeled by integers in the set $\{1,\dots, n\}=: [n]$,   and let $\mathcal D(\mathcal H)$ be the set of density matrices (positive semidefinite, trace-1 matrices) on $\mathcal H$. 
Given a density matrix  $\rho\in\cD(\cH)$ and a subsystem $S\subseteq[n]$,  we denote the corresponding marginal (reduced density matrix)  by $\rho_S\coloneqq\Tr_{\bar S}(\rho)$, where $\Tr_{\bar S}$ denotes the partial trace over  $\bar S=[n]\setminus S$.
 For a collection of subsystems $\cS\coloneqq\{S_k\}_{k=1}^M$ ($S_k \subseteq [n]$), the \emph{compatibility set} of the state $\rho$ is defined as:
\begin{equation}
    \cC_\cS(\rho)\coloneqq\big\{\sigma\in\cD(\cH) ~\big|~ \sigma_S= \rho_{S}, \ \forall  S\in \cS \big\}.
\end{equation}
This set comprises all global states sharing the same local marginals as $\rho$ for the subsystems in $\cS$.   If the local marginals uniquely determine the state, namely if $\cC_{\cS}(\rho)=\{\rho\}$, then the state $\rho$ is called UDA with respect to $\cS$.

We now recast the notion of UDA state into an analytical expression that facilitates the transition to approximate UDA.   
Let   $\mathrm{Herm} (\mathcal H)$ be the set of Hermitian operators on $\mathcal H$, and let $\mathcal{V} \coloneqq \{X \in \mathrm{Herm}(\mathcal{H})\mid\Tr (X) = 0\}$ be  the space of traceless Hermitian operators.
 We then define the {\em marginal map} $\mathcal{M}_\mathcal{S}$, which maps every Hermitian operator $X\in\cV$ into the list of its marginals on subsystems in  $\cS$, namely   $\mathcal{M}_\mathcal{S}(X) \coloneqq (X_{S})_{S \in \mathcal{S}}$ for $X\in\cV$.   
The kernel of this map consists of all Hermitian operators with zero marginals on all the systems in  $\cS$. Explicitly, we denote the kernel by $\mathcal{W}_\mathcal{S} \coloneqq   \{  X\in  \cV~|~  \mathcal M_{\mathcal S} (X)    =  \mathbf{0} \}$, where $\mathbf{0}$ is the list consisting of zero matrices for all the subsystems in $\mathcal S$.    
For two states $\rho$ and $\sigma$, the condition that $\rho$ and $\sigma$ have the same marginals can be expressed as $\rho-\sigma  \in  \mathcal W_\mathcal{S}$.    
Then, the condition that $\rho$ is UDA with respect to $\cS$ is equivalent to the condition that there exists no other state $\sigma$ such that $\rho-\sigma$ belongs to $\mathcal W_\mathcal{S}$. 
Defining \emph{state difference set}
\begin{equation}
    \mathcal{D}_0(\rho) \coloneqq \{ \sigma - \rho \mid \sigma \in \mathcal{D}(\mathcal{H}) \} \, ,
\end{equation}  
the UDA condition can  then be  compactly stated  as
\begin{equation}\label{eq:UDA}
    \mathcal{D}_0(\rho) \cap \mathcal{W}_\mathcal{S} = \{\mathbf{0}\}\, .
\end{equation}

Let us now move to the approximate case.  
To quantify the deviations between the local marginals of two quantum states $\rho$ and $\sigma$, we consider the sum of the trace-norm distances of their marginals, namely $\sum_{S\in\mathcal S}  \|   (\sigma-\rho)_S \|_1$, where $\| A \|_1  :=  \Tr \sqrt{A^\dag A}$ is the trace norm of a generic operator $A$. Equivalently, this sum  can be expressed in terms of the marginal map $\cM_\cS$, by setting 
\begin{gather}
    \bigl\|\cM_\cS(X)\bigr\|_{\mathcal S}
    \coloneqq
    \sum_{S  \in  \mathcal S} \|X_{S}\|_1 \, ,
\end{gather}
for a generic Hermitian operator $X\in \cV$.  We call  $\bigl\|\cM_\cS(X)\bigr\|_{\mathcal S}$ the {\em marginal norm} of $X$.    With this notation, the main question of UDA robustness is:  how large can the trace distance $\| \sigma-\rho\|_1$ be under the constraint that the marginal norm   $\bigl\|\cM_\cS(\sigma-\rho)\bigr\|_{\mathcal S}$  is less than $\varepsilon$?

\vspace{1em}

\emph{Universal robustness.---} Our first main result shows that every UDA state obeys a power-law robustness.

\begin{theorem}[Universal robustness of UDA]\label{thm:main}
Let $\rho\in\cD(\cH)$ be a UDA state with respect to the collection of subsystems $\cS$. 
Then there exist constants $C>0$, $0<\alpha\le1$, and $\varepsilon_0>0$ such that every $\sigma \in \mathcal{D}(\mathcal{H})$ with
\begin{gather}\label{eq:margin}
    \varepsilon=\norm{\mathcal{M}_{\mathcal{S}}(\sigma-\rho) }_{\mathcal{S}}
\le
\varepsilon_0,
\end{gather}
satisfies
\begin{gather}
    \norm{\sigma - \rho}_1
\le
C \, \varepsilon^{\alpha}.
\end{gather}
\end{theorem}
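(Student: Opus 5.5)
The plan is to treat this as a Łojasiewicz-type error bound for two semialgebraic functions on a compact semialgebraic set. Let $K\coloneqq \mathcal{D}_0(\rho)$, a compact convex semialgebraic subset of $\cV$. It is compact as a translate of $\cD(\cH)$, and semialgebraic because positive semidefiniteness of $\rho+X$ is equivalent to nonnegativity of all principal minors, which are polynomial in the real coordinates of $X$. Define two continuous functions on $K$:
\begin{gather}
f(X)\coloneqq \|X\|_1, \qquad g(X)\coloneqq \bigl\|\cM_\cS(X)\bigr\|_{\cS}.
\end{gather}
Both are semialgebraic. The trace norm equals the sum of absolute eigenvalues, so its graph is definable by first-order formulas over the reals (e.g.\ via $\|X\|_1=\max\{\Tr(XU): \|U\|_\infty\le1\}$, or via spectral decomposition), and the Tarski--Seidenberg theorem then gives definability. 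The same argument applies to $g$, since it is a finite sum of trace norms of linear images of $X$. The UDA condition \eqref{eq:UDA} says exactly that $g^{-1}(0)\cap K=\{\mathbf 0\}=f^{-1}(0)\cap K$. The theorem is therefore the assertion that $f\le C g^{\alpha}$ near the common zero set.

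Next I would invoke the Łojasiewicz inequality for continuous semialgebraic functions on a compact semialgebraic set (Bochnak--Coste--Roy, Thm.~2.6.6). It states that if $f,g:K\to\mathbb R$ are continuous semialgebraic with $g^{-1}(0)\subseteq f^{-1}(0)$, then there exist $N\in\mathbb N$ and $C>0$ such that $|f|^N\le C|g|$ on $K$. Taking $\alpha=1/N$ gives $\|X\|_1\le C^{1/N} g(X)^{1/N}$ for all $X\in K$. Writing $X=\sigma-\rho$, this is the claim with constant $C^{1/N}$, and it holds even globally on $K$, so any $\varepsilon_0>0$ works.

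For $\alpha\le1$: if the inequality held with some $\alpha>1$, then we could also take $\alpha=1$ after adjusting $C$ for small $\varepsilon$, so we may restrict to $(0,1]$ without loss. Moreover $g(X)\le |\cS|\,\|X\|_1$, since partial trace is trace-norm contractive, so no exponent above $1$ could be meaningful for nonzero $X$ anyway.

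The main obstacle is the rigorous verification that the trace-norm functions are semialgebraic, together with citing the correct form of Łojasiewicz. If a self-contained argument is preferred, I would avoid the trace norm by equivalence of norms: replace $f$ and $g$ by their Hilbert--Schmidt analogues $\|X\|_2^2$ and $\sum_S\|X_S\|_2^2$, which are polynomials. This changes $C$ only by dimension-dependent factors and $\alpha$ by a factor of at most $2$. The classical polynomial Łojasiewicz inequality on the compact semialgebraic set $K$ then applies directly.
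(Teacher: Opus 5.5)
Your proof is correct, and it takes a more direct route than the paper's. The paper splits the argument into two steps:
\begin{enumerate}[(i)]
\item A state-independent linear-algebra step. The quotient map $\widetilde\cM_\cS$ and the bounded inverse theorem give $\dist(X,\cW_\cS)\le C_1\|\cM_\cS(X)\|_\cS$ for all $X\in\cV$.
\item A geometric step. The {\L}ojasiewicz inequality is applied to the pair $\|\delta\|_1$ and $\dist(\delta,\cW_\cS)$ on $\{\delta\in\cD_0(\rho):\|\delta\|_1\le r\}$. A separate compactness lemma then ensures that small marginal deviations force $\|\delta\|_1\le r$. This step also requires showing, via Tarski--Seidenberg, that the distance function to $\cW_\cS$ is semialgebraic.
\end{enumerate}

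You instead apply {\L}ojasiewicz directly to $\|X\|_1$ and $\|\cM_\cS(X)\|_\cS$ on the whole compact set $\cD_0(\rho)$. The UDA condition enters exactly as the zero-set inclusion. This removes the need for the quotient-space/bounded-inverse step and for the local-regime lemma. It also shows the bound holds on all of $\cD_0(\rho)$, so $\varepsilon_0$ is arbitrary; the paper could equally have taken $r=2$.

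Your Hilbert--Schmidt variant goes further and makes semialgebraicity of the trace norm unnecessary, since both functions become polynomials restricted to $\cD_0(\rho)$. You do not even lose a factor in the exponent. From $\|X\|_2^{2N}\le c\sum_S\|X_S\|_2^2\le c\,\varepsilon^2$ you get $\|X\|_1\le\sqrt{d}\,c^{1/(2N)}\varepsilon^{1/N}$. Note that you must use the version of {\L}ojasiewicz on the semialgebraic set $\cD_0(\rho)$, as you state. The version on an open set would only see the zero set $\cW_\cS$ and would not use UDA.

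What the paper's decomposition buys is the intermediate quantity $\dist(\delta,\cW_\cS)$. It separates the state-independent constant $C_1$ from the state-dependent geometry of how $\cD_0(\rho)$ meets $\cW_\cS$. That distance is also exactly what the subsequent tangent-cone criterion for linear robustness controls, through the minimum $\kappa$ of $\dist(\cdot,\cW_\cS)$ on the unit slice of the cone. So the two-step structure is set up for reuse there rather than needed for this theorem.
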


The proof of this theorem is based on two main ingredients. 
First, we show that any traceless Hermitian $X\in \cV$ with a small marginal norm $\norm{\cM_\cS(X)}_\cS$ has a small distance from the subspace $\cW_\cS$. 
This bound naturally applies to any state difference $\delta\in\cD_0(\rho)$. 
Then, using the UDA condition in Eq.~\eqref{eq:UDA} and the geometry of $\cD_0(\rho)$, we get a second relation that the distance of any difference from the subspace $\cW_\cS$ upper bounds the global deviation in a power law.
Combining these bounds then concludes the proof.

To establish the first statement, we consider the distance of an operator $X\in \cV$ from the  subspace $\mathcal W_{\mathcal S}$,
\begin{gather}
    \dist(X,\cW_\cS)\coloneqq\inf_{Y\in\cW_\cS}\norm{X-Y}_1.
\end{gather} 
This distance induces a norm in the quotient space $\mathcal{V}/\mathcal{W}_{\mathcal{S}}$, consisting of equivalence classes of operators with the same marginals (explicitly, the equivalence class of the operator $X$ is $[X] \coloneqq \{X + Y \mid Y \in \mathcal{W}_{\mathcal{S}}\}$).  
We denote this induced norm by $\norm{[X]}_{\mathcal{V}/\mathcal{W}_{\mathcal{S}}}\coloneqq\operatorname{dist}(X,\mathcal{W}_{\mathcal{S}})$.
The marginal map also induces a quotient version $\widetilde{\mathcal{M}}_{\mathcal{S}}([X]) \coloneqq \mathcal{M}_{\mathcal{S}}(X)$, which is a bijective map from $\mathcal{V}/\mathcal{W}_{\mathcal{S}}$ onto the image $\mathrm{Im}(\mathcal{M}_{\mathcal{S}})$.
Because the quotient space is finite-dimensional, this linear bijection and its inverse $\widetilde{\mathcal{M}}_{\mathcal{S}}^{-1}$ have finite operator norms based on the bounded inverse theorem~\cite{kreyszig1978introductory}. 
This leads to the bound:
\begin{align}\label{eq:bounded_inverse}
    \dist(X,\cW_\cS)=&\norm{[X]}_{\cV/\cW_\cS}=\norm{\widetilde{\cM}_\cS^{-1}\circ\widetilde{\mathcal{M}}_{\mathcal{S}}([X])}_{\cV/\cW_\cS}\notag\\
    \leq&\norm{\widetilde{\cM}_\cS^{-1}}_{\mathrm{op}}\cdot\norm{\cM_\cS(X)}_\cS,\ \forall\,X\in\cV. 
\end{align}

For the second statement, we restrict our focus to valid differences $\delta\in\cD_0(\rho)$ from the target state $\rho$.
Because valid quantum states are characterized by unit trace and positive semidefiniteness, $\mathcal{D}_0(\rho)$ is defined by finitely many polynomial inequalities, forming a semialgebraic set~\cite{bochnak1998real}.
Such sets have tame geometry, which prevents their boundaries from approaching any linear subspace at a super-polynomially slow rate~\cite{TaL1995,Solerno1991}, as shown in Figure~\ref{fig:intersection}.
In our case, $\cD_0(\rho)$ intersects the subspace $\cW_\cS$ solely at $\mathbf{0}$.
Consequently, as any valid state difference $\delta$ approaches the origin, its separation from $\mathcal{W}_\mathcal{S}$ is bounded by a finite polynomial rate:
\begin{gather}\label{eq:visible_part}
    \|\delta\|_1 \le K \dist(\delta,\cW_\cS)^{\alpha},\ \forall\,\delta\in\cD_0(\rho)
\end{gather}
for a finite exponent $\alpha\in(0,1]$  and a constant $K$ that both depend on $\rho$.
Since $\cD_0(\rho)\subset\cV$, combining this bound with the linear bound in Eq.~\eqref{eq:bounded_inverse} yields Theorem~\ref{thm:main}.
We defer the complete proof to Appendix~B.

The exponent $\alpha$ characterizes how local deviations in the marginals propagate to deviations of the global quantum state. 
Based on the value of $\alpha$, one can classify  UDA states in terms of their local \emph{$\alpha$-robustness}~\footnote{Note that a locally $\alpha$-robust state is also trivially $\alpha'$-robust for any $\alpha' \le \alpha$. However, it is generally interesting to consider the maximum $\alpha_\star$ for a UDA state.}. 
We omit ``local" when it incurs no ambiguity.
A larger $\alpha$ implies stronger robustness. 
The highest level of robustness corresponds to   $\alpha=1$, a condition named \emph{linear robustness}.  
In the following section, we characterize the conditions for a UDA state to be linearly robust for a given set of subsystems. 

\vspace{1em}

\emph{Linear robustness.---}
While all UDA states exhibit power-law robustness, linear robustness represents the optimal regime for robustly extracting global information from imperfect local data. 
In this section, we establish a necessary and sufficient criterion and develop an applicable certification protocol for this optimal robustness.

For a UDA state $\rho$ with respect to $\cS$, analyzing the specific value of $\alpha$ requires a precise examination about geometry near the intersection $\mathbf{0}$. 
Recall that $\cD_0(\rho)$ is a semialgebraic set and $\cW_\cS$ is a linear subspace.
Their intersection is either transverse or polynomially tangential, as shown in Figure~\ref{fig:intersection}.
The \emph{tangent cone} facilitates a rigorous characterization of these properties:
\begin{equation}
\label{eq:tangent-def}
\cK_{\cD_0(\rho)}(\mathbf{0}) \coloneqq \left\{ X \in \cV \ \middle| \ X = \lim_{k\to\infty} \frac{\delta_k}{t_k},\  t_k \to 0^+ \right\},
\end{equation}
where $\{\delta_k \in \cD_0(\rho)\}_k$ denotes a sequence of differences approaching the origin.
Intuitively, this cone collects all valid \emph{directions} in which a state $\sigma$ can deviate from $\rho$.
If no non-trivial direction belongs to $\cW_\cS$, i.e., $\mathcal{K}_{\mathcal{D}_0(\rho)}(\mathbf{0}) \cap \mathcal{W}_{\mathcal{S}} = \{\mathbf{0}\}$, the intersection is transverse.
Otherwise, the intersection is tangential with at least one direction aligned with the subspace.
Based on this observation, we explore a complete criterion of linear robustness with proof delayed in Appendix~C:
\begin{theorem}[Linear robustness criterion]\label{thm:linear}
     A UDA state $\rho$ with respect to the marginal set $\cS$ is locally linearly robust if and only if $\mathcal{K}_{\mathcal{D}_0(\rho)}(\mathbf{0}) \cap \mathcal{W}_{\mathcal{S}} = \{\mathbf{0}\}$ (transverse case).
\end{theorem}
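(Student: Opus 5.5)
We prove both directions by reducing local linear robustness to a statement about directions near the origin. The two tools are the tangent cone of $\cD_0(\rho)$ at $\mathbf{0}$ and the bound \eqref{eq:bounded_inverse}, which gives $\dist(X,\cW_\cS)\le \norm{\widetilde{\cM}_\cS^{-1}}_{\mathrm{op}}\norm{\cM_\cS(X)}_\cS$. Since $\norm{\cM_\cS(X)}_\cS\le |\cS|\,\norm{X}_1$ (partial traces contract the trace norm) and is invariant under adding elements of $\cW_\cS$, we also have $\norm{\cM_\cS(X)}_\cS\le|\cS|\dist(X,\cW_\cS)$. So the marginal norm and $\dist(\cdot,\cW_\cS)$ are equivalent. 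Local linear robustness is therefore equivalent to the existence of $c>0,r>0$ with $\dist(\delta,\cW_\cS)\ge c\norm{\delta}_1$ for all $\delta\in\cD_0(\rho)$ with $\norm{\delta}_1\le r$.

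\emph{Sufficiency (transverse $\Rightarrow$ linear).} We argue by contradiction. Suppose there is a sequence $\delta_k\in\cD_0(\rho)$ with $\dist(\delta_k,\cW_\cS)\le \norm{\delta_k}_1/k$, restricted to the local regime.

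First we show $\delta_k\to\mathbf{0}$. Taking the regime as $\varepsilon\le\varepsilon_0$, Theorem~\ref{thm:main} gives $\norm{\delta_k}_1\le C\varepsilon_k^\alpha$. If $\varepsilon_k\to0$, this forces $\delta_k\to\mathbf{0}$. Otherwise a subsequence has $\varepsilon_k$ bounded below. But $\varepsilon_k\le |\cS|\norm{\delta_k}_1/k\le 2|\cS|/k\to 0$, which is a contradiction. So in all cases $\delta_k\to\mathbf{0}$, and $\delta_k\ne\mathbf{0}$.

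Next set $t_k=\norm{\delta_k}_1\to0^+$ and $X_k=\delta_k/t_k$. These lie on the unit sphere of the finite-dimensional space $\cV$, so a subsequence converges to some $X$ with $\norm{X}_1=1$. By the definition \eqref{eq:tangent-def}, $X\in\cK_{\cD_0(\rho)}(\mathbf{0})$. Because $\dist(\cdot,\cW_\cS)$ is continuous and positively homogeneous, $\dist(X,\cW_\cS)=\lim \dist(\delta_k,\cW_\cS)/t_k=0$. Since $\cW_\cS$ is closed, $X\in\cW_\cS$. This yields a nonzero element of $\cK\cap\cW_\cS$, contradicting transversality.

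\emph{Necessity (linear $\Rightarrow$ transverse).} Suppose $\mathbf{0}\neq X\in\cK_{\cD_0(\rho)}(\mathbf{0})\cap\cW_\cS$, with $X=\lim\delta_k/t_k$. Then $\norm{\delta_k}_1/t_k\to\norm{X}_1>0$. On the other hand,
\[
\norm{\cM_\cS(\delta_k)}_\cS/t_k=\norm{\cM_\cS(\delta_k/t_k)}_\cS\to\norm{\cM_\cS(X)}_\cS=0,
\]
by continuity and linearity of $\cM_\cS$. Hence $\varepsilon_k=o(t_k)$ while $\norm{\delta_k}_1=\Theta(t_k)$, so $\norm{\delta_k}_1/\varepsilon_k\to\infty$. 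Moreover $\varepsilon_k\to0$, so these $\sigma_k=\rho+\delta_k$ eventually lie in any local regime $\varepsilon\le\varepsilon_0$. This violates any bound $\norm{\sigma-\rho}_1\le C\varepsilon$. One edge case: if $\varepsilon_k=0$ for some $k$ with $\delta_k\neq\mathbf{0}$, UDA is violated directly, so that cannot occur.

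The main obstacle is the localization bookkeeping in the sufficiency direction. We must make sure the counterexample sequence actually shrinks to the origin, so that its normalized limit lands in the tangent cone rather than being an arbitrary direction. This is handled above by invoking Theorem~\ref{thm:main}, or alternatively by compactness of $\cD_0(\rho)$ together with the UDA condition \eqref{eq:UDA}. The remaining steps are straightforward consequences of finite-dimensionality and continuity of the marginal map.
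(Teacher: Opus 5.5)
Your proof is correct. The necessity direction is essentially the paper's argument. You take a nonzero $X\in\mathcal{K}_{\mathcal{D}_0(\rho)}(\mathbf{0})\cap\mathcal{W}_\mathcal{S}$ and write it as $\lim_k\delta_k/t_k$. Then $\|\delta_k\|_1/t_k\to\|X\|_1>0$ while $\|\mathcal{M}_\mathcal{S}(\delta_k)\|_\mathcal{S}/t_k\to 0$, so no linear bound can hold.

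The sufficiency direction takes a different route.

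\textbf{The paper's route.} It uses convexity of $\mathcal{D}_0(\rho)$. A convex set containing $\mathbf{0}$ lies inside its own tangent cone, so $\mathcal{D}_0(\rho)\subseteq\mathcal{K}$. Minimizing $\dist(\cdot,\mathcal{W}_\mathcal{S})$ over the compact unit slice of $\mathcal{K}$ gives some $\kappa>0$. Homogeneity then yields $\|\delta\|_1\le\kappa^{-1}\dist(\delta,\mathcal{W}_\mathcal{S})\le(C_1/\kappa)\|\mathcal{M}_\mathcal{S}(\delta)\|_\mathcal{S}$ for \emph{every} $\delta\in\mathcal{D}_0(\rho)$. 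This is a global bound: it has no threshold $\varepsilon_0$ and an explicit constant. It also makes no appeal to Theorem~\ref{thm:main}, nor even to the UDA hypothesis, since transversality already forces $\mathcal{D}_0(\rho)\cap\mathcal{W}_\mathcal{S}=\{\mathbf{0}\}$.

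\textbf{Your route.} You argue by contradiction straight from the Bouligand definition and never use $\mathcal{D}_0(\rho)\subseteq\mathcal{K}$. This is more general: it would work for a nonconvex feasible set. The cost is the localization step you correctly flagged. You must force $\delta_k\to\mathbf{0}$ so that the normalized limit lands in the tangent cone. That requires Theorem~\ref{thm:main}, or the compactness-plus-UDA argument of Lemma~\ref{lm:bootstrap}, and it yields only a local, non-explicit constant.

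\textbf{Minor points.}
\begin{itemize}
\item Your case split on whether $\varepsilon_k\to 0$ is redundant. You already show directly that $\varepsilon_k\le|\mathcal{S}|\dist(\delta_k,\mathcal{W}_\mathcal{S})\le 2|\mathcal{S}|/k$.
\item The condition $\delta_k\neq\mathbf{0}$ belongs in the hypothesis on the contradicting sequence rather than being concluded afterwards. It comes from the strict inequality $\|\delta_k\|_1>C\varepsilon_k$ in the negation of linear robustness.
\item Your two-sided equivalence between the marginal norm and $\dist(\cdot,\mathcal{W}_\mathcal{S})$ is a nice addition. The paper only uses the one-sided bound of Lemma~\ref{lm:dis_margin}.
\end{itemize}
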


\begin{figure}
    \centering
    \includegraphics[width=\linewidth]{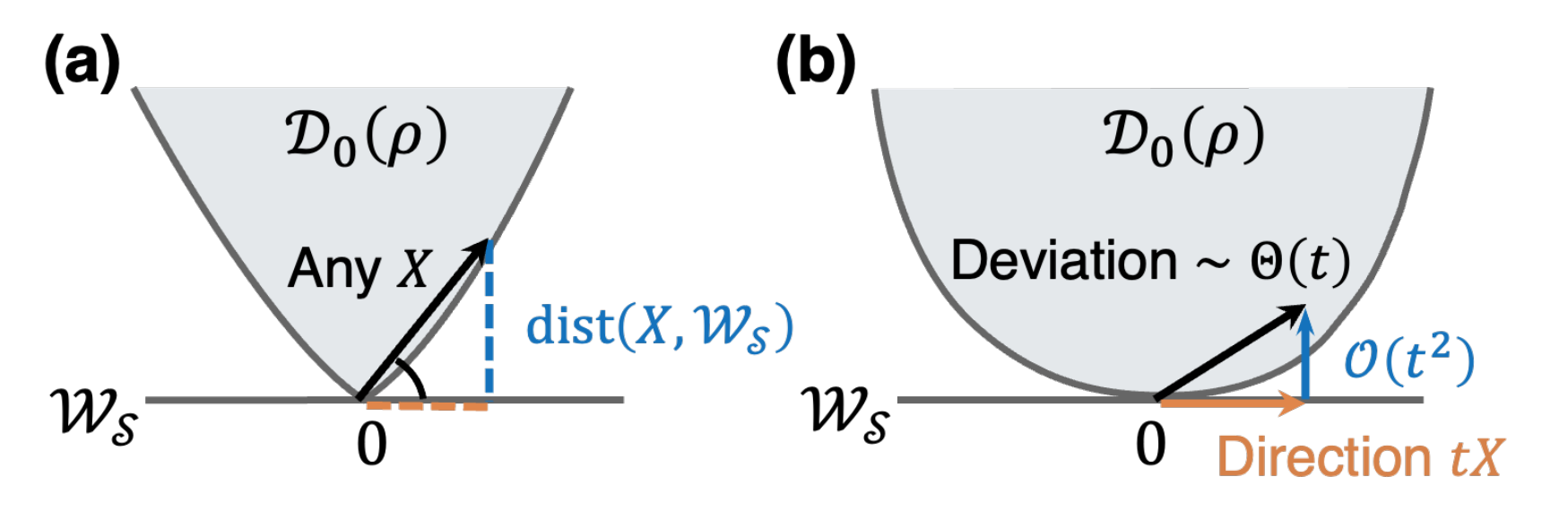}
    \caption{\textbf{Geometry near the intersection $\mathbf{0}$.} 
   Semialgebraicity ensures that the intersection is either transverse or polynomially tangential. \textbf{(a)} Transverse intersection. 
    The boundary meets the subspace $\cW_\cS$ at a strict angle.
    Every direction has a distance from $\cW_\cS$ (blue dashed line) proportional to its size, which ensures linear robustness. 
    \textbf{(b)} Tangential intersection. 
    The boundary smoothly aligns with the subspace $\cW_\cS$ in at least one direction (orange arrow).
    We construct a valid state difference using this direction, which has a first-order size but only the second-order distance from $\cW_\cS$.}
    \label{fig:intersection}
\end{figure}

In the transverse case, where $\mathcal{K}_{\mathcal{D}_0(\rho)}(\mathbf{0})\cap\mathcal{W}_\mathcal{S}=\{\mathbf{0}\}$, every nonzero direction leaves the origin at a positive angle to the subspace $\cW_\cS$ and thus has a distance from $\cW_\cS$ proportional to its total size, as in Figure~\ref{fig:intersection}(a).
Concretely, we define the unit slice of the cone as $S_\cK\coloneqq\{X\in \cK_{\cD_0(\rho)}(\mathbf{0}):\, \norm{X}_1=1\}$, which is a compact set.
The geometry ensures a positive minimum distance $\kappa\coloneqq \min_{X\in S_\cK}\dist(X,\cW_\cS)>0$.
For any direction $X\in\cK_{\cD_0(\rho)}(\mathbf{0})$, we have
\begin{equation}
\label{eq:transverse-cone}
\dist(X,\mathcal W_{\mathcal S})=\norm{X}_1\,\operatorname{dist}\!\left(\frac{X}{\norm{X}_1},\cW_\cS\right)\ge \kappa\,\|X\|_1.
\end{equation}
Since the set $\mathcal{D}_0(\rho)$ is convex, we have $\mathcal{D}_0(\rho)\subseteq\cK_{\cD_0(\rho)}(\mathbf{0})$.
Thus, Eq.~\eqref{eq:transverse-cone} applies to every state difference $\delta\in\mathcal{D}_0(\rho)$, offering the linear version of Eq.~\eqref{eq:visible_part}.
Combining with Eq.~\eqref{eq:bounded_inverse} ensures linear robustness.

Conversely, the tangential case occurs when $\cK_{\cD_0(\rho)}(\mathbf{0})\cap\cW_\cS\neq\{\mathbf{0}\}$.
This implies at least one nonzero direction smoothly aligns with the subspace $\cW_\cS$, as shown in Figure~\ref{fig:intersection}(b). 
For any such direction $X$, we can construct an explicit counterexample to linear robustness:
\begin{gather}\label{eq:construct_quadra}
    \widetilde\rho(t) \coloneqq\rho+tX+ \order{t^2},
\end{gather}
where the second-order correction maintains positive semidefiniteness (via Schur complements~\cite{zhang2006schur}).
Normalizing $\widetilde\rho(t)$ gives a valid state $\sigma(t)$. 
Since the leading term $tX$ lies in subspace $\mathcal{W}_{\mathcal{S}}$, the deviation of marginals relies only on the tail $\norm{\cM_\cS(\sigma(t)-\rho)}_\cS=\order{t^2}$,
whereas the global deviation remains first-order $\norm{\sigma(t)-\rho}_1=\Theta(t)$.
Thus, this construction produces a square-root scaling between global and local deviations, thereby precluding linear robustness.

The tangential construction further reveals a sharp gap in the possible range of robustness exponents:
 \begin{corollary}[Robustness Gap]\label{coro:gap}
    If a UDA state $\rho$ is not locally linearly robust, $\rho$ is not locally $\alpha$-robust for any $1/2<\alpha<1$.
 \end{corollary}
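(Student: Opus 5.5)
By Theorem~\ref{thm:linear}, a UDA state $\rho$ that is not locally linearly robust must be in the tangential case. So there is a nonzero direction $X\in\mathcal{K}_{\mathcal{D}_0(\rho)}(\mathbf{0})\cap\mathcal{W}_{\mathcal{S}}$. The plan is to build from $X$ an explicit family of states $\sigma(t)$ with
\[
\norm{\sigma(t)-\rho}_1=\Theta(t)
\quad\text{and}\quad
\varepsilon(t)=\norm{\mathcal{M}_{\mathcal{S}}(\sigma(t)-\rho)}_{\mathcal{S}}=O(t^2).
\]
If $\rho$ were $\alpha$-robust for some $\alpha\in(1/2,1)$, we would get $c\,t\le C\varepsilon(t)^\alpha\le C' t^{2\alpha}$. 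Since $2\alpha>1$, this fails as $t\to0^+$, which is the contradiction we want.

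\textbf{Step 1: describe the tangent cone concretely.} Write $\rho=\sum_i\lambda_i\ket{i}\!\bra{i}$, with $P$ the projector onto $\supp\rho$ and $Q=\mathbb{1}-P$. Since $\mathcal{D}_0(\rho)$ is convex, its tangent cone is the closure of $\{t(\sigma-\rho):t\ge0\}$. This should be exactly the set of traceless Hermitian $X$ with $QXQ\succeq0$; no constraint on the $P$-block or the off-diagonal blocks is needed because $\rho$ is strictly positive on its support. I will verify this characterization directly.

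\textbf{Step 2: the second-order construction of Eq.~\eqref{eq:construct_quadra}.} Take $X$ with $QXQ\succeq0$ and $\Tr X=0$, and define $\widetilde\rho(t)=\rho+tX+t^2R$, where $R=c\,Q$ (or $c\,\mathbb{1}$) with $c$ large. Positivity for small $t$ follows from a Schur complement argument. The $P$-block is $\rho_P+O(t)\succ0$. The off-diagonal block $tPXQ$ contributes a correction of order $t^2\,QXP\,\rho_P^{-1}PXQ$ to the $Q$-block. The $Q$-block is $tQXQ+ct^2Q$.

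The delicate point is that $QXQ$ may be singular, so the $t^2$ term must dominate the Schur correction on $\ker(QXQ)$. I expect this to be the main obstacle. The fix I have in mind is to choose $c>\norm{QXP\rho_P^{-1}PXQ}_{\mathrm{op}}$; then the $Q$-block minus the correction is at least $tQXQ+t^2\big(c-\norm{\cdot}_{\mathrm{op}}\big)Q\succeq0$. Finally set $\sigma(t)=\widetilde\rho(t)/\Tr\widetilde\rho(t)$, where $\Tr\widetilde\rho(t)=1+O(t^2)$.

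\textbf{Step 3: estimate the two deviations.} We have $\sigma(t)-\rho=tX+O(t^2)$, so $\norm{\sigma(t)-\rho}_1=t\norm{X}_1+O(t^2)=\Theta(t)$. Because $X\in\mathcal{W}_{\mathcal{S}}$, the marginal map is linear and continuous, and $\norm{X_S}_1\le\norm{X}_1$, we get $\mathcal{M}_{\mathcal{S}}(\sigma(t)-\rho)=\mathcal{M}_{\mathcal{S}}(O(t^2))$. Hence $\varepsilon(t)=O(t^2)$.

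Plugging these two estimates into the assumed bound $\norm{\sigma-\rho}_1\le C\varepsilon^\alpha$ gives the contradiction described at the start, for all sufficiently small $t$ (in particular once $\varepsilon(t)\le\varepsilon_0$). Therefore the robustness exponent of such a $\rho$ is at most $1/2$.
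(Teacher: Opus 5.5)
Your proposal is correct and follows essentially the same route as the paper: take a nonzero $X\in\mathcal{K}_{\mathcal{D}_0(\rho)}(\mathbf{0})\cap\mathcal{W}_{\mathcal{S}}$, make $\rho+tX+O(t^2)$ positive via a Schur-complement argument, normalize, and play the $\Theta(t)$ global deviation against the $O(t^2)$ marginal deviation. The one difference is the second-order correction. The paper adds exactly $t^2X_{01}A(t)^{-1}X_{10}$ to the kernel block, so the Schur complement is exactly $tX_{00}\succeq0$ and your ``delicate point'' never arises. Your $ct^2Q$ also works, but the Schur correction involves $A(t)^{-1}=(\rho_P+t\,PXP)^{-1}$ rather than $\rho_P^{-1}$; since $A(t)^{-1}\to\rho_P^{-1}$, the strict choice $c>\norm{QXP\,\rho_P^{-1}PXQ}_{\mathrm{op}}$ still suffices for all sufficiently small $t$.
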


The abstract tangent cone in Eq.~\eqref{eq:tangent-def} simplifies algebraically. 
A valid deviated state must maintain unit trace and positive semidefiniteness.
The trace constraint makes first-order directions traceless.
Positivity forces directions to be non-negative in $\rho$'s null space.
Letting $P_0$ denote the projector onto $\ker \rho$, we explicitly write the cone as:
\begin{equation}
\label{eq:K-explicit-sec}
\cK_{\cD_0(\rho)}(0)=\{X\in\cV\ |\   P_0XP_0\succeq 0\}.
\end{equation}

This transforms our geometric criterion into an executable search for a non-trivial direction $X\in\cK_{\cD_0(\rho)}(\mathbf{0})\cap\cW_\cS$.
First, we solve the feasibility of a linear program,
\begin{equation}
\label{eq:Pl-sec}
X\in \cV\ \text{s.t.}\ \mathcal M_{\mathcal S}(X)=\mathbf{0},\ P_0 X P_0= \mathbf{0}.
\end{equation}
This checks for zero-marginal directions that do not affect the null space. 
If this admits only $X=\mathbf{0}$ and $P_0 \neq \mathbf{0}$, we conduct the semidefinite program:
\begin{equation}\label{eq:PS-sec}
X\in \cV\, \text{s.t.}\,\mathcal M_{\mathcal S}(X)=\mathbf{0},\, P_0 X P_0\succeq 0,\, \Tr(P_0 X P_0)=1.
\end{equation}
This identifies zero-marginal directions with positive null-space components. 
If both programs find only the trivial solution, the state $\rho$ is certified linearly robust. 
Conversely, any solution asserts the failure of linear robustness.

Applying a counting argument to Theorem~\ref{thm:linear} and these programs reveals a stringent resource constraint.
Formally, if a rank-$r$ state is linearly robust, Eq.~\eqref{eq:Pl-sec} admits only the trivial solution, meaning $\cM_\cS$ is injective on the traceless Hermitian space satisfying $P_0XP_0=0$.
Consequently, the dimension of detectable marginals must exceed this space's dimension, leading to:
\begin{corollary}[Marginal size condition]
\label{cor:lower-bound-s}
Let $\rho\in\cD(\cH)$ be a rank-$r$ ($r\leq d=\dim\mathcal H$) UDA state with respect to the marginal set $\mathcal S=\{S_k\}_{k=1}^M$.
If $\rho$ is locally linearly robust,
the largest marginal size $s\coloneqq\max_k |S_k|$ must satisfy
\begin{equation}
\label{eq:s-lower}
s\, \ge\, \frac{\log_2\!\bigl((r^2-1)+2r(d-r)\bigr)-\log_2 M}{2}.
\end{equation}
Particularly, for the $n$-qubit pure-state case, this simplifies to $s\,\ge\, (n-\log_2 M)/2$.
\end{corollary}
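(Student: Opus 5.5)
The plan is a dimension count: use Theorem~\ref{thm:linear} to get injectivity of $\cM_\cS$ on a specific subspace, then compare that subspace's dimension with the dimension of the codomain of $\cM_\cS$.

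\textbf{Step 1: injectivity.} By Theorem~\ref{thm:linear}, linear robustness means $\cK_{\cD_0(\rho)}(\mathbf{0})\cap\cW_\cS=\{\mathbf{0}\}$. Define
\begin{equation*}
\cT\coloneqq\{X\in\cV \mid P_0XP_0=0\}.
\end{equation*}
By the explicit form of the cone in Eq.~\eqref{eq:K-explicit-sec}, $\cT$ is contained in the cone. Being a linear subspace, it satisfies $\cT\cap\cW_\cS=\{\mathbf{0}\}$. Hence the restriction $\cM_\cS|_{\cT}$ is injective, so $\dim_{\mathbb R}\cT\le\dim_{\mathbb R}\mathrm{Im}(\cM_\cS)$.

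\textbf{Step 2: dimension of $\cT$.} Decompose $\cH=\mathrm{supp}(\rho)\oplus\ker\rho$, with dimensions $r$ and $d-r$. A Hermitian $X$ with $P_0XP_0=0$ is determined by two blocks:
\begin{itemize}
\item an arbitrary Hermitian block on $\mathrm{supp}(\rho)$, contributing $r^2$ real parameters;
\item an arbitrary complex off-diagonal block $P_1XP_0$ of size $r\times(d-r)$, contributing $2r(d-r)$ real parameters.
\end{itemize}
Since the $P_0$ block vanishes, the trace condition removes one parameter from the support block. Therefore $\dim\cT=(r^2-1)+2r(d-r)$.

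\textbf{Step 3: dimension of the image.} Each marginal $X_{S_k}$ is a Hermitian operator on $S_k$. For qubits this space has real dimension $4^{|S_k|}\le 4^{s}$; for general local dimensions one would bound it by the appropriate power. Hence $\dim\mathrm{Im}(\cM_\cS)\le\sum_k 4^{|S_k|}\le M\,4^{s}$. The trace could save one dimension per marginal, but this is not needed for the bound. Combining with Steps 1 and 2 gives
\begin{equation*}
(r^2-1)+2r(d-r)\le M\,4^{s}.
\end{equation*}
Taking $\log_2$ and dividing by $2$ yields Eq.~\eqref{eq:s-lower}.

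\textbf{Step 4: pure $n$-qubit case.} Set $r=1$ and $d=2^n$. The left-hand side becomes $2(2^n-1)\ge 2^n$ for $n\ge1$. This gives $2s\ge n-\log_2 M$, i.e., $s\ge(n-\log_2 M)/2$.

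\textbf{Main obstacle.} The only point needing care is the local dimension used in Step 3. The stated bound, with $\log_2$ of $4^s$, implicitly assumes qubit subsystems, so that each marginal has at most $4^{s}$ real parameters. I will make this assumption explicit, noting that qudits would replace $4$ by $\max_i d_i^2$. Everything else is routine once the cone characterization in Eq.~\eqref{eq:K-explicit-sec} and Theorem~\ref{thm:linear} are assumed.
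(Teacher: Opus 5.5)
Your proposal is correct and follows essentially the same route as the paper: restrict $\cM_\cS$ to the subspace $\{X\in\cV \mid P_0XP_0=0\}\subseteq\cK_{\cD_0(\rho)}(\mathbf{0})$, use the tangent criterion to get injectivity, and compare $(r^2-1)+2r(d-r)$ with $\sum_k 4^{|S_k|}\le M\,4^{s}$. You also make explicit two points the paper leaves implicit, and both are correct: the check $2(2^n-1)\ge 2^n$ in the pure-state case, and the remark that the count $4^{|S_k|}$ presumes qubit subsystems.
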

\noindent In other words, linear robustness requires sufficiently informative marginal data.
For pure states, one must either measure proportionally large subsystems or collect exponentially many smaller marginals to achieve this.

\emph{Case studies.---}
Preceding theoretical results endow UDA with rigorous robustness under imperfect local data. 
We now revisit representative UDA families to determine the explicit robustness and showcase a corresponding practical application.
The detailed proofs and calculations are delayed to Appendix~D.

A common route to UDA arises from local Hamiltonians.
If a pure state $\rho$ is the unique ground state of a Hamiltonian $H=\sum_{i=1}^m H_i$ with local terms $\{H_i\}$, any state sharing identical local marginals has the same energy and must coincide with $\rho$.
This forces the unique ground state to be UDA with respect to local supports of $\{H_i\}$.
More precisely, the deviation of marginals directly bounds the energy difference, which in turn bounds the deviation of global states~\cite{cramer2010efficient}.
We rephrase this standard mechanism as a bridging proposition establishing square-root robustness.
\begin{proposition}\label{prop:UniqueGround}
    Suppose the pure state $\rho$ is the unique ground state of an $n$-qubit Hamiltonian $H=\sum_{i=1}^m H_i$ with spectral gap $\Delta>0$ and local supports $\cS\coloneqq\{S_1,\cdots,S_m\subseteq[n]\}$.
    Then $\rho$ is a UDA state with respect to marginal supports $\cS$ such that for any $\sigma\in\cD(\cH)$
    \begin{gather}\label{eq:quadra}
        \norm{\sigma-\rho}_1\leq 2\sqrt{\frac{\omega_{\max}}{\Delta}}\norm{\cM_\cS(\sigma-\rho)}_\cS^{1/2},
    \end{gather}
    where $\omega_{\max} \coloneqq\max_{i=1}^m(\lambda_{\max}(H_i)-\lambda_{\min}(H_i))/2$.
\end{proposition}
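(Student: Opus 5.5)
The plan is the standard energy-based argument: the marginal deviation controls the energy difference, the spectral gap turns the energy difference into a fidelity deficit, and the Fuchs--van de Graaf inequality converts that deficit into a trace-distance bound. Throughout, write $\rho=|\psi\rangle\langle\psi|$, let $E_0$ be the ground energy of $H$, set $\delta\coloneqq\sigma-\rho\in\cD_0(\rho)$, and let $\varepsilon\coloneqq\norm{\cM_\cS(\delta)}_\cS$. The UDA claim is the special case $\varepsilon=0$ of the inequality, since Eq.~\eqref{eq:quadra} then forces $\sigma=\rho$. So it suffices to prove Eq.~\eqref{eq:quadra}.

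\textbf{Step 1 (energy from marginals).} Each $H_i$ acts nontrivially only on $S_i$, so $\Tr(H_i\delta)=\Tr(h_i\,\delta_{S_i})$, where $h_i$ is the restriction of $H_i$ to $S_i$. Because $\Tr\delta=0$, shifting by $c_i\coloneqq(\lambda_{\max}(H_i)+\lambda_{\min}(H_i))/2$ changes nothing, so
\[
\Tr(H\delta)=\sum_i \Tr\bigl((h_i-c_i\mathbbm{1})\delta_{S_i}\bigr).
\]
The shifted operator satisfies $\norm{h_i-c_i\mathbbm{1}}_\infty=(\lambda_{\max}(H_i)-\lambda_{\min}(H_i))/2\le\omega_{\max}$. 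Hölder's inequality term by term then gives
\[
|\Tr(H\sigma)-E_0|\le\omega_{\max}\sum_i\norm{\delta_{S_i}}_1=\omega_{\max}\,\varepsilon.
\]
Overlapping supports cause no trouble here, because the marginal norm already sums over all $S\in\cS$.

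\textbf{Step 2 (gap to fidelity).} Since $\rho$ is the unique ground state with gap $\Delta$, we have the operator inequality
\[
H\succeq E_0\rho+(E_0+\Delta)(\mathbbm{1}-\rho).
\]
Taking the expectation in $\sigma$ yields $\Tr(H\sigma)-E_0\ge\Delta\bigl(1-\langle\psi|\sigma|\psi\rangle\bigr)$. Combining with Step 1 gives $1-F\le\omega_{\max}\varepsilon/\Delta$, where $F=\langle\psi|\sigma|\psi\rangle$ is the (squared) fidelity between $\sigma$ and the pure state $\rho$.

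\textbf{Step 3 (fidelity to trace distance).} The Fuchs--van de Graaf inequality gives $\tfrac12\norm{\sigma-\rho}_1\le\sqrt{1-F}$, and when one argument is pure the fidelity reduces to $F=\langle\psi|\sigma|\psi\rangle$. Substituting the bound from Step 2 yields exactly $\norm{\sigma-\rho}_1\le2\sqrt{\omega_{\max}/\Delta}\,\varepsilon^{1/2}$.

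\textbf{Expected obstacle.} The only step needing care is Step 3 with $\sigma$ mixed. I will use the standard form with $F(\rho,\sigma)=\bigl(\Tr\sqrt{\sqrt\rho\,\sigma\sqrt\rho}\bigr)^2$, which for pure $\rho$ equals $\langle\psi|\sigma|\psi\rangle$; this convention is what produces the constant $2$. If a self-contained check is wanted, I would purify $\sigma$ by Uhlmann's theorem, apply the pure-state identity $\norm{\phi-\phi'}_1=2\sqrt{1-|\langle\phi|\phi'\rangle|^2}$, and then use monotonicity of the trace norm under partial trace.
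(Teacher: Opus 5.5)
Your proposal is correct and follows the paper's proof step for step. The midpoint-shifted H\"older bound turns the marginal deviations into an energy excess, the gap inequality $H-E_0 I\succeq\Delta(I-\rho)$ converts that excess into $1-\Tr(\rho\sigma)\le\omega_{\max}\varepsilon/\Delta$, and Fuchs--van de Graaf with a pure target supplies the constant $2$. Your explicit remark that the UDA property is the $\varepsilon=0$ case of the inequality is a small point the paper leaves implicit.
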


Equipped with this bridge, we first examine pure stabilizer states.
Suppose $\rho\coloneqq\ket{\psi}\!\bra{\psi}$ is an $n$-qubit stabilizer state regarding a maximal stabilizer group $\mathbb{G}$ and independent generators $g_1,\cdots,g_n\in\mathbb{G}$.
The commuting parent Hamiltonian $H=\sum_{i=1}^n \Pi_i$, where $\Pi_i=(I-g_i)/2\succeq 0$, has $\rho$ as its unique ground state.
Every state orthogonal to $\ket{\psi}$ has energy at least $1$ \cite{PhysRevA.92.012305,PhysRevLett.134.050201}.
According to Proposition~\ref{prop:UniqueGround}, $\rho$ is at least square-root robust with respect to the generators' supports, with robustness coefficient in Eq.~\eqref{eq:quadra} equal to $\sqrt{2}$.

The linear robustness certification also simplifies substantially here.
The certification checks for the feasibility of programs over traceless Hermitian operators $X\in\cV$ such that $\Tr(HX)=0$ and $P_0XP_0\succeq \mathbf{0}$.
Moreover, we know that the parent Hamiltonian 
satisfies $H\succeq I-\rho=P_0$. This forces $P_0XP_0=0$, and the linear
certification reduces to only checking the linear program in Eq.~\eqref{eq:Pl-sec}.

To illustrate this simplified certification, we analyze one-dimensional cluster and ring states.
Because these states are foundational resources for measurement-based quantum computation, their UDA robustness suggests a practical local verification guarantee.
These graph states are defined by local stabilizer generators, $g_i = Z_{i-1}X_i Z_{i+1}$, with the two end generators $g_1=X_1Z_2$ and $g_n=Z_{n-1}X_n$ for cluster states, and periodic boundary conditions for ring states. 
Since Proposition~\ref{prop:UniqueGround} guarantees at least square-root robustness, the failure of the certification directly leads to the exact square-root robustness based on Corollary~\ref{coro:gap}.
We thus numerically execute the certification and summarize the robustness in Table~\ref{tab:graph-states}.

\begin{table}[t]
\centering
\renewcommand{\arraystretch}{1.22}
\setlength{\tabcolsep}{6pt}
\begin{tabular*}{\columnwidth}{@{\extracolsep{\fill}}lcccc}
\hline\hline
State family & $n=4$ & $n=5$ & $n=6$ & $n=7$ \\
\hline
\makecell[l]{Cluster state} & $\alpha_\star=1$ & $\alpha_\star=1/2$ & $\alpha_\star=1/2$ & $\alpha_\star=1/2$ \\
\makecell[l]{Ring state}     & $\alpha_\star=1$ & $\alpha_\star=1$ & $\alpha_\star=1/2$ & $\alpha_\star=1/2$ \\
\hline\hline
\end{tabular*}
\caption{\label{tab:graph-states}Exact (maximum) robustness exponent $\alpha_\star$ characterization for $n$-qubit graph states. 
Cluster states are tested with 3-local chain marginals plus two end marginals, and ring states with 3-local ring marginals that respect periodic boundary conditions.}
\end{table}

We next turn to the highly symmetric Dicke family.
For any $n$-qubit system ($n\geq2$), we denote the Dicke state with Hamming weight $1\leq k\leq n-1$ as
\begin{gather}\label{eq:dicke}
    \ket{D(n,k)}\coloneqq\frac{1}{\sqrt{\binom{n}{k}}}\sum_{\substack{\mathbf{x}\in\{0,1\}^n\\\textsf{wt}(\mathbf{x})=k}}\ket{\mathbf{x}}.
\end{gather}
We adopt a parent Hamiltonian for $\ket{D(n,k)}$~\cite{chen2012correlations}:
\begin{gather}\label{eq:H_nk}
H_{n,k}:=\sum_{1\le i<j\le n}\Pi^-_{ij}+\frac{1}{4}\left[\sum_{i=1}^nZ_i-(n-2k)I\right]^2,
\end{gather}
where $\Pi^-_{ij}\coloneqq(I-\text{SWAP}_{ij})/2$.
The first term quantifies the antisymmetry between every pair of qubits, and the second term accounts for the Hamming weights of states. 
Their combination makes $\ket{D(n,k)}$ the unique ground state.
Since $H_{n,k}$ is a sum of two-local terms, Dicke states are at least square-root robust UDA states with respect to the full collection of two-local marginals.
With a detailed choice of coefficients for this Hamiltonian, we obtain the tightest robustness coefficient,
\begin{gather}
    2\sqrt{\frac{\omega_{n,k}}{\Delta_{n,k}}}=\sqrt{2+\frac{2\abs{n-2k}}{n-1}}.
\end{gather}

The Dicke state family also admits detailed analysis of when this square-root robustness is exact.
For $\ket{D(n,k)}$ $(1\leq k\leq n-1)$, suppose we can find a computational-basis state $\ket{\psi_\ell}$ with Hamming weight $\ell$ such that $\abs{\ell-k}\geq3$.
For any $0<t<1$, we define the superposed state:
\begin{gather}
    \ket{\phi(t)}\coloneqq\sqrt{1-t^2}\ket{D(n,k)}+t\ket{\psi_\ell}.
\end{gather}
Because $\ket{D(n,k)}$ and $\ket{\psi_\ell}$ differ on at least three qubits, the two-local partial trace completely ignores their cross-term.
Therefore, the global deviation scales linearly as $\norm{\phi(t)\!\bra{\phi(t)}-\rho_{n,k}}_1 = \Theta(t)$, while the deviations of marginals is $\order{t^2}$, ensuring square-root robustness.
This counterexample holds for every Dicke state except $\ket{D(3,1)}$, $\ket{D(3,2)}$, and $\ket{D(4,2)}$ where no such $\ell$ exists.
Applying the certification to determine these remaining states reveals that only $\ket{D(3,1)}$ and $\ket{D(3,2)}$ exhibit linear robustness.
Combined with Corollary~\ref{coro:gap}, we establish a complete classification:
\begin{proposition}
\label{prop:dicke-classification}
For every $n\ge 3$ and $1\le k\le n-1$, $\ket{D(n,k)}$ in Eq.~\eqref{eq:dicke} is exactly locally square-root robust as a UDA state with respect to the full collection of two-local marginals, except for
locally linearly robust $\ket{D(3,1)}$ and $\ket{D(3,2)}$.
\end{proposition}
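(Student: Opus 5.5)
The plan is to prove Proposition~\ref{prop:dicke-classification} in three parts: a square-root upper bound for every Dicke state, a matching lower bound for all but three states, and a separate treatment of those three small exceptions.

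\textbf{Step 1: square-root robustness for all states.} For every $n\ge3$ and $1\le k\le n-1$, I would invoke Proposition~\ref{prop:UniqueGround} with the two-local parent Hamiltonian $H_{n,k}$ of Eq.~\eqref{eq:H_nk}. This requires checking two things.
\begin{itemize}
\item $H_{n,k}$ is a sum of two-local terms. The square $\frac14[\sum_i Z_i-(n-2k)I]^2$ expands into $Z_iZ_j$, $Z_i$ and identity terms, so it qualifies.
\item $\ket{D(n,k)}$ is its unique ground state with a positive gap. Both terms are positive semidefinite and annihilate $\ket{D(n,k)}$. The kernel of $\sum_{i<j}\Pi^-_{ij}$ is the symmetric subspace, and within it the weight term singles out weight exactly $k$. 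The gap is then at least the smallest nonzero eigenvalue; only positivity is needed here, not the sharp constant.
\end{itemize}
This gives $\alpha=1/2$ for every Dicke state in the range. By Corollary~\ref{coro:gap}, exactness of $\alpha_\star=1/2$ then reduces to showing failure of linear robustness.

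\textbf{Step 2: failure of linear robustness for generic $(n,k)$.} Whenever there is $\ell\in\{0,\dots,n\}$ with $|\ell-k|\ge3$, I would use the curve $\ket{\phi(t)}=\sqrt{1-t^2}\ket{D(n,k)}+t\ket{\psi_\ell}$. Here $\ket{\psi_\ell}$ is a computational-basis state of weight $\ell$, orthogonal to the Dicke state.
\begin{itemize}
\item \emph{Cross terms vanish on marginals.} The cross term $\sqrt{1-t^2}\,t(\ket{D}\!\bra{\psi_\ell}+\mathrm{h.c.})$ is a sum of $\ket{\mathbf x}\!\bra{\mathbf y}$ with $\mathbf x,\mathbf y$ at Hamming distance at least $|\ell-k|\ge3$. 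Tracing out $n-2$ qubits leaves $\delta_{\mathbf x_{\bar S},\mathbf y_{\bar S}}$. This vanishes, because two differing positions cannot account for distance at least $3$.
\item \emph{Sizes of the two deviations.} The marginal deviation is $t^2\,\|(\ket{\psi_\ell}\!\bra{\psi_\ell}-\rho_{n,k})_S\|_1$ summed over pairs, which is $O(t^2)$. The global trace distance between the two pure states is $2\sqrt{1-|\langle\phi|D\rangle|^2}=2t$.
\end{itemize}
Hence no bound $Ct^{2\alpha}$ with $\alpha>1/2$ can hold, which forces $\alpha_\star=1/2$.

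The existence condition for $\ell$ is $k\ge3$ or $n-k\ge3$. It fails only when $k\le2$ and $n-k\le2$, that is, for $n\le4$. Among the states with $n\ge3$, the exceptions are exactly $D(3,1)$, $D(3,2)$ and $D(4,2)$; the states $D(4,1)$ and $D(4,3)$ are covered by $\ell=4$ and $\ell=0$ respectively.

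\textbf{Step 3: the three exceptional cases (main obstacle).} I expect this step to be the hardest, and I would handle it via Theorem~\ref{thm:linear} and the explicit cone of Eq.~\eqref{eq:K-explicit-sec}. Since $H\succeq \Delta P_0$, the same argument as in the stabilizer case applies to any $X\in\cW_\cS$ with $P_0XP_0\succeq0$:
\begin{equation*}
0=\Tr(HX)=\Tr(HP_0XP_0)+2\,\mathrm{Re}\Tr(H\,\rho X P_0).
\end{equation*}
The cross term vanishes because $H\rho=0$, so $\Tr(HP_0XP_0)=0$, and $P_0XP_0\succeq0$ then forces $P_0XP_0=0$. Only the linear program~\eqref{eq:Pl-sec} therefore remains.

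For $D(4,2)$ I would exhibit a nonzero solution, built from a direction of the form $\ket{D}\!\bra{v}+\ket{v}\!\bra{D}$ with $v\perp D$ and $v$ lying in a part of $\ker\rho$ chosen so that all two-qubit marginals cancel. Natural candidates are weight-0 or weight-4 vectors, or antisymmetric combinations of weight 2. I would find such a $v$ by solving the small linear system, possibly by computer. A linear combination of such directions within $\cW_\cS$ then gives a tangent direction, and Theorem~\ref{thm:linear} with Corollary~\ref{coro:gap} yields $\alpha_\star=1/2$.

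For $D(3,1)$ I would show the linear program has only the trivial solution. Such an $X$ equals $\ket{D}\!\bra{v}+\ket{v}\!\bra{D}+c(\rho-P_0/7)$-type terms, with $v\in\ker\rho$. Imposing vanishing of all three two-qubit marginals gives a finite linear system, and symmetry (permutations together with the $U(1)$ weight grading) reduces it to a few unknowns. $D(3,2)$ follows by the bit-flip symmetry. This step is a finite computation; I expect to rely on the symmetry reduction or on the SDP certification described above.
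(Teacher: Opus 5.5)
Your proposal is correct and follows the paper's proof: the two-local parent Hamiltonian $H_{n,k}$ gives square-root robustness, the superposition with a weight-$\ell$ basis state with $|\ell-k|\ge3$ together with Corollary~\ref{coro:gap} gives exactness, and the linear-robustness certification of Theorem~\ref{thm:linear} settles the leftover states $\ket{D(3,1)}$, $\ket{D(3,2)}$, $\ket{D(4,2)}$, where the paper simply runs the certification numerically. Your observation that $H_{n,k}\succeq P_0$ makes the SDP stage automatically infeasible (as in the stabilizer case) is valid and lets you finish by hand; for example, $v=\ket{0000}-\ket{1111}$ gives a nonzero $X=\ket{D(4,2)}\!\bra{v}+\ket{v}\!\bra{D(4,2)}\in\cW_\cS$, and one small correction is that for pure $\rho$ the constraints $P_0XP_0=0$ and $\Tr X=0$ already force $X=\ket{D}\!\bra{v}+\ket{v}\!\bra{D}$ with $v\perp\ket{D}$, so the $c(\rho-P_0/7)$ term you include does not belong in the linear program.
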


As a practical application, our robustness framework for Dicke states yields a local and scalable genuine multipartite entanglement (GME) witness. 
Standard projective GME witnesses require measuring the global state fidelity, which typically necessitates cumbersome global measurements.
By applying our robustness bounds, we bypass this bottleneck, lower-bounding the global fidelity using only deviations of two-local marginals.
This results in an experimentally friendly GME certification protocol that relies solely on two-body measurements. 
We detail this in Appendix~D.

\emph{Conclusion and outlook.---}
We proved that the set of multipartite quantum states uniquely determined by their marginals is robust against perturbations:  any global state's deviation from a UDA target is strictly bounded by a power law of their local marginals' deviations. 
The most favorable case corresponds to a linear relation between the distance of the marginals and the distance of the global states. 
To identify states with this property,  we provided a necessary and sufficient condition, as well as an experimentally testable criterion.  

This robustness theory elevates UDA from merely an interesting theoretical concept to a practically relevant tool, providing a theoretical foundation for applications to state tomography, entanglement certification, and, more generally, to every task involving the estimation of global properties from local measurements.

\begin{acknowledgments}
This work was supported by the Chinese Ministry of Science and Technology (MOST) through grant 2023ZD0300600.  
Q.Z. acknowledges funding from National Natural Science Foundation of China (NSFC) via Project No.~12347104 and No.~12305030, Guangdong Basic and Applied Basic Research Foundation via Project~2023A1515012185, Hong Kong Research Grant Council (RGC) via No.~27300823, N\_HKU718/23, and R6010-23, Guangdong Provincial Quantum Science Strategic Initiative No.~GDZX2303007. 
G.C. acknowledges Hong Kong Research Grant Council (RGC) through grants  SRFS2021-7S02  and R7035-21F, and the State Key Laboratory of Quantum Information Technologies and Materials.

  \emph{Code \& data availability.---}All code and data for linear certification are available at~\cite{Wenjun26} 
\end{acknowledgments}

\bibliographystyle{apsrev4-2}
\bibliography{ref}

\setcounter{theorem}{0}
\setcounter{lemma}{0}
\setcounter{proposition}{0}
\setcounter{definition}{0}
\setcounter{corollary}{0}
\setcounter{algocf}{0}
\clearpage
\appendix
\onecolumngrid

\begin{center}
    \textbf{Appendix}
\end{center}

\section{Semialgebraic sets and functions}
In the main text, establishing the universal robustness of UDA states relies heavily on the geometric properties of the state difference set $\mathcal{D}_0(\rho)$. 
Here, we formally introduce the underlying mathematical framework, including semialgebraic sets and functions from real algebraic geometry.
Crucially, this framework allows us to invoke a {\L}ojasiewicz-type inequality, which rigorously quantifies the tame geometry of these sets and guarantees that boundaries cannot approach linear subspace intersections at a super-polynomially slow rate.

\begin{definition}[Semialgebraic sets]
A set $\cA\subset\mathbb{R}^n$ is called \emph{basic semialgebraic} if there exist polynomials
$p_1,\dots,p_r$, $q_1,\dots,q_s$, and $r_1,\dots,r_t$ in $\mathbb{R}[x_1,\dots,x_n]$ such that
\begin{align}
\cA=\bigl\{x\in\mathbb{R}^n\,\mid\, 
p_i(x)=0\ \forall i,\ \ q_j(x)\ge 0\ \forall j,\ \ r_k(x)>0\ \forall k\bigr\}.
\end{align}
If the strict inequalities are absent (i.e., $t=0$), then $\cA$ is called \emph{basic closed semialgebraic}.
A \emph{semialgebraic set} is a finite union of basic semialgebraic sets.
\end{definition}
\noindent While the definition extends to any real closed field, in this work we only consider subsets of $\mathbb{R}^n$.

A fundamental property of semialgebraic sets is their stability under projection:

\begin{proposition}[Tarski-Seidenberg, rephrased from Proposition 2.88 in~\cite{BasuPollackRoy2006}]\label{prop:tarski}
Let $\cA$ be a semialgebraic subset of $\mathbb{R}^{n+1}$ and 
$\Pi : \mathbb{R}^{n+1} \to \mathbb{R}^n$ the projection on the first $n$ coordinates.  
Then $\Pi(\cA)$ is a semialgebraic subset of $\mathbb{R}^n$.
\end{proposition}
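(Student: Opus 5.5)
The plan is to reduce the statement to a parametrized one-variable problem and then eliminate the last variable using sign-invariance of univariate polynomial families. Projection commutes with finite unions, so it suffices to treat a basic semialgebraic set
\[
\cA=\{(x,y)\in\mathbb{R}^n\times\mathbb{R} \mid p_i(x,y)=0,\ q_j(x,y)\ge0,\ r_k(x,y)>0\}.
\]
Write $\mathcal P=\{f_1,\dots,f_m\}$ for the finite family of all the $p_i,q_j,r_k$, each viewed as a polynomial in $y$ with coefficients in $\mathbb{R}[x]$. A point $x$ lies in $\Pi(\cA)$ exactly when some $y\in\mathbb{R}$ realizes the required sign vector $(\mathrm{sign} f_1(x,y),\dots,\mathrm{sign} f_m(x,y))$. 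It is therefore enough to show the following: for each sign vector $\tau\in\{-1,0,1\}^m$, the set of $x$ for which $\tau$ is realized by some $y$ is semialgebraic.

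The key step is to show that the set of realizable sign vectors depends on $x$ only through finitely many polynomial sign conditions in $x$. I would do this with the classical subresultant (Sturm--Habicht / Tarski query) machinery.

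\begin{enumerate}
\item For a single polynomial $f(y)$ and a product $g(y)$ of members of $\mathcal P$, the Tarski query is
\[
\mathrm{TaQ}(g,f)=\sum_{y:\,f(y)=0}\mathrm{sign}\, g(y).
\]
It equals the Cauchy index of $f'g/f$.
\item The Cauchy index is computed by sign variations in the signed subresultant sequence of $f$ and $f'g$.
\item The coefficients of that sequence are polynomials in the coefficients of $f$ and $g$, hence polynomials in $x$.
\item To handle parameter values where leading coefficients vanish, I would split into finitely many cases according to which coefficients of each $f_i$ vanish, truncating the polynomials accordingly.
\item Within each case the Tarski queries of all products $\prod_i f_i^{e_i}$, $e_i\in\{0,1,2\}$, are determined by the signs of finitely many polynomials in $x$.
\item Inverting the invertible Vandermonde-type matrix relating Tarski queries to the counts $\#\{y:\ \text{sign vector}=\tau,\ f(y)=0\}$ recovers these counts. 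This gives the number of roots of $f$ realizing each sign condition.
\item To capture points $y$ that are not roots of any $f_i$, I would apply the same count with $f$ replaced by the derivative of the product $\prod_i f_i$, together with the behaviour at $\pm\infty$ read off from leading coefficients. Between consecutive roots of $\prod_i f_i$, such a derivative root exists by Rolle's theorem, and the signs of all $f_i$ are constant on that interval.
\end{enumerate}

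Assembling these pieces, the condition ``$\tau$ is realizable at $x$'' becomes a finite Boolean combination of sign conditions on polynomials in $x$. This is a semialgebraic set, which proves the claim.

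The main obstacle is step 4, the uniformity in $x$: degree drops of the $f_i$ as $x$ varies, and the need to account for nonroot intervals as well as roots. I would handle this entirely by the case split on vanishing leading coefficients and the derivative-of-product trick. An alternative is to cite the existence of a cylindrical algebraic decomposition adapted to $\mathcal P$: over each cell of $\mathbb{R}^n$ the roots of the $f_i$ are given by continuous semialgebraic functions, so $\Pi(\cA)$ is a union of cells. Since the paper uses this only as a cited tool from~\cite{BasuPollackRoy2006}, the appendix could also simply defer to that reference.
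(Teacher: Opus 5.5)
Your proposal is correct and follows the standard Tarski-query proof of the projection theorem: Sylvester's theorem, sign determination by inverting the tensor-power Vandermonde matrix, a finite case split on vanishing coefficients to make the remainder/subresultant computation uniform in $x$, and Rolle's theorem with $(\prod_i f_i)'$ plus signs at $\pm\infty$ for non-root points. This is the argument behind the Basu--Pollack--Roy result the paper cites; the paper gives no proof of its own, and the only routine details your sketch leaves implicit are dropping any $f_i$ that vanishes identically in $y$ in a given case, and replacing $f'g$ by its pseudo-remainder modulo $f$ (using an even power of the leading coefficient of $f$, so the Cauchy index is unchanged) before the subresultant sign count when $\deg(f'g)\ge\deg f$.
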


We further define semialgebraic functions in terms of the semialgebraicity of their epigraphs.
\begin{definition}[Semialgebraic functions]
Let $\cA\subset\mathbb{R}^n$ be semialgebraic.
    A function $f:\cA\to\mathbb{R}$ is said to be \emph{semialgebraic} if its epigraph $\operatorname{epi}(f):=\{(x,t)\,\mid\, x\in \cA,\ t\in\mathbb{R},\ t\ge f(x)\}$ is a semialgebraic subset of $\mathbb{R}^{n+1}$.
\end{definition}

We now show that restricting a semialgebraic function, or taking its sublevel sets, preserves its semialgebraicity.
\begin{lemma}[Restriction and sublevel sets of semialgebraic functions]\label{lm:sublevel}
Let $f:\mathbb{R}^n\to\mathbb{R}$ be semialgebraic and let $A\subset\mathbb{R}^n$ be a semialgebraic set.
\begin{enumerate}[(i)]
\item The restriction $f|_A:A\to\mathbb{R}$ is semialgebraic.
\item For any $r\in\mathbb{R}$, the sublevel set $\{x\in\mathbb{R}^n\,\mid\,f(x)\le r\}$ is semialgebraic.
\end{enumerate}
\end{lemma}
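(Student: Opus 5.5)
Both parts reduce to writing the relevant set as the image of a semialgebraic set under a coordinate projection, and then invoking Proposition~\ref{prop:tarski}. Projecting off several coordinates is just repeated use of the one-coordinate version. Throughout, I will use that semialgebraic sets are closed under finite intersections, finite unions and Cartesian products with $\mathbb{R}^k$. This is immediate from the definition: take the finite unions of basic sets, distribute intersections over them, and concatenate the lists of defining polynomials. Products with $\mathbb{R}^k$ are obtained by regarding each polynomial as a polynomial in more variables that ignores the new ones.

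For (i), the epigraph of the restriction is
\begin{equation*}
\operatorname{epi}(f|_A)=\{(x,t)\mid x\in A,\ t\ge f(x)\}=\operatorname{epi}(f)\cap (A\times\mathbb{R}).
\end{equation*}
Here $\operatorname{epi}(f)$ is semialgebraic by hypothesis. The set $A\times\mathbb{R}$ is semialgebraic because it is defined by the same polynomial systems as $A$, viewed in $\mathbb{R}[x_1,\dots,x_n,t]$. The intersection of two semialgebraic sets is semialgebraic, so (i) follows.

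For (ii), note that $f(x)\le r$ holds exactly when $(x,r)\in\operatorname{epi}(f)$. Hence
\begin{equation*}
\{x\mid f(x)\le r\}=\Pi\bigl(\operatorname{epi}(f)\cap(\mathbb{R}^n\times\{r\})\bigr).
\end{equation*}
The slab $\mathbb{R}^n\times\{r\}$ is basic closed semialgebraic, cut out by the single polynomial equation $t-r=0$. So the intersection is semialgebraic, and Proposition~\ref{prop:tarski} gives that its projection is semialgebraic.

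An alternative for (ii) avoids projection altogether: substitute $t=r$ into each defining polynomial of the epigraph, which gives polynomials in $x$ alone. Either route is essentially immediate.

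The only mildly delicate point is the closure of semialgebraic sets under intersection. The paper defines these sets as finite unions of basic sets rather than via a Boolean-algebra characterization, so I will state the distributivity argument explicitly. Beyond that, I do not anticipate any real obstacle.
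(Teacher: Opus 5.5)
Your proposal is correct and follows the paper's own proof: part (i) uses $\operatorname{epi}(f|_A)=\operatorname{epi}(f)\cap(A\times\mathbb{R})$, and part (ii) projects $\operatorname{epi}(f)\cap(\mathbb{R}^n\times\{r\})$ and invokes Proposition~\ref{prop:tarski}. Your explicit check that semialgebraic sets are closed under intersections and products with $\mathbb{R}^k$ is a step the paper uses without comment, and your alternative route for (ii), substituting $t=r$, is also valid.
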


\begin{proof}
(i) By definition, the epigraph,
\begin{gather}
    \operatorname{epi}(f|_A)=\{(x,t)\in A\times\mathbb{R}\,\mid\, t\ge f(x)\}
=\operatorname{epi}(f)\cap (A\times \mathbb{R}),
\end{gather}
is an intersection of semialgebraic sets, hence semialgebraic.

(ii) Consider the hyperplane $H_r:=\mathbb{R}^n\times\{r\}$, which is semialgebraic. Then
\begin{gather}
    \{x\,\mid\, f(x)\le r\}=\pi_n\bigl(\operatorname{epi}(f)\cap H_r\bigr),
\end{gather}
where $\pi_n:\mathbb{R}^{n+1}\to\mathbb{R}^{n}$ is the trivial projection. 
The intersection $\operatorname{epi}(f)\cap H_r$ is semialgebraic, and its projection is also semialgebraic by Proposition~\ref{prop:tarski}.
\end{proof}

The most critical feature of semialgebraicity we employ in our robustness proof is the quantitative control it provides near common zero sets, formalized by the following inequality~\cite{TaL1995,Solerno1991}

\begin{theorem}[{\L}ojasiewicz inequality]\label{thm:loja}
If $\cA\subset\mathbb{R}^n$ is a compact semialgebraic set and $f,g:\cA\to\mathbb{R}$ are continuous semialgebraic functions with $f^{-1}(0)\subset g^{-1}(0)$, then there exists a constant $c>0$ and a rational number $p\ge 1$ such that $|g(x)|^p\le c\,|f(x)|$ for any $x\in\cA$.
\end{theorem}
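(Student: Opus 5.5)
The plan is to reduce the statement to a one-variable problem through the auxiliary ``worst-case'' function
\begin{equation*}
\varphi(t)\coloneqq\inf\{\,|f(x)| \;:\; x\in\cA,\ |g(x)|\ge t\,\},\qquad t\in(0,T],\quad T\coloneqq\max_{\cA}|g| .
\end{equation*}
If $T=0$ there is nothing to prove, so assume $T>0$.

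\textbf{Positivity of $\varphi$.} For $0<t\le T$, the set $\cA_t\coloneqq\{x\in\cA: |g(x)|\ge t\}$ is a closed subset of the compact set $\cA$, because $g$ is continuous. It is also nonempty, so the infimum is attained at some $x_t\in\cA_t$. If $\varphi(t)=0$, then $f(x_t)=0$. The hypothesis $f^{-1}(0)\subset g^{-1}(0)$ then gives $g(x_t)=0$, which contradicts $|g(x_t)|\ge t>0$. Hence $\varphi(t)>0$ for every $t\in(0,T]$. Moreover $\varphi$ is nondecreasing in $t$, since $\cA_t$ shrinks as $t$ grows.

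\textbf{Semialgebraicity of $\varphi$.} The epigraph of $\varphi$ can be written as a first-order formula over the reals:
\begin{equation*}
s\ge\varphi(t)\iff \forall \eta>0\ \exists x\ \bigl(x\in\cA,\ |g(x)|\ge t,\ |f(x)|< s+\eta\bigr).
\end{equation*}
Every atom in this formula is semialgebraic, since the graphs of $f$ and $g$ are semialgebraic. Repeated application of Proposition~\ref{prop:tarski}, using projections for $\exists$ and complements for $\forall$, together with Lemma~\ref{lm:sublevel}, then shows that $\varphi:(0,T]\to\mathbb{R}$ is semialgebraic.

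\textbf{Growth near $0$ (main obstacle).} This is the only genuinely nontrivial input. I would invoke the one-variable structure theorem for semialgebraic functions: on some interval $(0,t_0)$, $\varphi$ coincides with a convergent Puiseux series
\begin{equation*}
\varphi(t)=a\,t^{q}+\text{(higher-order terms)}, \qquad q\in\mathbb{Q}.
\end{equation*}
This follows from the fact that the graph of $\varphi$ lies in the zero set of a nonzero polynomial $P(t,s)$, combined with the Newton--Puiseux theorem for the branches of $P=0$. Since $\varphi>0$ and is bounded above, $a>0$ and $q\ge 0$. Shrinking $t_0$, we get $\varphi(t)\ge \tfrac{a}{2}t^{q}$ on $(0,t_0]$. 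If one prefers to avoid Puiseux, an alternative is the monotonicity theorem plus the ``growth dichotomy'': a positive semialgebraic function of one variable is eventually bounded below by some $t^{q}$.

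\textbf{Assembling the inequality.} Set $p\coloneqq\max(q,1)$, which is rational and at least $1$. Consider $x\in\cA$ according to the size of $|g(x)|$.
\begin{itemize}
\item If $g(x)=0$, the inequality $|g(x)|^p\le c\,|f(x)|$ holds trivially.
\item If $0<|g(x)|\le t_0$, apply the definition of $\varphi$ with $t=|g(x)|$. This gives $|f(x)|\ge\varphi(|g(x)|)\ge\tfrac{a}{2}|g(x)|^{q}\ge\tfrac{a}{2}T^{q-p}|g(x)|^{p}$, where the last step uses $|g|\le T$.
\item If $|g(x)|>t_0$, then monotonicity gives $|f(x)|\ge\varphi(t_0)>0$, while $|g(x)|^p\le T^p$.
\end{itemize}
Taking $c\coloneqq\max\{2T^{p-q}/a,\ T^p/\varphi(t_0)\}$ covers all cases and yields $|g(x)|^p\le c\,|f(x)|$ on $\cA$, which proves the theorem.
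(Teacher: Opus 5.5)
Your proof is correct. The paper gives no proof of this statement: it quotes the inequality from \cite{TaL1995,Solerno1991} and uses it only as a black box in Lemma~\ref{lm:Loja}, so your argument is an independent route. What you wrote is the classical one-variable reduction (cf.\ \cite{bochnak1998real}). The worst-case function $\varphi(t)=\min\{|f(x)| : x\in\cA,\ |g(x)|\ge t\}$ is positive (by compactness and $f^{-1}(0)\subset g^{-1}(0)$), nondecreasing, and semialgebraic. A one-variable growth fact then gives $\varphi(t)\ge\tfrac{a}{2}t^{q}$ near $0$ with $q\in\mathbb{Q}_{\ge 0}$. Your case split, the choice $p=\max(q,1)$, and the constant $c=\max\{2T^{p-q}/a,\ T^{p}/\varphi(t_0)\}$ all check out.

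Your route uses only the tools the appendix already sets up (Proposition~\ref{prop:tarski}, Lemma~\ref{lm:sublevel}) plus one fact about semialgebraic functions of one variable, so it would make the appendix essentially self-contained. The cited literature gives more than the paper needs, e.g.\ effective bounds on $p$ in terms of degrees and dimension. Your argument yields only the existence of some rational $p$, but that is all the paper uses to define $\alpha=1/p$.

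Three small tidy-ups:
\begin{enumerate}[(i)]
\item Since you already showed the infimum is attained, $s\ge\varphi(t)$ is equivalent to the existence of $x\in\cA_t$ with $|f(x)|\le s$. A single projection therefore suffices, and the $\forall\eta$ quantifier is unnecessary.
\item The paper defines semialgebraic functions through epigraphs, so the semialgebraicity of the graphs of $f,g$ that you invoke needs one line. The graph is the epigraph minus the strict epigraph, and the latter is the projection of $\{(x,t,s) : (x,s)\in\operatorname{epi}(f),\ s<t\}$.
\item The Puiseux step hides a branch-selection argument: you need that each set $\{t : \varphi(t)=\xi_i(t)\}$ is a finite union of points and intervals. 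You can avoid this entirely. Take a nonzero polynomial $P$ with $P(t,1/\varphi(t))=0$ and apply Cauchy's root bound. Since the leading coefficient $a_d(t)$ satisfies $|a_d(t)|\gtrsim t^{k}$ near $0$, this gives $\varphi(t)\ge c'\,t^{k}$ with an integer exponent.
\end{enumerate}
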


\section{Proof of universal robustness}
The UDA property is formulated for \emph{exactly} matching reduced density matrices.
In practice, however, marginals are obtained from finite data and are therefore inevitably noisy.
This raises a more general question regarding the unique determination: if a state $\sigma$ reproduces the marginals of a UDA state $\rho$ up to a small error, is $\sigma$ always close to $\rho$?
In this section, we provide a concrete derivation proving the universal robustness of UDA states.

\subsection{Semialgebraicity in Hermitian space}

To apply the semialgebraic tools introduced previously, we must verify that the relevant sets and functions used in our proof are semialgebraic.
Since these sets live in the complex space $\Herm(\cH)$ based on a $d$-dimensional Hilbert space $\cH$, we can fix a real-linear isomorphism $\Phi:\Herm(\cH)\to\mathbb{R}^{d^2}$ so that semialgebraicity is understood in the standard Euclidean space.

\begin{definition}[Vectorization of $\Herm(\cH)$]\label{def:Phi}
Let $B$ be the Hilbert-Schmidt orthonormal basis of $\Herm(\cH)$,
\begin{gather}\label{eq:basis_B}
B\coloneqq\{E_{ii}\}_{i=1}^d\ \cup\
\frac{1}{\sqrt{2}}\{E_{ij}+E_{ji}\}_{1\le i<j\le d}\ \cup\
\frac{1}{\sqrt{2}}\{\mathrm{i}\,(E_{ij}-E_{ji})\}_{1\le i<j\le d},
\end{gather}
where $E_{ij}$ denotes the matrix unit.
For any $X=\sum_{b\in B}x_b\,b\in\Herm(\cH)$, define the vectorization $\Phi(X)\coloneqq [x_b]_{b\in B}$.
\end{definition}

\begin{lemma}[Properties of the map $\Phi$]\label{lm:Phi-iso-Lip}
The map $\Phi$ from Definition~\ref{def:Phi} is a real-linear isomorphism.
Its inverse is the real-linear map $\Phi^{-1}$ such that $\Phi^{-1}\bigl([x_b]_{b\in B}\bigr)=\sum_{b\in B} x_b\, b$.
Equip $\Herm(\mathcal H)$ with the trace norm $\|\cdot\|_1$ and $\mathbb{R}^{d^2}$ with the Euclidean norm $\|\cdot\|_{\ell_2}$.
Then, both $\Phi$ and its inverse $\Phi^{-1}$ are Lipschitz continuous. 
\end{lemma}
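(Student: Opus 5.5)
The plan is to check linearity and bijectivity directly from the definition, then read off Lipschitz bounds from norm equivalence on $\Herm(\cH)$.

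\emph{Linearity and bijectivity.} The family $B$ in Eq.~\eqref{eq:basis_B} has $d+2\binom{d}{2}=d^2$ elements. I will first check that it is a real basis of $\Herm(\cH)$. Every Hermitian $X$ has real diagonal entries $X_{ii}$ and off-diagonal entries satisfying $X_{ji}=\overline{X_{ij}}$. Hence
\begin{equation*}
X=\sum_i X_{ii}E_{ii}+\sum_{i<j}\Bigl[\sqrt2\,\mathrm{Re}X_{ij}\,\tfrac{1}{\sqrt2}(E_{ij}+E_{ji})-\sqrt2\,\mathrm{Im}X_{ij}\,\tfrac{1}{\sqrt2}\mathrm{i}(E_{ij}-E_{ji})\Bigr],
\end{equation*}
so $B$ spans $\Herm(\cH)$ over $\mathbb{R}$.

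Linear independence follows from Hilbert--Schmidt orthonormality, which I verify by direct computation. For example, $\Tr\bigl[(E_{ij}+E_{ji})^2\bigr]=2$, and $\Tr\bigl[(E_{ij}+E_{ji})\,\mathrm{i}(E_{ij}-E_{ji})\bigr]=\mathrm{i}(1-1)=0$. Consequently the coefficients are uniquely determined, $x_b=\Tr(bX)$, and these are real because both $b$ and $X$ are Hermitian. Thus $\Phi(X)=[\Tr(bX)]_{b\in B}$ is manifestly real-linear. Its inverse is the synthesis map $[x_b]\mapsto\sum_b x_b\, b$, which is also linear, and the two maps compose to the identity in both orders by the uniqueness of basis expansions.

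\emph{Lipschitz continuity.} By orthonormality, $\|\Phi(X)\|_{\ell_2}=\|X\|_2$, the Hilbert--Schmidt norm, so $\Phi$ is an isometry from $(\Herm,\|\cdot\|_2)$ to $(\mathbb{R}^{d^2},\|\cdot\|_{\ell_2})$. It remains to compare the Hilbert--Schmidt and trace norms. Applying Cauchy--Schwarz to the singular values gives
\begin{equation*}
\|X\|_2\le\|X\|_1\le\sqrt d\,\|X\|_2 .
\end{equation*}
Combining this with the isometry yields $\|\Phi(X)\|_{\ell_2}\le\|X\|_1$ and $\|\Phi^{-1}(x)\|_1\le\sqrt d\,\|x\|_{\ell_2}$. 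Because both maps are linear, these norm bounds give Lipschitz continuity with constants $1$ and $\sqrt d$ respectively. (Alternatively, one could cite the general fact that any linear map between finite-dimensional normed spaces is bounded.)

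No step is a real obstacle. The only point needing care is the orthonormality computation for the mixed pairs of off-diagonal basis elements, since it is what makes $\Phi$ an isometry and gives explicit constants.
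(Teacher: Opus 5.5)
Your proposal is correct and follows essentially the same route as the paper: real-linearity and bijectivity come from the unique expansion in the Hilbert--Schmidt orthonormal basis $B$, and Lipschitz continuity comes from $\|\Phi(X)\|_{\ell_2}=\|X\|_{\mathrm{HS}}$ together with $\|X\|_{\mathrm{HS}}\le\|X\|_1\le\sqrt d\,\|X\|_{\mathrm{HS}}$, giving constants $1$ and $\sqrt d$ (you additionally verify that $B$ is an orthonormal basis, which the paper takes for granted). One harmless slip: in your spanning display the coefficient of $\tfrac{1}{\sqrt2}\mathrm{i}(E_{ij}-E_{ji})$ should be $+\sqrt2\,\mathrm{Im}X_{ij}$, which is what your own formula $x_b=\Tr(bX)$ gives, but this does not affect the argument.
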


\begin{proof}
Because $B$ constitutes an orthonormal basis for the real vector space of Hermitian matrices, every $X \in \Herm(\cH)$ admits a unique expansion $X = \sum_{b \in B} x_b b$ with real coefficients $x_b$. This uniqueness ensures $\Phi$ is well-defined and bijective, with an explicit inverse $\Phi^{-1}([x_b]_{b\in B}) = \sum_{b\in B} x_b b$. For any $X, Y \in \Herm(\cH)$ and scalars $\alpha, \beta \in \mathbb{R}$:
\begin{gather}
    \Phi(\alpha X + \beta Y) = \Phi\left(\sum_{b\in B} (\alpha x_b + \beta y_b) b\right) = [\alpha x_b + \beta y_b]_{b\in B} = \alpha\Phi(X) + \beta\Phi(Y).
\end{gather}
This establishes that $\Phi$ is a real-linear isomorphism.

For the Lipschitz continuity, fix $X,Y\in\Herm(\mathcal H)$ and set $Z:=X-Y=\sum_{b\in B}z_bb$.
With respect to the orthonormal basis $B$ regarding the Hilbert-Schmidt inner product, we have
\begin{gather}
\|\Phi(X)-\Phi(Y)\|_{\ell_2}=\|\Phi(Z)\|_{\ell_2}=\sqrt{\sum_{b\in B}z_b^2}=\|Z\|_{\mathrm{HS}}\leq\norm{X-Y}_1,
\end{gather}
where $\norm{\cdot}_{\mathrm{HS}}$ denotes the Schatten 2-norm (Hilbert-Schmidt norm), and $\norm{X}_\mathrm{HS}\leq\norm{X}_1$for all $X\in\Herm(\cH)$ .
Namely, $\Phi$ is Lipschitz and hence continuous.

For the inverse map $\Phi^{-1}$, let $x,y\in\mathbb{R}^{d^2}$ and set $z:=x-y$.
Since Cauchy-Schwarz inequality $\|X\|_1\le \sqrt{d}\|X\|_{\mathrm{HS}}$ holds for all $X\in\Herm(\mathcal H)$, we obtain
\begin{gather}
\norm{\Phi^{-1}(x)-\Phi^{-1}(y)}_1=\|\Phi^{-1}(z)\|_1\le \sqrt{d}\|\Phi^{-1}(z)\|_{\mathrm{HS}} = \sqrt{d}\|\Phi(\Phi^{-1}(z))\|_2 =\sqrt{d}\|x-y\|_2.
\end{gather}
\end{proof}

With this isomorphism, we can treat sets within Hermitian space as subsets of $\mathbb{R}^{d^2}$, where semialgebraicity is defined in the standard way.
Concretely, we define the coordinate images of relevant sets under $\Phi$:
\begin{gather}\label{eq:images}
    \cD'\coloneqq\Phi\!\bigl(\cD(\cH)\bigr),\qquad
    \cD_0'(\rho)\coloneqq\Phi\!\bigl(\cD_0(\rho)\bigr),\qquad
    \cW_{\cS}'\coloneqq\Phi\!\bigl(\cW_{\cS}\bigr).
\end{gather}

\begin{proposition}\label{prop:semi-set}
    For any $\rho\in\cD(\cH)$ and set of subsystems $\cS$, all of $\cD'$, $\cD_0'(\rho)$, and $\cW'_{\cS}$ defined in Eq.~\eqref{eq:images} are basic closed semialgebraic sets in $\mathbb{R}^{d^2}$. 
\end{proposition}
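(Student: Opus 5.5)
The plan is to write each of the three sets explicitly as the solution set of finitely many polynomial equalities and non-strict polynomial inequalities in the coordinates $x=\Phi(X)\in\mathbb{R}^{d^2}$. By Lemma~\ref{lm:Phi-iso-Lip}, $X=\Phi^{-1}(x)=\sum_b x_b\, b$. Each matrix entry of $X$ therefore has real and imaginary parts that are real-linear functions of $x$. Consequently, any real polynomial expression in the real and imaginary parts of the entries of $X$ is a polynomial in $x$.

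\emph{Step 1: the state set $\cD'$.} The trace condition $\Tr X=1$ is the single linear equation $\sum_i x_{E_{ii}}-1=0$. For positive semidefiniteness, I would use the characterization that a Hermitian matrix is PSD iff all its principal minors are nonnegative. For each $I\subseteq[d]$, the minor $\det X_{I,I}$ is a real number, since it is the determinant of a Hermitian matrix. Expanding the determinant, its real part is a real polynomial in the real and imaginary parts of the entries, hence a polynomial $q_I(x)$. Thus $\cD'=\{x: p(x)=0,\ q_I(x)\ge0\ \forall I\}$ is basic closed semialgebraic.

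As an alternative that avoids the subtlety about taking real parts, I could use the coefficients of the characteristic polynomial $\det(\lambda I - X)=\sum_k (-1)^k e_k(X)\lambda^{d-k}$. A Hermitian matrix is PSD iff $e_k(X)\ge0$ for all $k$, and each $e_k$ is a sum of principal minors, hence a real polynomial in $x$. Either route works, and I would state the minors version with a citation to the standard criterion.

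\emph{Step 2: the difference set $\cD_0'(\rho)$.} Since $\delta=\sigma-\rho$ and $\Phi$ is linear, $\cD_0'(\rho)=\cD'-\Phi(\rho)$ is a translate of $\cD'$. Its defining polynomials are $p(y+\Phi(\rho))$ and $q_I(y+\Phi(\rho))$. These are still polynomials in $y$ because composing with an affine map preserves polynomiality.

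\emph{Step 3: the kernel $\cW'_\cS$.} The trace condition is one linear equation. For each $S\in\cS$, the partial trace $X\mapsto X_S$ is real-linear. Expanding $X_S$ in the analogous Hermitian basis of $\Herm(\cH_S)$, the condition $X_S=0$ becomes finitely many real linear equations in $x$. A linear subspace is the zero set of linear polynomials, so it is basic closed semialgebraic with no inequalities.

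\emph{Main obstacle.} The only nontrivial point is Step 1: justifying that positivity is captured by finitely many real polynomial inequalities, and that the complex determinants become genuinely real polynomials in the real coordinates. I would handle this by noting that Hermitian principal minors are real, and that taking the real part of a complex polynomial in linear forms of $x$ yields a real polynomial.
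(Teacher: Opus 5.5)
Your proposal is correct and follows essentially the same route as the paper: a linear trace equation plus nonnegativity of all principal minors (Sylvester's criterion) for $\cD'$, a translate for $\cD_0'(\rho)$, and finitely many linear equations for the kernel $\cW'_\cS$. Two of your additions go slightly beyond the paper but are not needed: the explicit remark that each Hermitian minor equals its real part, so it is a genuine real polynomial in $x$, and the characteristic-polynomial alternative.
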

\begin{proof}
    For $\cD'$, the definition requires unit trace and positive semidefiniteness:
    \begin{gather}
        \cD'=\{x\in\mathbb{R}^{d^2}\,|\,\Phi^{-1}(x)\succeq0, \Tr \Phi^{-1}(x)=1\}.
    \end{gather}
    The unit trace constraint is a linear (hence polynomial) equation in the coordinates $x$.
    By Sylvester’s criterion~\cite{horn2012matrix}, the condition $\Phi^{-1}(x) \succeq 0$ is equivalent to all $2^d$ principal minors of $\Phi^{-1}(x)$ being nonnegative. 
    Since $\Phi^{-1}(x)$ is a Hermitian matrix, its principal minors are determinants of Hermitian submatrices, which are inherently real numbers. 
    Because the entries of $\Phi^{-1}(x)$ are linear in $x$, these determinants are exactly real polynomials in $x$. 
    Thus, $\cD'$ is defined by finite polynomial equalities and inequalities, making it a basic closed semialgebraic set.
    The same logic applies to the shifted set $\cD'_0(\rho)$.

    For $\cW'_{\cS} = \ker(\cM_\cS \circ \Phi^{-1})$, the kernel of a finite-dimensional linear map is the common zero set of finitely many linear functions, which is inherently a basic closed semialgebraic set. 
\end{proof}

We next consider the functions used in our robustness estimates: the trace norm and the distance to the kernel subspace. Define the norm function $N(x) \coloneqq \|\Phi^{-1}(x)\|_1$ for $x\in\mathbb{R}^{d^2}$, and for a closed set $\Omega\subset\mathbb{R}^{d^2}$, the distance function $D_{\Omega}(x) \coloneqq \inf_{y\in\Omega} N(x-y)$.

\begin{proposition}\label{prop:semi-func}
For any set of subsystems $\cS$, the norm $N(\cdot)$ and the distance $D_{\cW_\cS'}(\cdot)$ are semialgebraic functions.
\end{proposition}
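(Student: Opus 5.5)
To prove Proposition~\ref{prop:semi-func}, I will show that each epigraph is a semialgebraic subset of $\mathbb{R}^{d^2+1}$, building it from polynomial conditions and quantifier elimination (Proposition~\ref{prop:tarski}).

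\textbf{The norm $N$.} The trace norm of a Hermitian $X$ has the variational form
\begin{gather}
    \|X\|_1=\min\{\Tr(P)+\Tr(Q)\,\mid\, X=P-Q,\ P,Q\succeq 0\}.
\end{gather}
Hence
\begin{gather}
    \operatorname{epi}(N)=\Pi\bigl(\{(x,t,p,q)\,\mid\, \Phi^{-1}(x)=\Phi^{-1}(p)-\Phi^{-1}(q),\ \Phi^{-1}(p)\succeq0,\ \Phi^{-1}(q)\succeq0,\ t\ge \Tr\Phi^{-1}(p)+\Tr\Phi^{-1}(q)\}\bigr),
\end{gather}
where $\Pi$ projects away $(p,q)$. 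Each condition inside is semialgebraic:
\begin{itemize}
    \item the equality and trace conditions are linear in the coordinates;
    \item the positivity conditions reduce, by Sylvester's criterion exactly as in Proposition~\ref{prop:semi-set}, to polynomial inequalities on principal minors.
\end{itemize}
Applying Proposition~\ref{prop:tarski} repeatedly, once per eliminated coordinate, shows that $\operatorname{epi}(N)$ is semialgebraic.

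I must check that the infimum is attained, so that the projection gives exactly $\{t\ge N(x)\}$ and not a strict version. The Jordan decomposition $X=X_+-X_-$ attains it. An alternative route is to write $\|X\|_1=\max\{\Tr(UX)\,\mid\,-I\preceq U\preceq I\}$ and express the epigraph through a universal quantifier, using that complements of semialgebraic sets are semialgebraic. The $P,Q$ route is cleaner because it needs only existential quantifiers.

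\textbf{The distance $D_{\cW_\cS'}$.} Here
\begin{gather}
    \operatorname{epi}(D_{\cW'_\cS})=\{(x,t)\,\mid\,\exists\, y\in\cW'_\cS:\ N(x-y)\le t\}.
\end{gather}
The set $\{(x,y,t)\mid y\in\cW'_\cS,\ (x-y,t)\in\operatorname{epi}(N)\}$ is the preimage of $\operatorname{epi}(N)$ under a linear map, intersected with the basic closed set $\mathbb{R}^{d^2}\times\cW'_\cS\times\mathbb{R}$ from Proposition~\ref{prop:semi-set}. It is therefore semialgebraic, and projecting out $y$ via Proposition~\ref{prop:tarski} gives a semialgebraic set.

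Again the delicate point is attainment of the infimum, so that the projected set equals the closed epigraph. It holds because $N$ is a norm, so $y\mapsto N(x-y)$ is continuous and coercive on the closed subspace $\cW'_\cS$, and a minimizer exists. I also need to note that preimages of semialgebraic sets under polynomial (here linear) maps are semialgebraic; this is immediate by substituting the map into the defining polynomials.

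I expect the main obstacle to be these bookkeeping points (attainment of infima and closure under linear preimages) rather than any substantive difficulty.
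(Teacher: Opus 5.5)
Your proposal is correct and follows essentially the same route as the paper. For $N$, you project the set defined by the decomposition $X=P-Q$ with $P,Q\succeq 0$, which is semialgebraic by Sylvester's criterion, and apply Tarski--Seidenberg; for the distance, you project $\{(x,y,t): y\in\cW'_\cS,\ N(x-y)\le t\}$, using that the infimum over the closed subspace is attained, and your explicit attainment and linear-preimage checks just spell out details the paper states briefly or leaves implicit.
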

\begin{proof}
Using the standard semidefinite programming (SDP) characterization of the trace norm, we have 
\begin{gather}
    \norm{X}_1=\min\{\Tr(U)+\Tr(V):\, X=U-V,\, U\succeq0,\,V\succeq0\}.
\end{gather}
We can represent the epigraph of $N$ by first defining the set:
\begin{gather}\label{eq:epi_proj}
    \cT\coloneqq\{(x,u,v,t)\in\mathbb{R}^{d^2}\times \mathbb{R}^{d^2}\times \mathbb{R}^{d^2}\times \mathbb{R}\mid\,x=u-v,\,\Phi^{-1}(u)\succeq0,\,\Phi^{-1}(v)\succeq0,\,\Tr(\Phi^{-1}(u+v))\leq t\}.
\end{gather}
As established via Sylvester's criterion, the semidefinite constraints are polynomial inequalities.
Since trace and vector addition are linear operations, all constraints in $\cT$ are polynomial, making $\cT$ semialgebraic.
Let $\pi_{x,t}$ be the projection onto the $(x,t)$ coordinates. We have $\operatorname{epi}(N) =\{(x,t)\,\mid\,\norm{\Phi^{-1}(x)}_1\leq t\}= \pi_{x,t}(\cT)$. 
By the Tarski-Seidenberg principle in Proposition~\ref{prop:tarski}, the projection of a semialgebraic set remains semialgebraic, so $N$ is a semialgebraic function.

As for the distance function, consider the set
\begin{gather}
    \cG \coloneqq \{(x,y,t)\in \mathbb{R}^{d^2}\times \mathbb{R}^{d^2}\times\mathbb{R}\mid\, y\in \cW_\cS',\, N(x-y)\le t\}.
\end{gather}
Since $\cW_\cS'$ is semialgebraic from Proposition~\ref{prop:semi-set} and $N(x-y) \le t$ is simply a shifted epigraph condition of the semialgebraic function $N$, their intersection $\cG$ is semialgebraic.

Because $\cW_\cS'$ is a closed subspace, the infimum in the distance function can be attained.
Namely, we obtain
\begin{gather}
    t\ge D_{\cW_\cS'}(x)\quad\Longleftrightarrow\quad \exists\,y\in \cW_\cS'\ \text{such that}\ N(x-y)\le t.
\end{gather}
Consequently, $\operatorname{epi}(D_{\cW_\cS'})$ is the projection $\pi_{x,t}(\cG)$ onto $(x,t)$. 
By Proposition~\ref{prop:tarski}, $\operatorname{epi}(D_{\cW_\cS'})=\pi_{x,t}(\cG)$ is semialgebraic. 
This completes the proof.
\end{proof}

\subsection{Proof of power-law robustness}
Here, we provide a concrete derivation proving the universal robustness of UDA states. 
For a UDA state $\rho$ with respect to the marginal support $\cS$, our proof relies on the distance of any traceless Hermitian matrix from the invisible subspace $\cW_\cS$: first, we bound this distance using the marginal norm of the matrix; second, we leverage semialgebraic geometry to relate this geometric distance to the total size when assuming the matrix is a valid state difference.

We begin by establishing the first ingredient.

\begin{lemma}\label{lm:dis_margin}
    There exists a constant $C_1$ such that for every traceless Hermitian matrix $X\in\cV$,  
    \begin{gather}
        \dist(X,\cW_\cS)\leq C_1\norm{\cM_\cS(X)}_\cS\coloneqq\sum_{S\in\cS}\norm{\cM_\cS(X)_S}_1.
    \end{gather}
\end{lemma}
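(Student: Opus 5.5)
The plan is to make the bounded-inverse argument sketched in Eq.~\eqref{eq:bounded_inverse} rigorous, using only finite-dimensional linear algebra.

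First, equip the quotient space $\cV/\cW_\cS$ with $\norm{[X]}_{\cV/\cW_\cS}\coloneqq\dist(X,\cW_\cS)$ and check that this is a genuine norm.
\begin{itemize}
\item Absolute homogeneity and the triangle inequality follow directly from the definition as an infimum over the linear subspace $\cW_\cS$.
\item Definiteness uses that $\cW_\cS$ is closed. It is a subspace of the finite-dimensional space $\cV$, and the infimum is attained, as already noted in the proof of Proposition~\ref{prop:semi-func}. So $\dist(X,\cW_\cS)=0$ forces $X\in\cW_\cS$, i.e.\ $[X]=0$.
\end{itemize}

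Next, define $\widetilde\cM_\cS([X])\coloneqq\cM_\cS(X)$. This is well defined because $\cW_\cS=\ker\cM_\cS$. It is linear and injective by construction, hence a bijection onto $\mathrm{Im}(\cM_\cS)$. Equip the image with the marginal norm $\norm{\cdot}_\cS$. This is a norm on the product space $\prod_{S\in\cS}\Herm(\cH_S)$, since it is a finite sum of trace norms and vanishes only when every component vanishes. It therefore restricts to a norm on the subspace $\mathrm{Im}(\cM_\cS)$.

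The key step is bounding the inverse map. $\widetilde\cM_\cS^{-1}:\mathrm{Im}(\cM_\cS)\to\cV/\cW_\cS$ is a linear map between finite-dimensional normed spaces, so it is automatically bounded. One can see this via equivalence of norms, or directly by compactness. Suppose the unit sphere $\{m\in\mathrm{Im}(\cM_\cS):\norm{m}_\cS=1\}$ is compact. A linear map on a finite-dimensional space is continuous, because it is continuous with respect to any fixed basis-induced norm and all norms are equivalent. Hence $m\mapsto\norm{\widetilde\cM_\cS^{-1}(m)}_{\cV/\cW_\cS}$ attains a finite maximum $C_1$ on that sphere. Homogeneity then gives, for all $X\in\cV$,
\begin{gather}
\dist(X,\cW_\cS)=\norm{\widetilde\cM_\cS^{-1}(\cM_\cS(X))}_{\cV/\cW_\cS}\le C_1\norm{\cM_\cS(X)}_\cS .
\end{gather}
If $\mathrm{Im}(\cM_\cS)=\{\mathbf 0\}$, then $\cW_\cS=\cV$, so $\dist(X,\cW_\cS)=0$ and any $C_1$ works.

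The only potential obstacle is justifying that both quotient objects are honest finite-dimensional normed spaces, namely definiteness of the quotient norm and well-definedness of $\widetilde\cM_\cS$. Both are handled above by closedness of $\cW_\cS$ and by $\cW_\cS=\ker\cM_\cS$. Everything else is standard.

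As an alternative that avoids quotients entirely, one could pick a complement $\cV=\cW_\cS\oplus\cW_\cS^\perp$ (Hilbert--Schmidt orthogonal). Then $\cM_\cS$ is injective on $\cW_\cS^\perp$, so $\norm{\cM_\cS(Z)}_\cS\ge c\norm{Z}_1$ on the compact unit sphere of $\cW_\cS^\perp$, with $c>0$ by continuity and injectivity. Writing $X=Y+Z$ with $Y\in\cW_\cS$ and $Z\in\cW_\cS^\perp$ gives
\begin{gather}
\dist(X,\cW_\cS)\le\norm{Z}_1\le c^{-1}\norm{\cM_\cS(Z)}_\cS=c^{-1}\norm{\cM_\cS(X)}_\cS .
\end{gather}
This yields the lemma with $C_1=1/c$.
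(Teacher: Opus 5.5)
Your proposal is correct and follows the paper's proof: you use the quotient norm $\dist(\cdot,\cW_\cS)$, the induced bijection $\widetilde\cM_\cS$ onto $\mathrm{Im}(\cM_\cS)$, and boundedness of its inverse from finite-dimensionality, and your extra checks (definiteness of the quotient norm, the degenerate case $\mathrm{Im}(\cM_\cS)=\{\mathbf 0\}$) fill in details the paper leaves implicit. Your alternative via the Hilbert--Schmidt complement $\cW_\cS^\perp$ is also valid; it avoids quotients and gives the explicit constant $C_1=1/c$ with $c=\min\{\norm{\cM_\cS(Z)}_\cS : Z\in\cW_\cS^\perp,\ \norm{Z}_1=1\}$ (when $\cW_\cS^\perp\neq\{\mathbf 0\}$).
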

\begin{proof}
Induced by the marginal map $\cM_\cS$ and its kernel (invisible subspace) $\cW_\cS$, we define the quotient space $\cV/\cW_\cS\coloneqq\{[X]= X+\cW_\cS\,\mid\,X\in\cV\}$ and define a corresponding quotient norm:
\begin{gather}
    \norm{[X]}_{\cV/\cW_\cS}\coloneqq\inf\{\norm{X-Y}_1:\,Y\in\cW_\cS\}=\dist(X,\cW_\cS),\quad\forall\,X\in\cV,
\end{gather}
    We construct the quotient version of the marginal map $\tilde\cM_\cS:\,\cV/\cW_\cS\rightarrow \Im(\cM_\cS)$ defined by
    \begin{gather}
        \tilde\cM_\cS([X])\coloneqq\cM_\cS(X).
    \end{gather}
    By definition, $\tilde\cM_\cS$ is a linear bijection, hence a linear isomorphism.
    
    Since both $\cV/\cW_\cS$ and $\Im(\cM_\cS)$ are finite-dimensional normed vector spaces, every bounded linear bijection has a bounded inverse~\cite{kreyszig1978introductory}.
    Therefore, we obtain the following bound for any $X\in\cV$
    \begin{gather}
        \norm{[X]}_{\cV/\cW_\cS}=\norm{\tilde\cM_\cS^{-1}\circ\tilde\cM_\cS([X])}_{\cV/\cW_\cS}\leq\norm{\tilde\cM_\cS^{-1}}\norm{\tilde\cM_\cS([X])}_\cS\leq C_1\norm{\cM_\cS(X)}_\cS,
    \end{gather}
    where $\norm{\tilde\cM_\cS^{-1}}\coloneqq\sup_{Y\neq0}\frac{\norm{\tilde\cM_\cS^{-1}(Y)}_{\cV/\cW_\cS}}{\norm{Y}_\cS}$ is finitely bounded by $C_1$.
\end{proof}

With the geometric distance bounded by the marginal norm, the second step is to relate this distance to the global deviation $\|\delta\|_1$ from any state difference. 
Having established the semialgebraicity of the relevant sets and functions in the previous subsection, we can now invoke the {\L}ojasiewicz inequality to translate the qualitative UDA condition into a rigorous power-law separation bound.
Here $p\ge 1$ denotes the {\L}ojasiewicz exponent. In the robustness statements below, we use the equivalent exponent $\alpha:=1/p\in(0,1]$.

\begin{lemma}\label{lm:Loja}
Let $\rho\in\cD(\cH)$ be a UDA state with respect to its marginal sets $\cS$. 
For every $r>0$, there exist a constant $C_2=C(r)>0$ and a rational number $p\ge 1$ such that for any $\delta\in \cD_0(\rho)$ with $\norm{\delta}_1\le r$,
\begin{equation}\label{eq:sep}
\dist(\delta,\cW_\cS)\ \ge\ C_2\,\norm{\delta}_1^p.
\end{equation}
\end{lemma}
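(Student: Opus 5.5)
We will apply the {\L}ojasiewicz inequality (Theorem~\ref{thm:loja}) on a compact semialgebraic set in the vectorized coordinates of Definition~\ref{def:Phi}. Fix $r>0$ and set
\begin{gather}
\cA_r\coloneqq \cD_0'(\rho)\cap\{x\in\mathbb{R}^{d^2}\mid N(x)\le r\}.
\end{gather}
By Proposition~\ref{prop:semi-set}, $\cD_0'(\rho)$ is basic closed semialgebraic. By Proposition~\ref{prop:semi-func} together with Lemma~\ref{lm:sublevel}(ii), the sublevel set $\{N\le r\}$ is semialgebraic. Hence $\cA_r$ is semialgebraic. It is closed because $\cD_0'(\rho)$ is closed and $N$ is continuous, being the composition of the Lipschitz map $\Phi^{-1}$ from Lemma~\ref{lm:Phi-iso-Lip} with a norm. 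It is bounded because $\Phi$ is Lipschitz, so $\|x\|_{\ell_2}\le N(x)\le r$. Therefore $\cA_r$ is compact. In fact $\cD_0'(\rho)$ is already compact, since differences of states have trace norm at most $2$, so the cutoff $r$ is only cosmetic.

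Next we choose the two functions. We take $g\coloneqq N|_{\cA_r}$ and $f\coloneqq D_{\cW_\cS'}|_{\cA_r}$. Both are semialgebraic by Proposition~\ref{prop:semi-func} and Lemma~\ref{lm:sublevel}(i). Both are continuous: $N$ is a norm composed with a linear map, and $D_{\cW_\cS'}$ is $1$-Lipschitz with respect to $N$ by the triangle inequality, hence continuous.

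The key step is to check the zero-set inclusion $f^{-1}(0)\subset g^{-1}(0)$. If $D_{\cW_\cS'}(x)=0$ for some $x\in\cA_r$, then the infimum is attained because $\cW_\cS'$ is a closed subspace (as noted in the proof of Proposition~\ref{prop:semi-func}). This forces $x\in\cW_\cS'$. Setting $\delta=\Phi^{-1}(x)$, we get $\delta\in\cD_0(\rho)\cap\cW_\cS$. The UDA condition in Eq.~\eqref{eq:UDA} then gives $\delta=\mathbf{0}$, so $g(x)=0$. This is the only place where UDA enters, and it is straightforward once closedness of $\cW_\cS$ is in hand.

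Theorem~\ref{thm:loja} now yields $c>0$ and a rational $p\ge1$ such that $N(x)^p\le c\,D_{\cW_\cS'}(x)$ for all $x\in\cA_r$. We translate back through $\Phi$, using $N(\Phi(\delta))=\|\delta\|_1$ and $D_{\cW_\cS'}(\Phi(\delta))=\dist(\delta,\cW_\cS)$, which holds by linearity of $\Phi$. This gives Eq.~\eqref{eq:sep} with $C_2=1/c$.

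The main obstacle is really bookkeeping: making sure every hypothesis of Theorem~\ref{thm:loja} holds simultaneously. That means compactness of $\cA_r$, continuity of both functions on it, and semialgebraicity of their restrictions. All of these follow from the cited lemmas as indicated, so no genuinely new estimate is needed.
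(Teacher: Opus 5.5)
Your proposal is correct and follows essentially the same route as the paper: apply the {\L}ojasiewicz inequality to $g=N$ and $f=D_{\cW_\cS'}$ on the compact semialgebraic set $\cD_0'(\rho)\cap\{N\le r\}$, use the UDA condition to get $f^{-1}(0)=\{\mathbf{0}\}$, and pull back through $\Phi$ with $C_2=1/c$. Your explicit checks of closedness, boundedness via $\|x\|_{\ell_2}\le N(x)$, and continuity spell out steps that the paper's proof states more briefly.
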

\begin{proof}
Since the state difference set $\cD_0(\rho)$ is closed and bounded, it is compact. 
The intersection $K := \{\delta \in \cD_0(\rho)\,\mid\, \|\delta\|_1 \le r\}$ is a closed subset of a compact set, and thus is also compact. 
Under our isomorphism $\Phi$, its image $\Phi(K)$ is a compact semialgebraic set by Lemma~\ref{lm:Phi-iso-Lip}.
Since $\Phi(K)$ equals $\cD'_0(\rho)\cap\{x\in\mathbb{R}^{d^2}\,\mid\, N(x)\leq r\}$, where both ingredients are semialgebraic sets, $\Phi(K)$ is semialgebraic according to Lemma~\ref{lm:sublevel} and Proposition~\ref{prop:semi-set}.

Consider the functions $g(x) \coloneqq N(x)$ and $f(x) \coloneqq D_{\cW_\cS'}(x)$ restricted to $\Phi(K)$. 
Both are semialgebraic and trivially continuous because norm and distance functions are originally Lipschitz continuous. 
Because $\rho$ is a UDA state, the only state difference in the kernel is the trivial one, i.e., $\cD_0(\rho) \cap \cW_\cS = \{\mathbf{0}\}$. 
Therefore, $f^{-1}(0) = \{\mathbf{0}\}$. Clearly, $g^{-1}(0) = \{\mathbf{0}\}$ as well.

Applying the {\L}ojasiewicz inequality in Theorem~\ref{thm:loja} to $f$ and $g$ on $\Phi(K)$, there exists $c > 0$ and a rational number $p \ge 1$ such that $g(x)^p \le c f(x)$ for all $x \in \Phi(K)$. 
Substituting the definitions of $D_{\cW_\cS'}(x)$ and $N(x)$ yields 
\begin{gather}
    \|\delta\|_1^p \le c\cdot \dist(\delta, \cW_\cS),\quad\forall\,\delta\in K.
\end{gather}
Setting $C_2:= 1/c$ concludes the proof.
\end{proof}

Crucially, Lemma~\ref{lm:Loja} provides a powerful bound, but it only holds within a bounded local neighborhood ($\|\delta\|_1 \le r$). To safely invoke it, we must guarantee that sufficiently small deviations of marginals actually trap the global deviation inside this small regime. 
The following lemma confirms this property.

\begin{lemma}[Local regime]\label{lm:bootstrap}
Let $\rho\in\cD(\cH)$ be a UDA state with respect to its marginal sets $\cS$. 
For every $r>0$ there exists $d_0=d_0(r)>0$ such that for any $\delta\in\cD_0(\rho)$,
\begin{gather}
    \dist(\delta,\cW_\cS)< d_0\ \Longrightarrow\ \norm{\delta}_1<r.
\end{gather}
\end{lemma}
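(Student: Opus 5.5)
The plan is a compactness argument on the ``far'' part of the state difference set. Fix $r>0$ and consider
\begin{gather}
    F_r\coloneqq\{\delta\in\cD_0(\rho)\,\mid\,\norm{\delta}_1\ge r\}.
\end{gather}
I will show that $\dist(\cdot,\cW_\cS)$ is bounded below by a positive constant on $F_r$, and then take $d_0$ to be that constant. If $F_r$ is empty (i.e., $r$ exceeds the diameter of $\cD_0(\rho)$, which is at most $2$), any $d_0>0$ works trivially. So assume $F_r\neq\emptyset$.

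First, $F_r$ is compact. The set $\cD_0(\rho)=\cD(\cH)-\rho$ is a translate of the compact set of density matrices, hence compact. The set $\{\delta\mid \norm{\delta}_1\ge r\}$ is closed because the trace norm is continuous. Their intersection is therefore a closed subset of a compact set, hence compact.

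Second, the function $\delta\mapsto\dist(\delta,\cW_\cS)$ is continuous; in fact it is $1$-Lipschitz in the trace norm, since $\dist(\delta,\cW_\cS)\le\norm{\delta-\delta'}_1+\dist(\delta',\cW_\cS)$ by the triangle inequality. By the extreme value theorem it attains its minimum on $F_r$ at some $\delta_\star$. This minimum is strictly positive. Indeed, $\cW_\cS$ is a closed subspace, so $\dist(\delta_\star,\cW_\cS)=0$ would force $\delta_\star\in\cW_\cS$. The UDA condition in Eq.~\eqref{eq:UDA} then gives $\delta_\star=\mathbf{0}$, which contradicts $\norm{\delta_\star}_1\ge r>0$.

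Set $d_0(r)\coloneqq\min_{\delta\in F_r}\dist(\delta,\cW_\cS)>0$. Now take any $\delta\in\cD_0(\rho)$ with $\dist(\delta,\cW_\cS)<d_0$. Then $\delta\notin F_r$, and since $\delta\in\cD_0(\rho)$ this means $\norm{\delta}_1<r$, which is the claim.

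There is no real obstacle here. The only point requiring care is that positivity of the minimum uses both closedness of $\cW_\cS$, so that distance zero implies membership, and attainment of the minimum, which compactness provides. Semialgebraic tools are not needed for this step. Alternatively, one could argue by contradiction with a sequence $\delta_k\in F_r$ such that $\dist(\delta_k,\cW_\cS)\to0$, extract a convergent subsequence, and reach the same contradiction with Eq.~\eqref{eq:UDA}.
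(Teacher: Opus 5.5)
Your proof is correct and follows the same compactness argument as the paper: restrict to the compact set $\{\delta\in\cD_0(\rho)\mid\norm{\delta}_1\ge r\}$, observe via the UDA condition that it does not meet $\cW_\cS$, and take $d_0$ to be the positive minimum of the continuous function $\dist(\cdot,\cW_\cS)$ on it. Your treatment of the empty case, and your use of the closedness of $\cW_\cS$ to turn zero distance into membership, fill in small details that the paper leaves implicit.
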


\begin{proof}
Define $K_r := \{\delta \in \cD_0(\rho)\,\mid\, \|\delta\|_1 \ge r\}$.
$\cD_0(\rho)$ is a closed and bounded subset of a finite‑dimensional Hilbert space, so it is compact.
Since $\{\delta\,\mid\, \norm{\delta}_1\ge r\}$ is closed, $K_r$ is a closed subset of a compact set, hence compact.
Moreover, since $K_r\cap \cW_\cS=\varnothing$ by the UDA condition, the continuous function $\dist(\cdot,\cW_\cS)$ has a positive minimum $d_0$ on $K_r$. 
Consequently, all $\delta\in\cD_0(\rho)$ satisfying $\dist(\delta,\cW_\cS)< d_0$ must obey $\norm{\delta}_1<r$.
\end{proof}

We now have all the necessary ingredients.
By chaining the observable marginal bound (Lemma~\ref{lm:dis_margin}) and the geometric separation bound (Lemma~\ref{lm:Loja}) with the local regime guarantee (Lemma~\ref{lm:bootstrap}), we establish the universal robustness of UDA states.

\begin{theorem}[Power-law-robustness of UDA]\label{thm:main_supp}
Let $\rho\in\cD(\cH)$ be a UDA state with respect to its marginal sets $\cS=\{S_k\}_{k=1}^M$. 
Then there exist constants $C>0$, $\varepsilon_0>0$, and a rational number $\alpha\in(0,1]$ such that for every $\sigma \in \mathcal{D}(\mathcal{H})$ with
\begin{gather}\label{eq:margin_supp}
    \varepsilon=\norm{\mathcal{M}_{\mathcal{S}}(\sigma-\rho)}_{\mathcal{S}}\coloneqq\sum_{k=1}^M\norm{\Tr_{\bar{S_k}}(\sigma-\rho)}_1
\le
\varepsilon_0,
\end{gather}
we have
\begin{gather}
    \norm{\sigma - \rho}_1
\le
C \, \varepsilon^{\alpha}.
\end{gather}
\end{theorem}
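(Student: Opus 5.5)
We chain the three lemmas just established. Set $\delta\coloneqq\sigma-\rho\in\cD_0(\rho)$, which is traceless Hermitian, so $\delta\in\cV$ and Lemma~\ref{lm:dis_margin} applies:
\begin{gather}
\dist(\delta,\cW_\cS)\le C_1\,\norm{\cM_\cS(\delta)}_\cS = C_1\varepsilon .
\end{gather}
The plan is to use this linear bound twice. First, it pushes $\delta$ into the local regime where the {\L}ojasiewicz bound holds. Second, it converts that bound into a power law in $\varepsilon$.

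Concretely, fix any $r>0$. A natural choice is $r=2$, since every $\delta\in\cD_0(\rho)$ satisfies $\norm{\delta}_1\le 2$; with this choice the localization step becomes automatic. Lemma~\ref{lm:Loja} then provides $C_2=C(r)>0$ and a rational $p\ge1$ such that $\dist(\delta,\cW_\cS)\ge C_2\norm{\delta}_1^p$ whenever $\norm{\delta}_1\le r$.

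For a general $r$, we would instead invoke Lemma~\ref{lm:bootstrap} to get $d_0(r)>0$, and choose $\varepsilon_0\coloneqq d_0/(2C_1)$. This can be taken as any positive number if $C_1=0$, which happens only in the degenerate case $\cW_\cS=\cV$. Then $\varepsilon\le\varepsilon_0$ gives $\dist(\delta,\cW_\cS)\le d_0/2<d_0$, hence $\norm{\delta}_1<r$, so Lemma~\ref{lm:Loja} is applicable.

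Combining the two bounds gives $C_2\norm{\delta}_1^p\le C_1\varepsilon$. Therefore
\begin{gather}
\norm{\sigma-\rho}_1\le (C_1/C_2)^{1/p}\,\varepsilon^{1/p},
\end{gather}
and we take $\alpha\coloneqq 1/p\in(0,1]$, which is rational because $p$ is, and $C\coloneqq (C_1/C_2)^{1/p}$. If $C_1=0$ we would simply set $C=1$: the inequality then forces $\norm{\delta}_1=0$, consistent with the UDA condition. Note that $\alpha$ does not depend on $\sigma$, since $p$ comes from a single application of Theorem~\ref{thm:loja} on the fixed compact set $\Phi(K)$.

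There is no real obstacle here; the substantive work was done in Lemma~\ref{lm:Loja}. The only points needing care are bookkeeping. One must ensure that $p$ and $C_2$ are fixed before $\varepsilon_0$ is chosen, so that the constants do not depend on $\sigma$. One must also handle the edge cases $\varepsilon=0$, where UDA gives $\sigma=\rho$ directly, and $C_1=0$.
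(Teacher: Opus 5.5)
Your proposal is correct and follows the paper's proof: you chain Lemma~\ref{lm:dis_margin} with Lemma~\ref{lm:Loja}, using Lemma~\ref{lm:bootstrap} for localization. Your observation that $\|\sigma-\rho\|_1\le 2$ for all states lets you take $r=2$ and drop Lemma~\ref{lm:bootstrap} entirely, with $\varepsilon_0$ arbitrary. Your factor $1/2$ in $\varepsilon_0$ also neatly avoids the boundary case $\dist(\delta,\cW_\cS)=d_0$, which the paper's choice $\varepsilon_0=d_0/C_1$ glosses over.
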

\begin{proof}
Suppose $\sigma\in\cD(\cH)$ satisfies Eq.~\eqref{eq:margin_supp}.
Represent $\delta\coloneqq \sigma-\rho\in\cD_0(\rho)$, and the question reduces to bound $\norm{\delta}_1$.
By Lemma~\ref{lm:dis_margin}, we have $\dist(\delta, \cW_\cS) \le C_1 \varepsilon$.

For the second step, we first fix $r>0$.
According to Lemma~\ref{lm:bootstrap}, there exists a threshold $d_0>0$ such that $
\dist(\delta,\cW_\cS)\le d_0$ ensures $\norm{\delta}_1<r$.
We can thus set $\varepsilon_0=d_0/C_1$ with $C_1$ from Lemma~\ref{lm:dis_margin}.
Based on this lemma and the condition in Eq.~\eqref{eq:margin_supp}, we get $\dist(\delta,\cW_\cS)\leq C_1\cdot\varepsilon\leq d_0$, which ensures $\norm{\delta}_1\leq r$ from Lemma~\ref{lm:bootstrap}.
Consequently, with respect to the same $r$, we can apply Lemma~\ref{lm:Loja} to $\delta$ since it satisfies the norm condition.
For $C_2$ from Lemma~\ref{lm:Loja}, we have
\begin{align}
    \norm{\delta}_1\leq\left(\frac{\dist(\delta,\cW_\cS)}{C_2}\right)^{\alpha}\leq\left(\frac{C_1}{C_2}\right)^{\alpha}\varepsilon^{\alpha},
\end{align}
where $\alpha\coloneqq1/p$.
\end{proof}

Theorem~\ref{thm:main_supp} guarantees that all UDA states possess universal power-law robustness.
The exponent $\alpha$ naturally classifies the strength of this robustness, inducing a hierarchy where a larger $\alpha$ indicates stronger robustness, culminating in the optimal linear regime ($\alpha=1$). 
We formalize this classification as follows:

\begin{definition}[Local $\alpha$-robustness]\label{def:linear_robust}
   A UDA state $\rho\in\cD(\cH)$ with respect to the marginal set $\mathcal S$ is said to be \emph{locally $\alpha$-robust} for $\alpha\in(0,1]$ if there exist constants $C>0$ and $\varepsilon_0>0$ such that for every $\sigma\in\mathcal D(\mathcal H)$ with
\begin{equation}
\label{eq:robustness}
\varepsilon\coloneqq
\norm{\mathcal{M}_{\mathcal{S}}(\sigma-\rho)}_{\mathcal{S}}
\le
\varepsilon_0,
\end{equation}
we have
\begin{equation}
\label{eq:robust-marginal}
\norm{\sigma - \rho}_1
\le
C\,\varepsilon^{\alpha}.
\end{equation} 
Particularly, we refer to $\alpha=1$ as linear robustness and $\alpha=1/2$ as square-root robustness.
\end{definition}

\section{Linear robustness certificate}
\label{sec:uda-robust-cert}

By Theorem~\ref{thm:main_supp}, every UDA state is universally robust against deviations of marginals.
However, the exponent $\alpha$ is merely guaranteed to lie in $(0,1]$ and may be small.
In many applications, one seeks the best robustness, namely locally \emph{linear} robustness with $\alpha=1$, where the global deviations scale proportionally with the local deviations.
In this section, we derive a \emph{necessary and sufficient} condition for linear robustness.
Moreover, the condition can be cast through semidefinite-program feasibility problems, making linear robustness numerically testable.

\subsection{Tangent cone and the tangent criterion for linear robustness}
We formulate a geometric criterion for \emph{linear} robustness using the tangent cone, which captures the first-order feasible directions of a set at a given point.
Particularly, we use the Bouligand tangent cone, which is applicable for arbitrary subsets of a finite-dimensional normed space.

\begin{definition}[Tangent cone]
For a subset $\cA$ of $\Herm(\cH)$, we define its Bouligand tangent cone at $\mathbf{0}$ using sequences:
\begin{equation}
\label{eq:tangent-def_supp}
\cK_{\cA}(\mathbf{0})
\coloneqq
\left\{
X\in\Herm(\mathcal H)\,\mid\, \exists\,\{t_k,X_k\}_{k\in\mathbb N}\subset\mathbb{R}_+\times\cA\  \mathrm{s.t.}\  t_k\to 0^+\
\mathrm{and}\ 
\lim_{k\to\infty}\norm{X_k/t_k- X}_1=0
\right\}.
\end{equation}
\end{definition}

We adopt this concept for the state difference set $\cD_0(\rho)$ at the origin $\mathbf{0}$.
The following lemma shows that for this case, the abstract sequence definition simplifies into an explicit algebraic characterization.

\begin{lemma}[Tangent cone for $\cD_0(\rho)$]
\label{lem:tangent-explicit-sec}
Consider a UDA state $\rho\in\cD(\cH)$.
Let $P_0$ be the projector onto $\ker\rho$. Then
\begin{equation}
\label{eq:K-explicit-sec_supp}
\cK_{\cD_0(\rho)}(\mathbf{0})=\{X\in\Herm(\mathcal H)\,\mid\,\Tr (X)=0,\ P_0XP_0\succeq 0\}.
\end{equation}
\end{lemma}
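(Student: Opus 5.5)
We prove the two inclusions separately. Write $P_1\coloneqq I-P_0$ for the projector onto $\supp\rho$, and let $\cK^\ast$ denote the right-hand side of Eq.~\eqref{eq:K-explicit-sec_supp}.

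\emph{Inclusion $\cK_{\cD_0(\rho)}(\mathbf{0})\subseteq\cK^\ast$.} Take $X$ in the tangent cone, realized by $X_k=\sigma_k-\rho\in\cD_0(\rho)$ and $t_k\to0^+$ with $X_k/t_k\to X$ in trace norm. Every $X_k$ is traceless, so $\Tr(X_k/t_k)=0$. Since the trace is continuous, $\Tr X=0$.

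Compressing by $P_0$ gives $P_0X_kP_0=P_0\sigma_kP_0-P_0\rho P_0=P_0\sigma_kP_0\succeq0$, because $P_0\rho P_0=0$. Dividing by $t_k>0$ keeps positivity. The map $Y\mapsto P_0YP_0$ is continuous and the PSD cone is closed, so $P_0XP_0\succeq0$.

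\emph{Inclusion $\cK^\ast\subseteq\cK_{\cD_0(\rho)}(\mathbf{0})$.} Since $\cD_0(\rho)$ is convex and contains $\mathbf{0}$, it suffices to show the following: for every $X\in\cK^\ast$ and every $\eta>0$, there exist $X'$ with $\|X-X'\|_1\le\eta$ and $s>0$ such that $\rho+sX'\succeq0$. Indeed, any such $sX'$ lies in $\cD_0(\rho)$, because the trace is preserved automatically. Convexity then gives $\tau X'\in\cD_0(\rho)$ for all $0<\tau\le s$, so $X'$ lies in the cone generated by $\cD_0(\rho)$. The Bouligand cone of a convex set at a point is the closure of that generated cone, and this closure contains $X$.

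The natural perturbation is $X'=X+\eta(P_0/\Tr P_0-P_1/\Tr P_1)$. This is traceless and satisfies $P_0X'P_0\succeq\eta P_0/\Tr P_0\succ0$ on $\ker\rho$. If $P_0=0$, skip this step and take $X'=X$. The degenerate case $P_1=0$ cannot occur, since $\rho$ has trace one.

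\emph{Main step: positivity of $\rho+sX'$.} Write $\rho+sX'$ in the block decomposition $\supp\rho\oplus\ker\rho$:
\[
\begin{pmatrix}\rho_{11}+sA & sB\\ sB^\dagger & sD\end{pmatrix},
\]
where $\rho_{11}\succ0$ has minimal eigenvalue $\lambda_{\min}>0$ and $D=P_0X'P_0\succeq\mu P_0$ for some $\mu>0$. Apply the Schur complement criterion. For small $s$, the top-left block is positive definite since $\rho_{11}+sA\succeq(\lambda_{\min}-s\|A\|_{\infty})I$. The Schur complement is
\[
sD-s^2B^\dagger(\rho_{11}+sA)^{-1}B\succeq s\mu-\frac{s^2\|B\|_{\infty}^2}{\lambda_{\min}-s\|A\|_{\infty}},
\]
which is positive once $s$ is small enough. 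Hence $\rho+sX'\succeq0$, which completes the reverse inclusion.

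This Schur-complement estimate is the only genuinely quantitative step. The key point is that the off-diagonal term enters at order $s^2$, while the strictly positive null-space block contributes at order $s$. This is also exactly why the perturbation to strict positivity is needed: with $D$ only PSD, the order-$s$ term could vanish on part of $\ker\rho$ and fail to dominate. Finally, note that the UDA hypothesis is not needed for this lemma.
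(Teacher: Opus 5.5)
Your proof is correct. Your forward inclusion matches the paper's, but your reverse inclusion takes a genuinely different route.

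The paper keeps the exact direction $X$ and builds a curved path. It uses the blocks $A(t)=\rho_{11}+tX_{11}$ and $tX_{10}$, and adds the second-order correction $t^2X_{01}A(t)^{-1}X_{10}$ to the null-space block so that the Schur complement is exactly $tX_{00}\succeq 0$. It then normalizes by $\Tr\widetilde\rho(t)=1+t^2\beta(t)$, obtaining states $\sigma(t)=\rho+tX+O(t^2)$ with $(\sigma(t)-\rho)/t\to X$.

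You instead insist on straight segments $\rho+sX'$. This forces you to first push $X$ to an $X'$ whose null-space block is strictly positive, so the order-$s$ term beats the order-$s^2$ off-diagonal contribution, as you correctly explain. You then recover $X$ itself from convexity of $\cD_0(\rho)$ and the fact that the tangent cone of a convex set is the closure of its generated cone.

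Your argument is softer and rests only on standard convex-analysis facts, avoiding the explicit correction term and the normalization. In exchange, the paper's construction gives an explicit curve along the exact direction $X$ with a controlled $O(t^2)$ remainder. The paper reuses this verbatim in Corollary~\ref{coro:gap_supp}: when $X\in\cW_\cS$, the marginal deviation of $\sigma(t)$ is $O(t^2)$ while the global deviation is $\Theta(t)$. Your perturbed directions generally leave $\cW_\cS$, since the added term $\eta(P_0/\Tr P_0-P_1/\Tr P_1)$ typically has nonzero marginals. To get the same quadratic estimate from your approach, you would need to tie $\eta$ to $s$, e.g.\ $\eta\propto s$, which your Schur bound does allow.

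Two small remarks. Your perturbation has $\|X-X'\|_1=2\eta$ rather than $\le\eta$, which is harmless after rescaling $\eta$. And you are right that the UDA hypothesis plays no role; the paper's proof does not use it either.
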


\begin{proof}
We prove this equality by mutual inclusion. Let $\cK := \cK_{\cD_0(\rho)}(\mathbf{0})$.

For any $X\in \cK$, there exists a vanishing sequence with $t_k\to 0^+$ and $\delta_k\in \cD_0(\rho)$ such that $\delta_k/t_k\to X$.
Since for every $\delta_k\in\cD_0(\rho)$, $\Tr(\delta_k)=0$ for all $k$, ensuring $\Tr(X)=0$ by continuity of the trace.
Denote $\sigma_k\coloneqq \rho+\delta_k$, which is a valid quantum state by definition.
For any vector $v$ in $\ker\rho$, we have $\rho v=0$ and $\langle v,\sigma_k v\rangle\ge 0$, so $\langle v,\delta_k v\rangle\ge 0$.
Dividing $\langle v,\delta_k v\rangle$ by positive $t_k$ and taking the limit yields $\langle v, X v \rangle \ge 0$ for all $v \in \ker\rho$, which implies $P_0 X P_0 \succeq 0$.
Thus, we have $\cK \subseteq \{X : \Tr(X) = 0, \ P_0 X P_0 \succeq 0\}$

For the other side, consider any $X\in\Herm(\cH)$ satisfying $\Tr X=0$ and $P_0XP_0\succeq 0$.
Decompose the Hilbert space as $\mathcal H=\supp\rho\oplus\ker\rho$, and we choose the diagonalizing basis such that $\rho=\diag[\rho_{11},\mathbf{0}]$ with $\rho_{11} \succ 0$.
In this block structure, $X$ takes the form
\begin{gather}
X=\begin{pmatrix}X_{11}&X_{10}\\[2pt]X_{01}&X_{00}\end{pmatrix},\text{ with } X_{00}\succeq 0.
\end{gather}
Since $\rho_{11}\succ0$, we have $A(t)=\rho_{11}+tX_{11}\succ 0$ for small $t>0$.
We define
\begin{gather}\label{eq:rho_tilde}
\widetilde\rho(t)\coloneqq
\begin{pmatrix}
A(t) & tX_{10}\\[2pt]
tX_{01} & tX_{00}+t^2\,X_{01}A(t)^{-1}X_{10}
\end{pmatrix}.
\end{gather}
By construction, the Schur complement of $A(t)$ in $\widetilde\rho(t)$ is $tX_{00}\succeq0$, so  $\widetilde\rho(t)\succeq0$~\cite{zhang2006schur}.
The trace is $\Tr\widetilde\rho(t)=1+t^2\beta(t)$, where $\beta(t)\coloneqq \Tr X_{01}A(t)^{-1}X_{10}$ is bounded for small $t$. Normalizing $\widetilde\rho(t)$ thus yields a valid state denoted by $\sigma(t)\in\cD(\cH)$.
We construct the state difference sequence $\{X_t := \sigma(t) - \rho \in \cD_0(\rho)\}$.
Taking the limit as $t \to 0^+$ yields:
\begin{gather}
\lim_{t\to0^+}\frac{X_t}{t}
=
\lim_{t\to0^+}\frac{1}{\Tr(\widetilde\rho(t))}\cdot\frac{\widetilde\rho(t)-\rho}{t}
+\lim_{t\to0^+}\frac{1}{t}\left(\frac{1}{\Tr(\widetilde\rho(t))}-1\right)\rho
=X.
\end{gather}
Thus, $\{X \,\mid\, \Tr(X) = 0, \ P_0 X P_0 \succeq 0\} \subseteq \cK$, which completes the proof.
\end{proof}

With this equivalence, we arrive at the central theorem of this section: 

\begin{theorem}[Tangent criterion for linear robustness]
\label{thm:lic-robust-marginal-op}
Let $\rho\in\cD(\cH)$ be a UDA state with respect to its marginal sets $\cS$. 
$\rho$ is locally linearly robust if and only if $\cK_{\cD_0(\rho)}(\mathbf{0})\cap \cW_\cS=\{\mathbf{0}\}$.
\end{theorem}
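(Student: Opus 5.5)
We prove the two directions separately, using Lemma~\ref{lem:tangent-explicit-sec} for the explicit form of the cone and Lemma~\ref{lm:dis_margin} to pass from distances to marginal norms.

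\emph{Sufficiency.} Assume $\cK_{\cD_0(\rho)}(\mathbf{0})\cap\cW_\cS=\{\mathbf{0}\}$.
\begin{itemize}
\item By Lemma~\ref{lem:tangent-explicit-sec}, $\cK\coloneqq\cK_{\cD_0(\rho)}(\mathbf{0})$ is a closed convex cone.
\item Its unit slice $S_\cK=\{X\in\cK:\norm{X}_1=1\}$ is compact.
\item The continuous function $\dist(\cdot,\cW_\cS)$ is positive on $S_\cK$, since a zero would be a nonzero element of $\cK\cap\cW_\cS$. Hence it attains a minimum $\kappa>0$ there.
\item By homogeneity, $\dist(X,\cW_\cS)\ge\kappa\norm{X}_1$ for all $X\in\cK$.
\item $\cD_0(\rho)$ is convex and contains $\mathbf{0}$, so $\delta=\lim_k(t_k\delta)/t_k$ with $t_k\delta\in\cD_0(\rho)$. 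Alternatively, every $\delta=\sigma-\rho$ directly satisfies $\Tr\delta=0$ and $P_0\delta P_0=P_0\sigma P_0\succeq0$. Either way, $\cD_0(\rho)\subseteq\cK$.
\item Combining with Lemma~\ref{lm:dis_margin} gives $\norm{\sigma-\rho}_1\le(C_1/\kappa)\,\norm{\cM_\cS(\sigma-\rho)}_\cS$ for every $\sigma$. This is linear robustness, even globally, with $\varepsilon_0$ arbitrary.
\end{itemize}

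\emph{Necessity.} We argue by contrapositive: suppose there is a nonzero $X\in\cK\cap\cW_\cS$. We reuse the Schur-complement curve from the proof of Lemma~\ref{lem:tangent-explicit-sec}.
\begin{itemize}
\item In the block decomposition $\mathcal H=\supp\rho\oplus\ker\rho$, define $\widetilde\rho(t)=\rho+tX+t^2E(t)$. Here $E(t)$ is supported on the $00$ block and equals $X_{01}A(t)^{-1}X_{10}$, which is uniformly bounded for $t\in[0,t_0]$.
\item Set $\sigma(t)=\widetilde\rho(t)/(1+t^2\beta(t))$. Then
\[
\sigma(t)-\rho=tX+t^2R(t),
\]
where $R(t)=E(t)-\beta(t)\rho+\order{t}$, so $\norm{R(t)}_1$ is bounded by some $B$ for small $t$.
\item Since $\cM_\cS(X)=\mathbf{0}$ and partial traces are trace-norm contractive, $\norm{\cM_\cS(\sigma(t)-\rho)}_\cS\le t^2|\cS|B$.
\item Meanwhile $\norm{\sigma(t)-\rho}_1\ge t\norm{X}_1-t^2B\ge \tfrac{t}{2}\norm{X}_1$ for small $t$.
\end{itemize}
If linear robustness held with constants $C,\varepsilon_0$, then for small $t$ (so that $\varepsilon\le\varepsilon_0$) we would get
\[
\tfrac{t}{2}\norm{X}_1\le C|\cS|Bt^2.
\]
This fails as $t\to0^+$, a contradiction. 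The same computation shows $\varepsilon'\gtrsim\varepsilon^{1/2}$, which also gives Corollary~\ref{coro:gap}.

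\emph{Main obstacle.} The only delicate point is the uniform control of the second-order remainder $R(t)$. This needs:
\begin{itemize}
\item $A(t)^{-1}$ bounded near $t=0$, which holds because $\rho_{11}\succ0$;
\item the normalization factor $1/(1+t^2\beta(t))=1-\order{t^2}$, so it does not spoil the $t^2$ order of the marginal error.
\end{itemize}
A bookkeeping subtlety is that $X$ must lie in $\cV$, i.e.\ be traceless, to belong to $\cW_\cS$; this is automatic from the cone characterization. A further caveat is the degenerate case $P_0=\mathbf{0}$ (full-rank $\rho$). There the construction needs no correction term, since $\rho+tX\succeq0$ for small $t$ already, so the same argument goes through.
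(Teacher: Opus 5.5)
Your proof is correct. The sufficiency direction is essentially the paper's own argument: compact unit slice, minimum $\kappa>0$, homogeneity, $\cD_0(\rho)\subseteq\cK$ by convexity, then Lemma~\ref{lm:dis_margin}.

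For necessity you take a different route. The paper builds no explicit curve; it works directly from the sequential definition of the Bouligand cone. Given $\delta_k\in\cD_0(\rho)$ with $\delta_k/t_k\to X\in\cW_\cS$, linearity and continuity of $\cM_\cS$ give $\norm{\cM_\cS(\delta_k)}_\cS=t_k\norm{\cM_\cS(\delta_k/t_k)-\cM_\cS(X)}_\cS\to0$. So the linear-robustness inequality eventually applies to $\delta_k$. Dividing $\norm{\delta_k}_1\le C\norm{\cM_\cS(\delta_k)}_\cS$ by $t_k$ and passing to the limit yields $\norm{X}_1\le C\norm{\cM_\cS(X)}_\cS=0$, a contradiction. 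That argument is shorter, needs neither Lemma~\ref{lem:tangent-explicit-sec} nor any block or Schur-complement bookkeeping, and uses only the defining property of the tangent cone.

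Your Schur-complement curve, which the paper reserves for Corollary~\ref{coro:gap_supp}, costs the remainder estimates you flagged but gives more in return. A generic tangent sequence only guarantees $\norm{\cM_\cS(\delta_k)}_\cS=o(t_k)$. Your explicit family has marginal error $\order{t^2}$ against global error $\Theta(t)$, which rules out every exponent $\alpha>1/2$ at once and so proves the robustness gap in the same step.

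Your remark on the full-rank case is harmless but unnecessary. If $P_0=\mathbf{0}$, the UDA property already forces $\cW_\cS=\{\mathbf{0}\}$, as the paper notes, so there is no nonzero $X$ to start from.
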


\begin{proof}
Let $\cK := \cK_{\cD_0(\rho)}(\mathbf{0})$. Because $\cD_0(\rho)$ is convex and contains the origin, $\cD_0(\rho) \subseteq \cK$.

For the \enquote{\textbf{if}} side: 
Assume $\cK \cap \cW_\cS = \{\mathbf{0}\}$. 
Define the unit slice of the cone as $S_\cK := \{X \in \cK \,\mid\, \|X\|_1 = 1\}$, which is compact. 
Since the continuous function $\dist(\cdot, \cW_\cS)$ is strictly positive on $S_\cK$, it attains a minimum $\kappa > 0$.
Because $\cW_\cS$ is a linear subspace and $\cK$ is a cone, for any $X \in \cK$ we have:
\begin{gather}
\dist(X,\cW_\cS)=\norm{X}_1\,\operatorname{dist}\!\left(\frac{X}{\norm{X}_1},\cW_\cS\right)\ge \kappa\,\norm{X}_1.
\end{gather}

Since any state difference $\delta \in \cD_0(\rho)$ belongs to $\cK$, this geometric bound holds for all valid differences.
By our previous Lemma~\ref{lm:dis_margin}, we already know $\dist(X, \cW_\cS) \le C_1 \|\cM_\cS(X)\|_\cS$ for all traceless Hermitian matrices $X\in\cV$. 
Combining these inequalities yields:
\begin{gather}
\norm{\delta}_1\leq\frac{1}{\kappa}\dist(\delta,\cW_\cS)\leq\frac{C_1}{\kappa}\norm{\cM_\cS(\delta)}_\cS,\ \forall\,\delta\in\cD_0(\rho)
\end{gather}
proving that global deviations are linearly bounded by deviations of local marginals.

For the \enquote{\textbf{only if}} side:
Assume $\rho$ is linearly robust with constant $C$, but suppose there exists a non-zero direction $X \in \cK \cap \cW_\cS$.
By definition, there exists a series $\{t_k,\delta_k\}$ with $\delta_k\in \cD_0(\rho)$ such that $t_k\to 0^+$ and $\lim_{k\to\infty}\norm{\delta_k/t_k- X}_1=0$.
This gives
\begin{gather}
    \lim_{k\to\infty}\norm{\mathcal M_{\mathcal S}(\delta_k)}_{\mathcal S}=\lim_{k\to\infty}t_k\norm{\mathcal M_{\mathcal S}(\delta_k/t_k)-\cM_{\cS}(X)}_{\mathcal S}= 0.
\end{gather}
Therefore, for any $\varepsilon_0>0$, there exists a large $k_0$ such that $\norm{\mathcal M_{\mathcal S}(\delta_k)}_{\mathcal S}\leq\varepsilon_0$ for all $k\geq k_0$.
The linear robustness thus generates $\norm{\delta_k}_1\le C\,\norm{\mathcal M_{\mathcal S}(\delta_k)}_{\mathcal S}$ for $k\geq k_0$.
Therefore, we have the following contradiction:
\begin{align}
   \norm{X}_1&=\lim_{k\to\infty}\norm{\frac{\delta_k}{t_k}}_1\leq\lim_{k\to\infty}\frac{C}{t_k}\norm{\mathcal M_{\mathcal S}(\delta_k)}_{\mathcal S}= \lim_{k\to\infty}C\norm{\mathcal M_{\mathcal S}(\delta_k/t_k)}_{\mathcal S}=C\norm{\mathcal M_{\mathcal S}(X)}_{\mathcal S}=0,
\end{align}
which is impossible since $X$ is non-zero.
\end{proof}

In fact, the tangent criterion can further reveal a sharp gap in the robustness hierarchy:
\begin{corollary}[Robustness gap]\label{coro:gap_supp}
    Let $\rho$ be a UDA state with respect to the marginal set $\cS$.
    If $\rho$ is not locally linearly robust, then $\rho$ is not locally $\alpha$-robust for any $1/2<\alpha<1$.
\end{corollary}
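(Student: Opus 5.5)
By Theorem~\ref{thm:lic-robust-marginal-op}, failure of linear robustness means there is a nonzero $X\in\cK_{\cD_0(\rho)}(\mathbf{0})\cap\cW_\cS$. By Lemma~\ref{lem:tangent-explicit-sec}, this $X$ is traceless, Hermitian, and satisfies $P_0XP_0\succeq0$. We also have $\cM_\cS(X)=\mathbf{0}$. The plan is to feed this $X$ into the Schur-complement curve of Eq.~\eqref{eq:rho_tilde} from the proof of Lemma~\ref{lem:tangent-explicit-sec}. That curve yields states $\sigma(t)$ whose global deviation is first order in $t$, while their marginal deviation is only second order. This directly rules out any exponent $\alpha>1/2$.

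\emph{Step 1: the state family.} Work in the block basis with $\rho=\diag[\rho_{11},\mathbf{0}]$. Set $A(t)=\rho_{11}+tX_{11}$ and define $\widetilde\rho(t)$ exactly as in Eq.~\eqref{eq:rho_tilde}. Write $\widetilde\rho(t)=\rho+tX+t^2R(t)$, where $R(t)=\diag[\mathbf{0},X_{01}A(t)^{-1}X_{10}]$. For small $t$, $A(t)^{-1}$ is bounded, so $\norm{R(t)}_1\le c_R$. Put $\sigma(t)=\widetilde\rho(t)/\Tr\widetilde\rho(t)$ with $\Tr\widetilde\rho(t)=1+t^2\beta(t)$ and $\beta(t)$ bounded. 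Then
\begin{equation}
\sigma(t)-\rho=tX+t^2E(t),
\end{equation}
where $\norm{E(t)}_1\le c_E$ uniformly for $t\in(0,t_0]$. The constant $c_E$ collects $R(t)$ and the normalization correction $\bigl((1+t^2\beta)^{-1}-1\bigr)\widetilde\rho(t)/t^2$.

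\emph{Step 2: scaling of the two deviations.} Since $\cM_\cS$ is linear and $\cM_\cS(X)=\mathbf{0}$, we get $\cM_\cS(\sigma(t)-\rho)=t^2\cM_\cS(E(t))$. Partial traces are trace-norm contractive, so
\begin{equation}
\varepsilon(t)\le t^2 M c_E.
\end{equation}
On the other hand, $\norm{\sigma(t)-\rho}_1\ge t\norm{X}_1-t^2c_E\ge \tfrac{t}{2}\norm{X}_1$ for $t$ small.

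\emph{Step 3: contradiction.} Suppose $\rho$ were locally $\alpha$-robust with some $1/2<\alpha<1$ and constants $C,\varepsilon_0$. For $t$ small, $\varepsilon(t)\le\varepsilon_0$, and therefore
\begin{equation}
\tfrac{t}{2}\norm{X}_1\le C(Mc_E)^{\alpha}t^{2\alpha}.
\end{equation}
Because $2\alpha>1$, the right-hand side divided by $t$ tends to $0$ as $t\to0^+$. This contradicts $X\neq0$.

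A degenerate case needs a remark: if $E(t)$ happens to have vanishing marginals, then $\varepsilon(t)=0$, and robustness would force $\sigma(t)=\rho$, which is already impossible; the inequality still covers this case. The argument in fact excludes every $\alpha>1/2$. The main technical point is the uniform $O(t^2)$ bound on $E(t)$, which requires controlling $A(t)^{-1}$. This holds since $\norm{A(t)^{-1}}\le 2/\lambda_{\min}(\rho_{11})$ once $t\norm{X_{11}}\le\lambda_{\min}(\rho_{11})/2$.
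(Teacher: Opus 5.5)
Your proposal is correct and follows essentially the same route as the paper. You take a nonzero $X\in\cK_{\cD_0(\rho)}(\mathbf{0})\cap\cW_\cS$ from the tangent criterion, push it through the Schur-complement curve of Eq.~\eqref{eq:rho_tilde}, and play the first-order global deviation against the $O(t^2)$ marginal deviation to rule out $\alpha>1/2$. Your explicit constants (the factor $M$ from trace-norm contractivity of partial traces, and the uniform bound on $A(t)^{-1}$) just make the paper's $O(t^2)$ estimates concrete, and your argument also covers the full-rank case that the paper treats separately.
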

\begin{proof}
Let $\cK$ denote $\cK_{\cD_0(\rho)}(\mathbf{0})$.
For a full-rank $\rho$, we have $\cW_\cS=\{\mathbf{0}\}$, $\rho$ is always linearly robust.
Thus, we focus on rank-deficient $\rho$.

Suppose $\rho$ is not linearly robust, so there exists a nonzero $X\in\cK\cap \cW_\cS$.
We utilize the exact construction from Eq.~\eqref{eq:rho_tilde}, and we construct a normalized state $\sigma(t) = \widetilde{\rho}(t) / (1 + t^2 \beta(t))$ such that 
\begin{align}\label{eq:glo_trace_norm}
    \norm{\sigma(t)-\rho}_1\geq& t\norm{X}_1-\norm{\sigma(t)-\rho-tX}_1=t\norm{X}_1-\norm{\frac{\rho+tX+t^2R(t)}{1+t^2\beta(t)}-\rho-tX}_1\notag\\
    =&t\norm{X}_1-t^2\norm{\frac{R(t)-\beta(t)(\rho+tX)}{1+t^2\beta(t)}}_1=t\norm{X}_1-\order{t^2},
\end{align}
where $R(t)\coloneqq\diag[\mathbf{0},X_{01}A(t)^{-1}X_{10}]$ and $\beta(t)\coloneqq \Tr(X_{01}A(t)^{-1}X_{10})$.
For sufficiently small $t$, the linear term dominates, ensuring $\|\sigma(t) - \rho\|_1 \ge \frac{\|X\|_1}{2} t$.

Conversely, because $X \in \cW_\cS$ is invisible to the marginal map $\cM_\cS$, the first-order deviation vanishes on the marginal side. 
The deviation of marginals is purely driven by the second-order tail:
\begin{equation}
\label{eq:upper-quadratic}
\varepsilon(t):=\|\mathcal M_{\mathcal S}(\sigma(t)-\rho)\|_{\mathcal S}=\|\mathcal M_{\mathcal S}(\sigma(t)-\rho-tX)\|_{\mathcal S}
\le \|\mathcal M_{\mathcal S}\|\ \|\sigma(t)-\rho-tX\|_1
\le C_3 t^2,
\end{equation}
for some constant $C_3 > 0$.

Now fix any $\alpha\in(1/2,1)$ and suppose, for contradiction, that $\rho$ is locally $\alpha$-robust with constants $C$ and $\varepsilon_0$.
Choose $t>0$ small enough so that both the linear global deviation and Eq.~\eqref{eq:upper-quadratic} hold and also $\varepsilon(t)\le \varepsilon_0$.
This robustness gives
\begin{gather}
    \frac{\norm{X}_1}{2}t \le \norm{\sigma(t)-\rho}_1\leq C\,\varepsilon(t)^{\alpha} \le C\,(C_3 t^2)^\alpha = C\,C_3^{\alpha}\,t^{2\alpha},
\end{gather}
which cannot hold given $t\to 0^+$.
\end{proof}

\subsection{Linear robustness certification}
By Lemma~\ref{lem:tangent-explicit-sec} and Theorem~\ref{thm:lic-robust-marginal-op}, local linear robustness fails if and only if there exists a nonzero, traceless Hermitian matrix $X$ such that $\cM_{\cS}(X)=0$ and $P_0XP_0\succeq 0$, where $P_0$ is the projector onto $\ker(\rho)$.

Since $P_0XP_0\succeq 0$ implies either $P_0XP_0=0$ or $\Tr(P_0XP_0)>0$, we test the existence of such $X$ in two distinct steps.
First, we search for a nonzero solution with $P_0XP_0=0$, which reduces to the feasibility problem of a linear program:
\begin{equation}
\tag{$\mathbf P_L$}
\label{eq:Pl-sec_supp}
\text{find } X\in \Herm(\mathcal H)\backslash\{\mathbf{0}\}\ \ \text{s.t.}\
\Tr X=0,\ \ \mathcal M_{\mathcal S}(X)=0,\ \ P_0 X P_0= 0.
\end{equation}
If no such solution exists and $P_0\neq 0$, we then search for a direction $X$ with $\Tr(P_0XP_0)>0$.
Without loss of generality, we normalize this component to $\Tr(P_0XP_0)=1$, which reduces to an SDP feasibility problem
\begin{equation}
\tag{$\mathbf P_S$}
\label{eq:PA-sec}
\text{find } X\in \Herm(\mathcal H)\ \ \text{s.t.}\
\Tr X=0,\ \ \mathcal M_{\mathcal S}(X)=0,\ \ P_0 X P_0\succeq 0,\ \ \Tr(P_0 X P_0)=1.
\end{equation}
If either step produces a feasible $X\neq 0$, the tangent criterion fails, and $\rho$ is \texttt{NOT ROBUST}.
Otherwise, $\rho$ is certified to be locally linearly robust.
We summarize the procedure in Algorithm~\ref{alg:linear_certification}.
\begin{algorithm}[t]\label{alg:linear_certification}
\caption{Linear robustness certification}
\KwIn{UDA state $\rho\in\mathcal D(\mathcal H)$; marginal set $\mathcal S$; marginal map $\mathcal M_{\mathcal S}$}
\KwOut{\texttt{ROBUST} or \texttt{NOT ROBUST}, and (if not robust) a witness $X\in \cK_{\cD_0(\rho)}(\mathbf{0})\cap \cW_\cS\setminus\{\mathbf{0}\}$}

\DontPrintSemicolon

$P_0 \leftarrow$ projector onto $\ker(\rho)$\;
\If{Program in~\eqref{eq:Pl-sec_supp} is feasible with $X$}{
  \Return \texttt{NOT ROBUST} with witness $X$\;
}
\eIf{the program in \eqref{eq:PA-sec} is feasible with $X$}{
  \Return \texttt{NOT ROBUST} with witness $X$.\;
}{
  \Return \texttt{ROBUST}.\;
}
\end{algorithm}

\begin{theorem}\label{thm:linear_certify}
    Let $\rho$ be a UDA state with respect to the marginal set $\cS$. 
    $\rho$ is locally linearly robust if and only if Algorithm~\ref{alg:linear_certification} returns {\rm\texttt{ROBUST}}.
\end{theorem}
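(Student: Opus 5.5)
The plan is to reduce Theorem~\ref{thm:linear_certify} to the tangent criterion of Theorem~\ref{thm:lic-robust-marginal-op} together with the explicit cone description of Lemma~\ref{lem:tangent-explicit-sec}. Combining these, $\rho$ is locally linearly robust if and only if the set
\begin{equation*}
\cK_{\cD_0(\rho)}(\mathbf{0})\cap\cW_\cS=\{X\in\Herm(\cH)\mid \Tr X=0,\ \cM_\cS(X)=\mathbf{0},\ P_0XP_0\succeq 0\}
\end{equation*}
equals $\{\mathbf{0}\}$. It therefore suffices to show that Algorithm~\ref{alg:linear_certification} returns \texttt{NOT ROBUST} exactly when this set contains a nonzero element.

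For soundness of \texttt{NOT ROBUST}, I will check that any output witness is a nonzero element of the intersection. A feasible point of \eqref{eq:Pl-sec_supp} is nonzero by definition and satisfies $P_0XP_0=0\succeq0$, so it lies in the set. A feasible point of \eqref{eq:PA-sec} satisfies $\Tr(P_0XP_0)=1$, hence $X\neq0$, and it meets all constraints. In either case the tangent criterion fails, so $\rho$ is not linearly robust.

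For completeness, suppose a nonzero $X$ lies in the intersection. I will split into cases according to the psd block $P_0XP_0$. Since $P_0XP_0\succeq0$, either $P_0XP_0=0$ or $\Tr(P_0XP_0)>0$, because a psd matrix with zero trace vanishes.
\begin{itemize}
\item In the first case, $X$ is feasible for \eqref{eq:Pl-sec_supp}, so step one returns \texttt{NOT ROBUST}.
\item In the second case, $P_0\neq0$ necessarily, and every constraint other than the normalization is homogeneous: the linear constraints and the cone constraint $P_0XP_0\succeq 0$ are both invariant under positive scaling. Hence $X/\Tr(P_0XP_0)$ is feasible for \eqref{eq:PA-sec}. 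Since step one either already returned or was infeasible, step two is reached and returns \texttt{NOT ROBUST}.
\end{itemize}
Contrapositively, \texttt{ROBUST} is returned only when the intersection is trivial, i.e. exactly when $\rho$ is linearly robust.

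I expect little genuine obstacle here. The one point needing care is the branch in which $P_0=0$ (full-rank $\rho$): the algorithm as written still runs \eqref{eq:PA-sec}, which is then trivially infeasible because $\Tr(P_0XP_0)=0\neq1$. This branch is harmless, since only \eqref{eq:Pl-sec_supp} matters there. Its condition reduces to $X\in\cW_\cS\setminus\{\mathbf{0}\}$, and by the UDA property applied to the full-rank state, $\cW_\cS=\{\mathbf{0}\}$ (as noted in Corollary~\ref{coro:gap_supp}), so the algorithm correctly returns \texttt{ROBUST}. I will also remark that \eqref{eq:Pl-sec_supp}, posed with the non-closed constraint $X\neq0$, is really a linear-algebra question: is the kernel of the linear map $X\mapsto(\Tr X,\cM_\cS(X),P_0XP_0)$ nontrivial? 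This is decidable exactly, so \enquote{feasible} is unambiguous.
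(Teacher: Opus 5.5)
Your proposal is correct and follows the paper's argument. The paper reduces to the tangent criterion via the explicit cone of Lemma~\ref{lem:tangent-explicit-sec}, splits on $P_0XP_0=0$ versus $\Tr(P_0XP_0)>0$, and normalizes for the SDP; you simply spell out the soundness and completeness directions and the harmless $P_0=0$ branch that the paper's one-line proof leaves implicit.
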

\begin{proof}
From the previous illustration and Lemma~\ref{lem:tangent-explicit-sec}, $\texttt{ROBUST}$ arises iff both programs are infeasible, \emph{i.e.}, the tangent criterion holds.
\end{proof}

Linear robustness is a favorable yet stringent property that imposes strict informational requirements.
To successfully rule out all nontrivial invisible tangent directions, the marginal data must be sufficiently broad. 
This forces a resource trade-off: one must either access proportionally large subsystems or compensate by collecting an exponentially larger number of smaller marginals.
The following corollary quantifies this requirement.
\begin{corollary}[Lower bound on marginal size for linear robustness]
\label{cor:lower-bound-s_supp}
Let $\rho$ be a rank-$r$ ($r\leq d$ with $d$ the dimension of the Hilbert space) UDA state with respect to the marginal set $\mathcal S=\{S_k\}_{k=1}^M$ with $s := \max_k |S_k|$ denoting the maximum marginal size.
If $\rho$ is locally linearly robust, it is necessary that
\begin{equation}
\label{eq:s-lower_supp}
s\, \ge\, \frac{\log_2\!\bigl((r^2-1)+2r(d-r)\bigr)-\log_2 M}{2},
\end{equation}
Particularly, for pure states ($r=1$), this requires $s\,\ge\, (n-\log_2 M)/2$.
\end{corollary}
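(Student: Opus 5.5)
The plan is to combine the tangent criterion (Theorem~\ref{thm:lic-robust-marginal-op}) with a dimension count. If $\rho$ is locally linearly robust, then $\cK_{\cD_0(\rho)}(\mathbf{0})\cap\cW_\cS=\{\mathbf{0}\}$. By Lemma~\ref{lem:tangent-explicit-sec}, every traceless Hermitian $X$ with $P_0XP_0=0$ satisfies $P_0XP_0\succeq0$ and therefore lies in the tangent cone. Hence the real linear subspace
\[
\cV_0\coloneqq\{X\in\Herm(\cH)\mid \Tr X=0,\ P_0XP_0=0\}
\]
is contained in $\cK_{\cD_0(\rho)}(\mathbf{0})$. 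This forces $\cV_0\cap\cW_\cS=\{\mathbf{0}\}$, i.e.\ the restriction of $\cM_\cS$ to $\cV_0$ is injective (equivalently, program~\eqref{eq:Pl-sec_supp} is infeasible). Injectivity then gives $\dim_{\mathbb R}\cV_0\le\dim_{\mathbb R}\Im(\cM_\cS)$.

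Next I compute both dimensions. For $\cV_0$, I work in the block decomposition $\cH=\supp\rho\oplus\ker\rho$ used in Lemma~\ref{lem:tangent-explicit-sec}. The condition $P_0XP_0=0$ kills the $X_{00}$ block. What remains is a Hermitian $r\times r$ block $X_{11}$, contributing $r^2$ real parameters, and an arbitrary complex $r\times(d-r)$ block $X_{10}$ (with $X_{01}=X_{10}^\dagger$), contributing $2r(d-r)$ real parameters. The trace condition reduces to $\Tr X_{11}=0$, which is nontrivial because $r\ge1$ and removes one parameter. Thus $\dim\cV_0=(r^2-1)+2r(d-r)$. For the image, each marginal $X_{S_k}$ is a Hermitian operator on a space of dimension $\prod_{i\in S_k}d_i$, which is $2^{|S_k|}$ in the qubit setting implicit in the $\log_2$ and in the final $n$-qubit specialization. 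So each marginal spans at most $4^{|S_k|}\le4^s$ real dimensions, and $\dim\Im(\cM_\cS)\le M\cdot4^s$. One could subtract the shared trace constraint, but it is not needed.

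Chaining the two counts gives $(r^2-1)+2r(d-r)\le M\,4^s$. Taking $\log_2$ and rearranging yields Eq.~\eqref{eq:s-lower_supp}. For pure states, set $r=1$ and $d=2^n$: the left side becomes $2(2^n-1)$, so $s\ge(1+\log_2(2^n-1)-\log_2M)/2\ge(n-\log_2 M)/2$, since $2(2^n-1)\ge2^n$.

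There is no real obstacle. The one point needing care is the justification that $\cV_0$ sits inside the tangent cone, since the criterion only forbids cone directions. Here Lemma~\ref{lem:tangent-explicit-sec} applies directly because $0\succeq0$. I should also note explicitly that the local dimension is taken to be $2$ (qubits), or else state the general bound with $\log_2$ of the local dimensions.
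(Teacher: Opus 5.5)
Your proposal is correct and follows the paper's own proof. You place the linear subspace $\{X:\Tr X=0,\ P_0XP_0=0\}$ inside the tangent cone, use the tangent criterion to get injectivity of $\cM_\cS$ on it, and compare $(r^2-1)+2r(d-r)$ with $M\,4^s$; your explicit pure-state step $2(2^n-1)\ge 2^n$ and your note on the implicit qubit assumption usefully spell out what the paper leaves unstated.
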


\begin{proof}
Local linear robustness requires $\cK_{\cD_0(\rho)}(\mathbf{0})\cap \cW_\cS=\{\mathbf{0}\}$ by Theorem~\ref{thm:lic-robust-marginal-op}.
Let $\cK\coloneqq\cK_{\cD_0(\rho)}(\mathbf{0})$ and $L\coloneqq\{X\in\Herm(\mathcal H)\,\mid\, \Tr X=0,\ P_0XP_0=0\}\subseteq \cK$ throughout the proof.
Since $L\subseteq\cK$, this forces $L\cap \cW_\cS=\{\mathbf{0}\}$.
On the other hand, $L\cap \cW_\cS=\ker(\mathcal M_{\mathcal S}|_L)$, so it means the restriction
$\mathcal M_{\mathcal S}|_L:L\to\bigoplus_{k=1}^M\Herm(\mathcal H_{S_k})$ is injective, which implies 
\begin{equation}
\label{eq:dim-ineq}
\dim_{\mathbb{R}} L\ \le\ \sum_{k=1}^M \dim_{\mathbb{R}}\Herm(\mathcal H_{S_k})
\ =\ \sum_{k=1}^M (2^{|S_k|})^2
\ =\ \sum_{k=1}^M 4^{|S_k|},
\end{equation}
where $\dim_{\mathbb{R}}$ denotes the dimension as a real vector space.

Decompose $\mathcal H=\supp\rho\oplus\ker\rho$ with $\dim\supp\rho=r$ and $\dim\ker\rho=d-r$.
 In the diagonalizing basis where $\rho = \diag[\rho_{11}, \mathbf{0}]$, an operator $X \in L$ takes the block form:
\begin{gather}
X=\begin{pmatrix}A&B\\B^\ast&\mathbf{0}\end{pmatrix}.
\end{gather}
where $A \in \Herm(\cH_{\supp})$ is a traceless Hermitian matrix and $B \in \mathbb{C}^{r \times (d-r)}$ is an arbitrary complex matrix. 
Counting the independent real parameters gives:
\begin{gather}
\dim_{\mathbb{R}}L=(r^2-1)+2r(d-r).
\end{gather}
Substituting $\dim_{\mathbb{R}}L$ in \eqref{eq:dim-ineq} and recalling that $|S_k|\le s$ for all $k$, we obtain Eq.~\eqref{eq:s-lower_supp}.
The pure-state case is directly achieved by setting $r=1$ and $d=2^n$.
\end{proof}

\section{Case studies}
In the preceding sections, we established a rigorous theoretical framework for the robustness of UDA states, culminating in a testable geometric criterion for linear robustness.
We now turn to applying this framework to representative UDA state families.

To facilitate this analysis, we first introduce a common mechanism for establishing UDA properties, originally derived in~\cite{cramer2010efficient}.
Suppose a pure state is the unique ground state of a gapped Hamiltonian composed of local terms.
Then the deviation of marginals controls the energy gap, which in turn naturally bounds the global deviations.
We rephrase this standard mechanism as a bridging proposition for the UDA robustness, which is proved in the main text.

\begin{proposition}\label{prop:UniqueGround_supp}
    Suppose the pure state $\rho$ is the unique ground state of an $n$-qubit Hamiltonian $H=\sum_{i=1}^m H_i$ with spectral gap $\Delta>0$ and local supports $\cS\coloneqq\{S_1,\cdots,S_m\subseteq[n]\}$.
    Then $\rho$ is a UDA state with respect to marginal supports $\cS$ such that for any $\sigma\in\cD(\cH)$
    \begin{gather}\label{eq:quadra_supp}
        \norm{\sigma-\rho}_1\leq 2\sqrt{\frac{\omega_{\max}}{\Delta}}\norm{\cM_\cS(\sigma-\rho)}_\cS^{1/2},
    \end{gather}
    where $\omega_{\max} \coloneqq\max_{i=1}^m(\lambda_{\max}(H_i)-\lambda_{\min}(H_i))/2$.
    
\end{proposition}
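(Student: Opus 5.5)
The argument chains three elementary bounds: marginal deviations control the energy difference, the energy difference controls the fidelity via the gap, and the fidelity controls the trace distance. Without loss of generality, shift each local term so that the ground energy of $H$ is $E_0=0$. Write $\rho=\ketbra{\psi}$ and $\delta=\sigma-\rho$. Each $H_i$ acts nontrivially only on $S_i$, so $H_i=h_i\otimes I_{\bar S_i}$ and
\begin{gather}
\Tr(H\sigma)-\Tr(H\rho)=\sum_{i=1}^m \Tr\bigl(h_i\,\delta_{S_i}\bigr).
\end{gather}
Since $\delta_{S_i}$ is traceless, $h_i$ may be replaced by $h_i-c_iI$ with $c_i=(\lambda_{\max}+\lambda_{\min})/2$. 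This gives operator norm $\omega_i\le\omega_{\max}$, and Hölder's inequality yields $\Tr(H\sigma)-E_0\le\omega_{\max}\norm{\cM_\cS(\delta)}_\cS$.

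\textbf{Energy to fidelity.} The gap condition gives $H\succeq \Delta(I-\rho)$ once $E_0=0$. Hence
\begin{gather}
\Tr(H\sigma)\ge\Delta\bigl(1-\bra{\psi}\sigma\ket{\psi}\bigr),
\end{gather}
and therefore
\begin{gather}
1-F\le\frac{\omega_{\max}}{\Delta}\,\varepsilon,\qquad F\coloneqq\bra{\psi}\sigma\ket{\psi}.
\end{gather}

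\textbf{Fidelity to trace distance.} Because $\rho$ is pure, the Fuchs--van de Graaf inequality reads
\begin{gather}
\tfrac12\norm{\sigma-\rho}_1\le\sqrt{1-F}.
\end{gather}
This can also be checked directly: purify $\sigma$, or note $\norm{\sigma-\rho}_1\le 2\sqrt{1-F}$ for pure $\rho$ by the monotonicity of the fidelity. Combining the three bounds gives
\begin{gather}
\norm{\sigma-\rho}_1\le 2\sqrt{\omega_{\max}\varepsilon/\Delta},
\end{gather}
which is Eq.~\eqref{eq:quadra_supp}. UDA follows immediately: $\varepsilon=0$ forces $\sigma=\rho$.

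\textbf{Main obstacle.} The only delicate step is obtaining the constant $\omega_{\max}$ (half the spectral width) rather than $\norm{H_i}_{\mathrm{op}}$. This depends on using tracelessness of the marginal differences to center each local term, and on checking that centering does not affect the gap or the ground state. Centering shifts $H$ by a multiple of the identity, so both are preserved. For the fidelity step, I would cite the standard inequality $\norm{\sigma-\ketbra{\psi}}_1\le 2\sqrt{1-\bra{\psi}\sigma\ket{\psi}}$ rather than re-derive it.
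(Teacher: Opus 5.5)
Your proposal is correct and follows essentially the same route as the paper's proof. The gap inequality $H-E_0 I\succeq\Delta(I-\rho)$ bounds the infidelity by the energy excess; that excess is bounded via H\"older after centering each $H_i$ (allowed because $\sigma_{S_i}-\rho_{S_i}$ is traceless) to get $\omega_{\max}$; and the Fuchs--van de Graaf inequality for pure $\rho$ turns the infidelity bound into the trace-distance bound.
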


\begin{proof}
   The pure state $\rho$ is the unique ground-state projector of $H$. 
   The spectral decomposition gives the operator inequality $H - E_0 I \succeq \Delta(I - \rho)$, where $E_0$ is the ground state energy. 
   Taking the expectation with respect to an arbitrary state $\sigma \in \cD(\cH)$ gives:
   \begin{gather}\label{eq:energy_gap}
       \Tr((H - E_0 I)\sigma) \ge \Delta(1 - \Tr(\rho\sigma)).
   \end{gather}
    Since $H=\sum_{i=1}^m H_i$ and $E_0 = \Tr(H\rho)$, the energy difference can be expanded locally as $\sum_{i=1}^m \Tr(H_i(\sigma_{S_i} - \rho_{S_i}))$.
    Because the difference $\delta_i\coloneqq\sigma_{S_i} - \rho_{S_i}$ is traceless, replacing $H_i$ by $H_i-c_iI$ does not change the value for arbitrary constant $c_i$.
    Applying H\"{o}lder's inequality, we get $|\Tr(H_i \delta_i)| \le \|H_i - c_i I_{S_i}\|_\infty \|\delta_i\|_1$, and optimizing over $c_i$ makes the smallest first term equal to the spectrum range $\omega_i$ of $H_i$.
    Picking the maximum of $\omega_i$ and summing these bounds over all supports $\cS$ provides an upper bound:
    \begin{gather}\label{eq:energy_marginal}
        \Tr((H - E_0 I)\sigma)\leq \omega_{\max}\sum_{i=1}^m\norm{\sigma_{S_i}-\rho_{S_i}}_1.
    \end{gather}
    Chaining Eqs.~\eqref{eq:energy_gap} and~\eqref{eq:energy_marginal} immediately proves the infidelity bound:
    \begin{gather}\label{eq:infidelity}
        1 - \Tr(\rho\sigma) \le \frac{\omega_{\max}\sum_{i=1}^m\norm{\sigma_{S_i}-\rho_{S_i}}_1}{\Delta}.
    \end{gather}
    To translate this infidelity into the global trace distance, we invoke the Fuchs–van de Graaf inequalities~\cite{fuchs2002cryptographic}.
    For any pure state $\rho$, the trace distance is bounded by the fidelity: $\|\sigma - \rho\|_1 \le 2\sqrt{1 - \Tr(\rho\sigma)}$.
    Substituting the established infidelity bound completes the proof.
\end{proof}

\subsection{Stabilizer states}
Let $\rho\coloneqq\ket{\psi}\bra{\psi}\in\cD(\cH)$ be an $n$-qubit stabilizer state with respect to a maximal stabilizer group $\textsf{Stab}(\psi) \subset {\sf P}^n$ (the Pauli group modulo phases).
We select independent generators $g_1,\cdots,g_n\in\textsf{Stab}(\psi)$, whose joint $+1$ eigenspace is the one-dimensional space $\textsf{span}(\ket{\psi})$.
 It is straightforward to show that $\ket{\psi}$ is uniquely determined by any marginal set $\cS$ that covers the supports of all $n$ generators. 
 Furthermore, relying on Proposition~\ref{prop:UniqueGround_supp}, we can immediately establish its baseline robustness.
\begin{proposition}[Square-root robustness of stabilizer states]
\label{prop:stab-quadratic}
Let $\rho=\ket{\psi}\!\bra{\psi}\in\mathcal D(\mathcal H)$ be an $n$-qubit stabilizer state defined by independent Pauli generators $g_1, \dots, g_n$.
Let $\cS = \{S_k\}_{k=1}^n$ be the support set of these generators.
Then $\rho$ is a UDA with respect to $\cS$ exhibiting at least locally square-root robustness.
\end{proposition}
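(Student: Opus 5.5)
The plan is to reduce the statement to Proposition~\ref{prop:UniqueGround_supp} by exhibiting a gapped commuting parent Hamiltonian built from the generators. I take $\Pi_i\coloneqq(I-g_i)/2$ for each generator and set $H\coloneqq\sum_{i=1}^n\Pi_i$. Each $\Pi_i$ is a projector, so $0\preceq\Pi_i\preceq I$. Moreover, $\Pi_i$ acts nontrivially only on $S_i=\supp(g_i)$, since $g_i$ is a tensor product of Paulis that is the identity outside $S_i$. Thus $H$ is a sum of local terms whose supports form exactly $\cS$.

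Next I determine the spectrum of $H$. Because the $g_i$ commute, they are simultaneously diagonalizable, and $H$ acts on a joint eigenspace with eigenvalues $(\pm1)_i$ as the number of $-1$ signs. Independence of $g_1,\dots,g_n$ means every sign pattern occurs, each in a joint eigenspace of dimension $2^n/2^n=1$. The all-$+1$ eigenspace is $\mathrm{span}(\ket{\psi})$. Hence $H$ has ground energy $E_0=0$, its ground state $\rho$ is unique, and every orthogonal state has energy at least $1$. This gives spectral gap $\Delta=1$.

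The only point needing care is the counting claim that each sign pattern occurs with multiplicity one. I would justify it by noting that the projectors $\prod_i (I+s_ig_i)/2$ for $s\in\{\pm1\}^n$ are all unitarily related by Pauli conjugation. For each $i$, one can pick a Pauli anticommuting with $g_i$ and commuting with the other generators; such a Pauli exists by independence, using the symplectic structure. These projectors sum to $I$, so each has trace $2^n/2^n=1$.

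Finally I apply Proposition~\ref{prop:UniqueGround_supp}. Uniqueness of the ground state gives the UDA property with respect to $\cS$. Each $\Pi_i$ has eigenvalues $\{0,1\}$, so $\omega_{\max}=1/2$. Substituting gives
\begin{gather}
\norm{\sigma-\rho}_1\le 2\sqrt{\tfrac{1/2}{1}}\,\norm{\cM_\cS(\sigma-\rho)}_\cS^{1/2}=\sqrt{2}\,\norm{\cM_\cS(\sigma-\rho)}_\cS^{1/2}
\end{gather}
for all $\sigma$. This is square-root robustness with $C=\sqrt2$ and any $\varepsilon_0$, so $\rho$ is at least locally square-root robust.
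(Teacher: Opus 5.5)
Your proposal is correct and matches the paper's proof: the commuting parent Hamiltonian $H=\sum_i (I-g_i)/2$, with gap $\Delta=1$ and $\omega_{\max}=1/2$, is fed into Proposition~\ref{prop:UniqueGround_supp} to get the constant $\sqrt{2}$. Your destabilizer argument that every sign pattern has a one-dimensional eigenspace is valid but more than needed. For the gap it suffices that the all-$+1$ joint eigenspace is $\mathrm{span}(\ket{\psi})$ and that $H$ has nonnegative integer eigenvalues in a joint eigenbasis.
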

\begin{proof}
    We construct the commuting parent Hamiltonian $H = \sum_{i=1}^n \Pi_i$, where the local projectors are defined as $\Pi_i := (I - g_i)/2 \succeq 0$.
    By construction, $\ket{\psi}$ is the unique ground state of $H$ with an eigenvalue of $0$. 
    Because all other states must lie in the negative eigenspaces of at least one generator, the energy of any orthogonal state regarding $H$ is at least $1$, giving a spectral gap of $\Delta = 1$. 
    The eigenvalues of each local projector $\Pi_i$ are $0$ and $1$, yielding $\omega_{\max} = (1 - 0)/2 = 1/2$. Invoking Proposition~\ref{prop:UniqueGround_supp} directly ensures the square-root robustness bound:
    \begin{gather}
        \|\sigma - \rho\|_1 \le 2\sqrt{\frac{1/2}{1}} \|\cM_\cS(\sigma - \rho)\|_\cS^{1/2} = \sqrt{2} \|\cM_\cS(\sigma - \rho)\|_\cS^{1/2},\ \forall\, \sigma\in\cD(\cH).
    \end{gather}
\end{proof}

Besides this general baseline of at least square-root robustness, we can further show that it is generally easier to certify the linear robustness for all pure stabilizer states.
Specifically, we show that the SDP in \eqref{eq:PA-sec} is automatically infeasible.
In this sense, we can focus exclusively on the linear program in~\eqref{eq:Pl-sec_supp}.

\begin{proposition}[Simplified linear certification]
\label{prop:stab-linear-flat}
In the setting of Proposition~\ref{prop:stab-quadratic}, define
\begin{gather}
L:=\{X\in\Herm(\mathcal H):\ \Tr X=0,\ P_0XP_0=0\},
\end{gather}
where $P_0$ is the projector onto $\ker(\rho)$.
The pure stabilizer state $\rho$ is locally linearly robust if and only if $L\cap \cW_\cS=\{\mathbf{0}\}$. 
Consequently, the certification reduces entirely to checking the linear feasibility~\eqref{eq:Pl-sec_supp}.
\end{proposition}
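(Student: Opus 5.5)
The plan is to show that $\cK\cap\cW_\cS=L\cap\cW_\cS$, where $\cK:=\cK_{\cD_0(\rho)}(\mathbf{0})$. Once this holds, Theorem~\ref{thm:lic-robust-marginal-op} gives the equivalence, and the reduction to \eqref{eq:Pl-sec_supp} follows from Theorem~\ref{thm:linear_certify}.

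The inclusion $L\cap\cW_\cS\subseteq\cK\cap\cW_\cS$ is immediate. Lemma~\ref{lem:tangent-explicit-sec} identifies $\cK$ with $\{X:\Tr X=0,\,P_0XP_0\succeq0\}$, and $L$ is contained in this set.

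For the reverse inclusion, take $X\in\cK\cap\cW_\cS$. I will use the parent Hamiltonian $H=\sum_i\Pi_i$ with $\Pi_i=(I-g_i)/2$. Each $\Pi_i$ is supported on $S_i\in\cS$, and $X_{S_i}=0$, so
\begin{gather}
\Tr(HX)=\sum_i\Tr(\Pi_i X)=\sum_i\Tr\bigl((\Pi_i)_{S_i}X_{S_i}\bigr)=0,
\end{gather}
where $(\Pi_i)_{S_i}$ denotes $\Pi_i$ viewed as an operator on $S_i$.

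Next I use the operator identity $H\succeq I-\rho=P_0$, which holds because $H\rho=0$ and every eigenvalue of $H$ on $\ker\rho$ is at least $1$, as in the proof of Proposition~\ref{prop:stab-quadratic}. Since the $g_i$ commute, $H$ commutes with $\rho$. Hence $H=P_0HP_0$, because $H$ annihilates $\supp\rho$ and is block diagonal. This gives
\begin{gather}
0=\Tr(HX)=\Tr\bigl(H\,P_0XP_0\bigr).
\end{gather}
Now $P_0XP_0\succeq0$, and on $\ker\rho$ we have $P_0HP_0\succeq P_0$. Therefore
\begin{gather}
\Tr\bigl(H\,P_0XP_0\bigr)\ge\Tr(P_0XP_0)\ge0,
\end{gather}
which forces $\Tr(P_0XP_0)=0$. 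A positive semidefinite operator with zero trace vanishes, so $P_0XP_0=0$ and $X\in L\cap\cW_\cS$.

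Finally, \eqref{eq:PA-sec} requires $\Tr(P_0XP_0)=1$, so it is always infeasible. The algorithm therefore reduces to checking \eqref{eq:Pl-sec_supp}.

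The main point needing care is justifying $H=P_0HP_0$, i.e.\ the block-diagonal structure $H\rho=\rho H=0$. It follows from $g_i\ket{\psi}=\ket{\psi}$ together with Hermiticity of $\Pi_i$, so this step is routine.
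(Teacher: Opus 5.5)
Your proposal is correct and follows essentially the same route as the paper. Both use the parent Hamiltonian $H=\sum_i\Pi_i$ to get $\Tr(HX)=0$ for $X\in\cW_\cS$, reduce this to $\Tr(HP_0XP_0)$ via $H\rho=\rho H=0$, and then combine $H\succeq P_0$ with $P_0XP_0\succeq 0$ to force $P_0XP_0=0$, which yields $\cK\cap\cW_\cS=L\cap\cW_\cS$ and hence the infeasibility of \eqref{eq:PA-sec}.
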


\begin{proof}
Let $H=\sum_{i=1}^n \Pi_i$ be the parent Hamiltonian.
As established, $H$ has a ground state energy of $0$ and a spectral gap of $1$, which implies the operator inequality $H \succeq I - \rho = P_0$.

Let $\cK\coloneqq\cK_{\cD_0(\rho)}(\mathbf{0})$ throughout this proof.
Suppose $X\in \cK\cap \cW_\cS$ is a valid direction.
Since $X\in \cW_\cS$, it is invisible from marginals, hence
\begin{gather}
\Tr(HX)=\sum_{i=1}^n \Tr(\Pi_iX)=\sum_{i=1}^n \Tr\!\bigl(\Pi_i\,\Tr_{\overline{S_i}}X\bigr)=0.
\end{gather}
Write $X$ in block form related to  decomposition $\mathcal H=\supp\rho\oplus\ker \rho$ using the projectors $\rho$ and $P_0$:
\begin{gather}
X=\rho X\rho+\rho XP_0+P_0X\rho+P_0XP_0.
\end{gather}
Because $H\rho=0$ and $\rho H=0$, applying the trace reduces to the null-space block: $\Tr(HX) = \Tr(H P_0 X P_0)$.
Recall from Lemma~\ref{lem:tangent-explicit-sec} that any $X\in \cK$ implies $P_0XP_0\succeq 0$.
Applying $H\succeq P_0$, we have
\begin{gather}
0=\Tr(HX)=\Tr\!\bigl(H\,P_0XP_0\bigr)\ \ge\ \Tr\!\bigl(P_0\,P_0XP_0\bigr)=\Tr(P_0XP_0)\ \ge\ 0.
\end{gather}
Consequently, $\Tr(P_0XP_0)=0$ and $P_0XP_0\succeq 0$ force $P_0XP_0=0$, i.e.\ $X\in L$.
Since $L\subseteq\cK$, this proves $\cK\cap \cW_\cS=L\cap \cW_\cS$.
By Theorem~\ref{thm:lic-robust-marginal-op}, we can simplify the criterion.
\end{proof}

\subsection{Dicke states}
For any $n$-qubit system with $n\geq2$, the highly symmetric Dicke state with Hamming weight $1\leq k\leq n-1$ is defined as 
\begin{gather}
    \ket{D(n,k)}\coloneqq\binom{n}{k}^{-1/2}\sum_{\substack{\mathbf{x}\in\{0,1\}^n\\\textsf{wt}(\mathbf{x})=k}}\ket{\mathbf{x}}.
\end{gather}

We first establish that every Dicke state is at least square-root robust with respect to the full set of two-local marginals.
\begin{proposition}[Square-root robustness of Dicke states]
\label{prop:dicke_quadratic}
Let $\rho_{n,k} \coloneqq \ket{D(n,k)}\bra{D(n,k)}$ for $1\leq k\leq n-1$. 
The state $\rho_{n,k}$ is a UDA state with respect to the set $\cS$ of all two-local marginals. 
Furthermore, it exhibits at least locally square-root robustness, satisfying the explicit bound:
\begin{gather}
    \norm{\sigma - \rho_{n,k}}_1 \le \sqrt{2 + \frac{2|n-2k|}{n-1}} \norm{\cM_\cS(\sigma - \rho_{n,k})}_\cS^{1/2},\ \forall\,\sigma\in\cD(\cH).
\end{gather}
\end{proposition}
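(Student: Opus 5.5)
We will obtain the bound directly from Proposition~\ref{prop:UniqueGround_supp}. The plan is to write a rescaled version of the parent Hamiltonian $H_{n,k}$ in Eq.~\eqref{eq:H_nk} as a sum of purely two-local terms $h_{ij}$ supported on $\{i,j\}$. We then compute the gap $\Delta$ and the local spectral range $\omega_{\max}$, and tune the free coefficients to minimize $\omega_{\max}/\Delta$. UDA-ness follows automatically once $\ket{D(n,k)}$ is shown to be the unique ground state.

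Concretely, write $m\coloneqq n-2k$ and take $H=a\sum_{i<j}\Pi^-_{ij}+\tfrac{b}{4}\bigl(\sum_i Z_i-mI\bigr)^2$ with $a,b>0$. Expanding the square gives $(n+m^2)I+2\sum_{i<j}Z_iZ_j-2m\sum_iZ_i$. Each qubit lies in exactly $n-1$ pairs, so we distribute the one-body terms evenly over pairs. This yields
\begin{gather}
h_{ij}=a\,\Pi^-_{ij}+\tfrac{b}{4}\Bigl[2Z_iZ_j-\tfrac{2m}{n-1}(Z_i+Z_j)\Bigr]+\text{const}.
\end{gather}
The constant is irrelevant because $\omega_{\max}$ is a spectral range. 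In the basis $\ket{00},\ket{11},\ket{\Psi^+},\ket{\Psi^-}$, this $h_{ij}$ is diagonal. With $c\coloneqq m/(n-1)$ its eigenvalues are $\tfrac b2(1-2c)$, $\tfrac b2(1+2c)$, $-\tfrac b2$, and $a-\tfrac b2$.

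For the spectrum of $H$, we use that both pieces commute with the total spin $\vec S^2$ and with $\sum_i Z_i$. The Casimir identity $\sum_{i<j}\mathrm{SWAP}_{ij}=\vec S^2+n(n-4)/4$ shows that $\sum_{i<j}\Pi^-_{ij}$ is $0$ on the symmetric subspace ($J=n/2$), $n/2$ on $J=n/2-1$, and larger below. The second term vanishes exactly on Hamming weight $k$ and is at least $b$ otherwise. Hence the unique ground state is $\ket{D(n,k)}$ with energy $0$, and $\Delta=\min(an/2,\,b)$.

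It remains to optimize the ratio
\begin{gather}
\frac{\omega_{\max}}{\Delta}=\frac{\max\bigl(a-\tfrac b2,\ \tfrac b2(1+2|c|)\bigr)-\min\bigl(-\tfrac b2,\ \tfrac b2(1-2|c|)\bigr)}{2\min(an/2,\,b)}.
\end{gather}
Since $|c|\le1$, the minimum is $-b/2$. We first try the balanced choice $b=an/2$ and then check whether a smaller $a$ does better. Taking $a$ small enough that $a-b/2\le \tfrac b2(1+2|c|)$ while $b$ still controls the gap appears optimal. Then $\omega_{\max}=\tfrac b2(1+|c|)$ and $\Delta=b$, giving $\omega_{\max}/\Delta=(1+|c|)/2$. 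This yields $2\sqrt{\omega_{\max}/\Delta}=\sqrt{2+2|n-2k|/(n-1)}$, the claimed constant.

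The main obstacle is this last optimization. The condition $an/2\ge b$ must be consistent with $a\le b(1+|c|)$, which holds for $n\ge2$. We must also verify carefully that no state of weight $k$ outside the symmetric subspace has antisymmetric energy below $n/2$, which is where the Casimir computation is essential.
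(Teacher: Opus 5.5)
Your proposal is correct and follows essentially the same route as the paper. Both use the same two-local parent Hamiltonian and the total-spin (Casimir) identity to get the gap, redistribute each one-body $Z_i$ term evenly over the $n-1$ pairs containing qubit $i$, find the pair spectrum with half-range $\tfrac12+\tfrac{|n-2k|}{2(n-1)}$, and conclude via Proposition~\ref{prop:UniqueGround_supp}. The only difference is your extra weights $a,b$: the paper simply takes $a=b=1$, which already lies in your window $2b/n\le a\le b(1+|c|)$, so the optimization step is not needed for the stated bound.
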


\begin{proof}
To prove this, we construct a two-local parent Hamiltonian for which $\rho_{n,k}$ is the unique ground state:
\begin{gather}\label{eq:H_nk_supp}
H_{n,k}:=\sum_{1\le i<j\le n}\Pi^-_{ij}+\frac{1}{4}\,\left[\sum_{i=1}^nZ_i-(n-2k)I\right]^2,
\end{gather}
where $\Pi_{ij}^- := (I - \text{SWAP}_{ij})/2$ projects onto the antisymmetric subspace of qubits $i$ and $j$.

We evaluate the spectrum of both terms. The projector $\Pi_{ij}^-$ can be written as $\frac{1}{4}(I - \vec{\sigma}_i \cdot \vec{\sigma}_j)$. 
Using the total angular momentum operator $J^2 = J_x^2 + J_y^2 + J_z^2$ where $J_\gamma := \frac{1}{2}\sum_i \sigma_i^\gamma$, the sum over all pairs is:
\begin{equation}\label{eq:S_as_J2}
S\coloneqq\sum_{1\leq i<j\leq n}\Pi^-_{ij}=\frac{n(n+2)}{8}I-\frac12\,J^2.
\end{equation}
The operator $J^2$ has eigenvalues $j(j+1)$ for total spin $j \in \{\frac{n}{2}, \frac{n}{2}-1, \dots\}$. 
The maximum spin $j = \frac{n}{2}$ corresponds exactly to the totally symmetric subspace, yielding an eigenvalue of $0$ for $S$. 
The next highest spin $j = \frac{n}{2}-1$ yields a first excited state energy of $\frac{n}{2}$. 
Thus, $S \succeq \frac{n}{2}(I - \Pi_{\text{sym}})$, where $\Pi_{\text{sym}}$ projects onto the totally symmetric subspace.

The second term in $H_{n,k}$ is diagonal in the computational basis. 
For a computational-basis state with Hamming weight $w$, the operator $\sum Z_i$ has eigenvalue $n-2w$. 
Thus, the squared bracket has eigenvalue $4(w-k)^2$. 
Scaled by $1/4$, this term vanishes strictly on the weight-$k$ subspace and has a smallest positive eigenvalue of $1$. Therefore, $\frac{1}{4}[\sum Z_i - (n-2k)I]^2 \succeq I - \Pi_k$, where $\Pi_k$ projects onto the weight-$k$ subspace.

Because each $\Pi_{ij}^-$ commutes with $Z_i + Z_j$, the operators $S$ and $[\sum Z_i]^2$ commute. 
Their common kernel is the intersection of the totally symmetric subspace and the weight-$k$ subspace, which uniquely defines the one-dimensional span of $|D(n,k)\rangle$. 
Every orthogonal state incurs an energy penalty from either the symmetry term or the weight term, yielding a strictly positive spectral gap:
\begin{gather}
    \Delta_{n,k}\geq\min\left\{\frac{n}{2},1\right\}=1.
\end{gather}

To invoke Proposition~\ref{prop:UniqueGround}, we calculate the maximum spectrum range of $H_{n,k}$.
By distributing the one-local $Z_i$ contribution symmetrically among the two-local terms, it suffices to examine the pair $(1,2)$:
\begin{gather}
    H_{n,k}^{(1,2)}=\Pi_{12}^-+\frac{1}{2}Z_1Z_2-\frac{n-2k}{2(n-1)}(Z_1+Z_2),
\end{gather}
which has the spectrum range
\begin{gather}
    \omega_{12}\leq\frac{1}{2}+\frac{\abs{n-2k}}{2(n-1)}.
\end{gather}
\end{proof}
We further analyze when this square-root robustness is exactly tight. 
To this end, we construct counterexamples in the following.

\begin{proposition}[Failure of linear robustness]
\label{prop:dicke_no_linear}
Fix $n\ge 3$ and $1\le k\le n-1$.
If there exists an integer $\ell\in\{0,1,\dots,n\}$ such that $|\ell-k|\geq 3$, then $\rho_{n,k}$ is \emph{not} locally linearly robust.
Consequently, combining with Corollary~\ref{coro:gap_supp} and Proposition~\ref{prop:dicke_quadratic}, it is exactly square-robust.
\end{proposition}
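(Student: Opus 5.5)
The plan is to exhibit an explicit nonzero direction in $\cK_{\cD_0(\rho_{n,k})}(\mathbf{0})\cap\cW_\cS$ and then invoke Theorem~\ref{thm:lic-robust-marginal-op}. Fix $\ell$ with $|\ell-k|\ge 3$, and pick any computational-basis state $\ket{\psi_\ell}=\ket{\mathbf{y}}$ with $\textsf{wt}(\mathbf{y})=\ell$. The candidate direction is the cross term
\begin{gather}
X\coloneqq\ket{D(n,k)}\!\bra{\psi_\ell}+\ket{\psi_\ell}\!\bra{D(n,k)}.
\end{gather}
This $X$ is Hermitian and nonzero. It is traceless because $\braket{\psi_\ell}{D(n,k)}=0$, since the weights differ.

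For membership in the tangent cone, I will use the explicit characterization of Lemma~\ref{lem:tangent-explicit-sec}. Here $P_0=I-\rho_{n,k}$, and $P_0XP_0=0$ because both terms of $X$ contain a factor $\ket{D(n,k)}$, which $P_0$ annihilates. Hence $P_0XP_0\succeq0$, and $X$ lies in the cone, and even in the subspace $L$ appearing in the linear program~\eqref{eq:Pl-sec_supp}.

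The main step is showing $X\in\cW_\cS$, i.e.\ $\Tr_{\overline{\{i,j\}}}X=0$ for every pair $i<j$. Expand $\ket{D(n,k)}$ in basis states $\ket{\mathbf{x}}$ with $\textsf{wt}(\mathbf{x})=k$. The partial trace of $\ket{\mathbf{x}}\!\bra{\mathbf{y}}$ over all qubits except $i,j$ equals $\ket{x_ix_j}\!\bra{y_iy_j}$ times $\prod_{m\ne i,j}\delta_{x_my_m}$. The Hamming distance between $\mathbf{x}$ and $\mathbf{y}$ is at least $|\textsf{wt}(\mathbf{x})-\textsf{wt}(\mathbf{y})|=|k-\ell|\ge3$. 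Two positions $i,j$ can absorb at most two of these disagreements, so some $m\notin\{i,j\}$ has $x_m\ne y_m$ and every such term vanishes. The same holds for the adjoint term, so $\cM_\cS(X)=\mathbf{0}$. This is the only place where the hypothesis $|\ell-k|\ge3$ enters, and it is the crux, although it is a short counting argument.

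With $\mathbf{0}\neq X\in\cK_{\cD_0(\rho_{n,k})}(\mathbf{0})\cap\cW_\cS$, Theorem~\ref{thm:lic-robust-marginal-op} says $\rho_{n,k}$ is not locally linearly robust. As a concrete sanity check, matching the main text, the family $\ket{\phi(t)}=\sqrt{1-t^2}\ket{D(n,k)}+t\ket{\psi_\ell}$ satisfies $\phi(t)\!\bra{\phi(t)}-\rho_{n,k}=t\sqrt{1-t^2}\,X+t^2(\ket{\psi_\ell}\!\bra{\psi_\ell}-\rho_{n,k})$. Its trace norm is $\Theta(t)$, while its marginal norm is $O(t^2)$.

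For the final clause, Proposition~\ref{prop:dicke_quadratic} gives at least square-root robustness. Corollary~\ref{coro:gap_supp} rules out every exponent in $(1/2,1)$, and linear robustness fails as just shown. Therefore the maximal exponent is exactly $\alpha_\star=1/2$.
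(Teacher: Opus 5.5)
Your proof is correct and uses the same construction as the paper: the cross term between $\ket{D(n,k)}$ and a weight-$\ell$ basis state has vanishing two-local marginals because the Hamming distance is at least $3$. The only difference is packaging. You certify $X$ as a nonzero element of $\cK_{\cD_0(\rho_{n,k})}(\mathbf{0})\cap\cW_\cS$ and invoke Theorem~\ref{thm:lic-robust-marginal-op}, whereas the paper argues directly from the curve $\sigma_t$ (whose $\Theta(t)$ versus $O(t^2)$ scaling you also record), and your final clause matches the paper's.
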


\begin{proof}
Assume such an $\ell$ exists, and let $\ket{\mathbf{x}}$ be a computational basis state with Hamming weight $\ell$. 
Because any two-local operator acting on $\ket{\mathbf{x}}$ can change its Hamming weight by at most $2$, the cross-term vanishes for any two-local operator $B_{ij}$, i.e., $\bra{D(n,k)}B_{ij}\otimes I\ket{\mathbf{x}}=0$.
Since this equation holds for any two-qubit operators $B_S$, it follows that
\begin{gather}\label{eq:cross_vanish}
\Tr_{\overline {ij}}\bigl(\ket{\mathbf{x}}\!\bra{D(n,k)}\bigr)=0
\qquad\text{and hence also}\qquad
\Tr_{\overline {ij}}\bigl(\ket{D(n,k)}\!\bra{\mathbf{x}}\bigr)=0.
\end{gather}

For any $t\in(0,1)$, define the superposed state
\begin{gather}
\ket{\psi_t}\coloneqq\sqrt{1-t^2}\,\ket{D(n,k)}+t\,\ket{\mathbf{x}}.
\end{gather}
The corresponding density operator is
\begin{gather}
\sigma_t
=(1-t^2)\rho_{n,k}
+t\sqrt{1-t^2}\Bigl(\ket{\mathbf{x}}\!\bra{D(n,k)}+\ket{D(n,k)}\!\bra{\mathbf{x}}\Bigr)
+t^2\ket{\mathbf{x}}\!\bra{\mathbf{x}}.
\end{gather}
Because the cross-terms vanish under the marginal map, the two-local reduced density matrix is purely a probabilistic mixture:
\begin{gather}
(\sigma_t)_{ij}=(1-t^2)(\rho_{n,k})_{ij}+t^2(\ket{\mathbf{x}}\!\bra{\mathbf{x}})_{ij}.
\end{gather}
The local deviation is purely driven by the second-order weight shift:
\begin{gather}
\varepsilon(t)\coloneqq\norm{\cM_\cS(\sigma_t-\rho_{n,k})}_\cS=t^2\norm{\cM_\cS(\ket{\mathbf{x}}\!\bra{\mathbf{x}}-\rho_{n,k})}_\cS=\order{t^2}.
\end{gather}
However, the global deviation scales linearly with $t$, as $\|\sigma_t - \rho_{n,k}\|_1 = \Theta(t)$.

If $\rho_{n,k}$ were linearly robust, there would exist a constant $C$ such that for small $t$, $\|\sigma_t - \rho_{n,k}\|_1 \le C \varepsilon(t)$, implying $t \le C' t^2$. Dividing by $t$ yields $1 \le C' t$, which is impossible as $t \to 0^+$. 
Thus, linear robustness fails. 
By Corollary~\ref{coro:gap_supp} and Proposition~\ref{prop:dicke_quadratic}, it must be exactly square-root robust.
\end{proof}

This geometric justification holds for all Dicke states except for $\ket{D(3,1)}$, $\ket{D(3,2)}$, and $\ket{D(4,2)}$, where the total qubit number is too small to find an $\ell$ satisfying $|\ell - k| \ge 3$.

To classify these remaining exceptions, we execute our linear robustness certification (Algorithm~\ref{alg:linear_certification}). The protocol returns \texttt{ROBUST} for $\ket{D(3,1)}$ and its symmetric counterpart $\ket{D(3,2)}$, certifying them as the sole linearly robust states in the family. 
Conversely, the protocol returns \texttt{NOT ROBUST} for $\ket{D(4,2)}$, forcing it into the exact square-root robust regime.
We thus establish the complete classification:
\begin{proposition}[Complete classification of Dicke states]
    For any integers $n\ge 3$ and $1\le k\le n-1$, the Dicke state $\ket{D(n,k)}$ is an exactly locally square-root robust UDA state with respect to the full collection of two-local marginals, except that
$\ket{D(3,1)}$ and $\ket{D(3,2)}$ are locally linearly robust.
\end{proposition}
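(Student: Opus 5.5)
The classification follows by combining the results already established in this appendix. The only genuinely new work concerns the three exceptional states.

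First, fix $n\ge 3$ and $1\le k\le n-1$. By Proposition~\ref{prop:dicke_quadratic}, $\rho_{n,k}$ is UDA with respect to all two-local marginals and is at least locally square-root robust. So it suffices to decide, for each $(n,k)$, whether it is linearly robust. By Corollary~\ref{coro:gap_supp}, a state that is not linearly robust cannot be $\alpha$-robust for any $\alpha\in(1/2,1)$. Together with square-root robustness, this makes $\alpha_\star=1/2$ exactly.

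Second, I will dispose of the generic case with Proposition~\ref{prop:dicke_no_linear}. An integer $\ell\in\{0,\dots,n\}$ with $|\ell-k|\ge 3$ exists iff $k\ge 3$ or $n-k\ge 3$. The only pairs with $n\ge3$, $k\le 2$ and $n-k\le 2$ are $(3,1)$, $(3,2)$ and $(4,2)$. For every other pair, linear robustness fails and the state is exactly square-root robust.

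Third, I will handle the three exceptions through the tangent criterion, Theorem~\ref{thm:lic-robust-marginal-op}, in the form of Algorithm~\ref{alg:linear_certification}. Since these are pure states, the argument of Proposition~\ref{prop:stab-linear-flat} applies verbatim with $H_{n,k}$. That argument needs $H_{n,k}\succeq \Delta P_0$ with $\Delta\ge1$ and $H_{n,k}\rho=0$, both shown in the proof of Proposition~\ref{prop:dicke_quadratic}. Any $X\in\cK\cap\cW_\cS$ satisfies $\Tr(H_{n,k}X)=0$ because $H_{n,k}$ is two-local, which forces $P_0XP_0=0$. Hence only the linear program \eqref{eq:Pl-sec_supp} needs to be checked.

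With $|D\rangle$ the Dicke state and $|v\rangle\perp|D\rangle$, such $X$ has the form $X=|D\rangle\langle v|+|v\rangle\langle D|$, up to a real multiple of $\rho$, which tracelessness forces to vanish. The condition $\cM_\cS(X)=0$ is then a linear system on $v$: $\langle D|B_{ij}|v\rangle=0$ must hold for every two-qubit operator $B_{ij}$.

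For $D(4,2)$, I take $v=|0000\rangle-|1111\rangle$, or a suitable combination of the weight-$0$ and weight-$4$ vectors. Any two-local operator maps these vectors to weights $\le2$ and $\ge2$ respectively. I expect the weight-$2$ components to cancel by symmetry of $|D(4,2)\rangle$: $\langle D|B_{ij}|0000\rangle$ and $\langle D|B_{ij}|1111\rangle$ are both proportional to the appropriate matrix element of $B_{ij}$. A nonzero solution should follow, and if it does not, I will solve the small linear system explicitly.

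For $D(3,1)$, I must show the linear system has only the trivial solution. The $36$ two-local Hermitian test operators span the $7$-dimensional space of $v\perp|D(3,1)\rangle$, which I will verify by exhibiting enough independent functionals $v\mapsto\langle D|B_{ij}|v\rangle$. Since $D(3,2)$ is obtained from $D(3,1)$ by global $X^{\otimes3}$, which preserves two-local marginal structure, the same conclusion holds for it.

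The main obstacle is this last rank verification for $D(3,1)$. Concretely, I need the maps $v\mapsto \Tr_{\bar{ij}}(|v\rangle\langle D|+|D\rangle\langle v|)$ to be jointly injective on $D^\perp$. The plan is a direct computation, with the paper's numerical certification as confirmation.
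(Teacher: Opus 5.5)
There is a genuine gap. The step that fails is your conversion of $\cM_\cS(X)=0$ into a linear system on $v$. The map $v\mapsto X=|D\rangle\langle v|+|v\rangle\langle D|$ is only real-linear. For Hermitian $B$ one has $\Tr(BX)=2\,\mathrm{Re}\langle D|B|v\rangle$. Hence $\cM_\cS(X)=0$ holds iff $\mathrm{Re}\langle D|B_{ij}|v\rangle=0$ for all Hermitian two-qubit $B_{ij}$. Equivalently, each $Y_{ij}:=\Tr_{\overline{ij}}(|v\rangle\langle D|)$ must be anti-Hermitian. It does not require $Y_{ij}=0$, which is what ``$\langle D|B_{ij}|v\rangle=0$ for every two-qubit operator'' says.

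Your system is strictly stronger than the true condition, and this flips the verdict for $D(4,2)$ as you set it up. We have $(|ab\rangle\langle 11|)_{ij}|D(4,2)\rangle\propto|ab\rangle_{ij}|00\rangle_{\overline{ij}}$, and likewise with $00\leftrightarrow11$. Every 4-bit string has two equal bits, so $\{B|D(4,2)\rangle\}$ spans $\mathbb{C}^{16}$ and your system has only $v=0$. Your fallback of ``solving the small system explicitly'' would therefore certify $D(4,2)$ as linearly robust, contradicting the statement.

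Your candidate $v=|0000\rangle-|1111\rangle$ does violate your system: $B=(|11\rangle\langle 00|)_{12}$ gives $\langle D|B|v\rangle=1/\sqrt6$. The two matrix elements do not cancel by symmetry. For Hermitian $B$ they are complex conjugates of each other, since $\langle D|B_{12}|0000\rangle=B_{11,00}/\sqrt6$ and $\langle D|B_{12}|1111\rangle=\overline{B_{11,00}}/\sqrt6$. That is exactly why the candidate satisfies the correct condition: $Y_{ij}=(|00\rangle\langle 11|-|11\rangle\langle 00|)/\sqrt6$ is anti-Hermitian for every pair.

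For $D(3,1)$ the same confusion makes your rank count prove the wrong implication. Triviality of the stronger complex system says nothing about the weaker real one. A single pair already makes the complex map injective, because $Y_{12}=0$ lists every amplitude of $v$. Yet for that pair alone the real condition admits $v=|000\rangle-|011\rangle-|101\rangle$. Your last sentence states the right requirement, but your ``7-dimensional'' complex count does not deliver it. You need the real functionals $v\mapsto\mathrm{Re}\langle D|B|v\rangle$ to span the $14$-dimensional real dual of $D^\perp$.

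With the correct condition the check is short:
\begin{itemize}
\item The $|11\rangle$ row of $\Tr_3(|v\rangle\langle D|)$ is $(v_{111},v_{110},v_{110},0)/\sqrt3$, while its $|11\rangle$ column vanishes. Anti-Hermiticity gives $v_{111}=v_{110}=0$, and the other two pairs give $v_{101}=v_{011}=0$.
\item The $(00,01)$ entry gives $v_{000}=-\overline{v_{011}}=0$.
\item The diagonal entries make the weight-one amplitudes purely imaginary.
\item The $(01,10)$-type entries of the three pairs make these amplitudes equal, and $v\perp D$ then forces them to vanish.
\end{itemize}

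The rest of your plan is sound and follows the paper: the generic $|\ell-k|\ge3$ case, the reduction to $(3,1),(3,2),(4,2)$, the gap corollary, and $X^{\otimes3}$ for $D(3,2)$. Your use of $H_{n,k}$ to show the semidefinite step is automatically infeasible is also sound. With the fix, your route becomes an analytic replacement for the paper's numerical certification of the three exceptional states.
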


Based on these findings of Dicke states, we show more concrete applications based on their UDA robustness, which sheds light on how the UDA robustness theory would inspire more applications in future developments.
A state $\rho$ is genuinely multipartite entangled if it cannot be decomposed as a convex combination of biseparable states. 
This convex hull is denoted by $\text{Bisep}$.
Inspired by this definition, a standard projective GME witness based on a pure state $\ket{\psi}$ takes the form~\cite{guhne2009entanglement}:
\begin{gather}
    W_\psi := \beta(\rho_\psi) I - \rho_\psi,
\end{gather}
where $\rho_\psi = \ket{\psi}\!\bra{\psi}$, and $\beta(\rho_\psi) \coloneqq \sup_{\theta \in \text{Bisep}} \Tr(\rho_\psi \theta)$ is the maximum overlap with any biseparable state.
For any test state $\sigma \in \cD(\cH)$, a strictly negative expectation value $\Tr(W_\psi \sigma) < 0$ definitively certifies that $\sigma$ contains genuine multipartite entanglement.
For the Dicke state family $\rho_{n,k} = \ket{D(n,k)}\!\bra{D(n,k)}$, the maximum biseparable overlaps are analytically known as~\cite{bergmann2013entanglement}:
\begin{gather}
    \beta(D(n,k)) = 
    \begin{cases}
\frac{n-k}{n} & \text{if } k < n/2 \\
\frac{n}{2(n-1)} & \text{if } k = n/2 \\
\frac{k}{n} & \text{if } k > n/2
\end{cases}.
\end{gather}

Directly evaluating $\Tr(W_\psi \sigma)$ typically requires measuring the global fidelity $\Tr(\rho_\psi \sigma)$, which involves global measurements. 
Fortunately, because Dicke states are UDA with respect to their two-local marginals, we can exploit the robustness to systematically lower-bound the global fidelity using exclusively two-body measurements.

Recall from Eq.~\eqref{eq:infidelity}, the marginal discrepancies between a state and the target Dicke state strictly bound the global infidelity.
According to previous calculations, we established that the spectral gap is $\Delta_{n,k} = 1$ and the maximum local spectral spread is $\omega_{n,k} = \frac{1}{4}\left(2 + \frac{2|n-2k|}{n-1}\right)$.
Substituting this infidelity bound and coefficients into the global witness gives:
\begin{gather}
    \Tr(W_{n,k} \sigma) \le \beta(D(n,k)) - 1 + \omega_{n,k} \sum_{i=1}^m\norm{\sigma_{S_i}-\rho_{S_i}}_1.
\end{gather}
\noindent Consequently, we can strictly certify the GME nature of any state $\sigma$ whenever the measured two-local discrepancy satisfies the fully observable condition:
\begin{gather}
    \sum_{i=1}^m\norm{\sigma_{S_i}-\rho_{S_i}}_1<\frac{1 - \beta(D(n,k))}{\omega_{n,k}}.
\end{gather}
This protocol provides a highly practical application of UDA robustness bounds. 
By reconstructing only two-qubit reduced density matrices, we can rigorously confirm the presence of global GME in Dicke state preparations, which offers a scalable method for this problem compared to the previous studies~\cite{PhysRevLett.134.050201,wu2025experimental}.

\end{document}